\newif \ifcomments \commentstrue
\newif \ifworkshop \workshoptrue
\newcommand{\silence}[1]{}
\newcommand{\silence}[1]{#1}
    \newcommand{\ari}[1]{{\small\textsf{\color{red}{[Ari: {#1}]}}}}
    \newcommand{\andres}[1]{{\small\textsf{\color{blue}{[Andres: {#1}]}}}}
    \newcommand{\mahimna}[1]{{\small\textsf{\color{violet}{[Mahimna: {#1}]}}}}
    \newcommand{\james}[1]{{\small\textsf{\color{olive}{[James: {#1}]}}}}
    \newcommand{\kushal}[1]{{\small\textsf{\color{orange}{[Kushal: {#1}]}}}}
    \newcommand{\ari}[1]{}
    \newcommand{\andres}[1]{}
    \newcommand{\mahimna}[1]{}
    \newcommand{\james}[1]{}
    \newcommand{\kushal}[1]{}
\newcommand{\util}{{\textsf{util}}}
\newcommand{\true}{{\texttt{true}}}
\newcommand{\false}{{\texttt{false}}}
\newcommand{\tokens}
{{\textsf{tokens}}}
\newcommand{\bribe}
{{\textsf{bribe}}}
\newcommand{\vote}
{{\textsf{vote}}}
\newcommand{\qv}
{{\textsf{quad}}}
\newcommand{\players}{{\cal P}}
\newcommand{\player}{\ensuremath{P}}
\newcommand{\apath}{\mathcal{A}}
\newcommand{\adv}
{{\cal A}}
\newcommand{\indexshort}{\textsf{VBE}\xspace}
\newcommand{\indexlong}{Voting-Bloc Entropy\xspace}
\newcommand{\indexlongbold}{\textbf{V}oting-\textbf{B}loc \textbf{E}ntropy\xspace}
\newcommand{\clustershort}{$\epsilon$-TOC\xspace}
\newcommand{\clusterlong}{$\epsilon$-threshold ordinal clustering\xspace}
\newcommand{\sgn}[1]{\text{sgn}(#1)}
\theoremstyle{definition}
\newtheorem{exmp}{Example}
\newtheorem{definition}{Definition}
\newtheorem{thm}{Theorem}[section]
\newtheorem{lemma}{Lemma}
\newtheorem{corollary}{Corollary}
\newcommand{\sk}{\ensuremath{\mathsf{\sf sk}}}
\newcommand{\pk}{\textsf{pk}}
\newcommand{\gamesfontsize}{\small}
\newcommand{\fpage}[2]{\framebox{\begin{minipage}{#1\textwidth}\gamesfontsize #2 \end{minipage}}}
\title{DAO Decentralization: \\Voting-Bloc Entropy, Bribery, and Dark DAOs }
\author{James Austgen\footnotemark[1] \and Andr\'{e}s F\'{a}brega\footnotemark[1] \and Sarah Allen \and Kushal Babel \and Mahimna Kelkar \and Ari Juels}
\date{Cornell Tech, IC3 \\ 
\smallskip
{\small 1 November 2023 (v1.0)}}%very dirty
\begin{document}
%---------------------------------------------------------------------

\maketitle

\def\thefootnote{*}\footnotetext{These authors contributed equally to this work.}\def\thefootnote{\arabic{footnote}}

\begin{abstract}
    Decentralized Autonomous Organizations (DAOs) use smart contracts to foster communities working toward common goals. Existing definitions of decentralization, however—the `D’ in DAO—fall short of capturing key properties characteristic of diverse and equitable participation. 

We propose a new metric called \indexlongbold (\indexshort, pronounced ``vibe'') that formalizes a broad notion of decentralization in voting on DAO proposals. \indexshort measures the similarity of participants’ utility functions across a set of proposals. We use \indexshort to prove a number of results about the decentralizing effects of vote delegation, proposal bundling, bribery, and quadratic voting. Our results lead to practical suggestions for enhancing DAO decentralization.  

One of our results highlights the risk of systemic bribery with increasing DAO decentralization. To show that this threat is realistic, we present the first practical realization of a \textit{Dark DAO}, a proposed mechanism for privacy-preserving corruption of identity systems, including those used in DAO voting. Our Dark-DAO prototype uses trusted execution environments (TEEs) in the Oasis Sapphire blockchain for attacks on Ethereum DAOs. It demonstrates that Dark DAOs constitute a realistic future concern for DAO governance.

\end{abstract}

\section{Introduction}
\label{sec:intro}

A Decentralized Autonomous Organization (DAO) is an entity or community that operates based on rules encoded and executed on a public blockchain~\cite{buterin2013bootstrapping,hassan2021decentralized}. As the name suggests, a DAO's governance is decentralized, meaning that it does not rely on a single individual or highly concentrated authority---in contrast to, e.g., a corporation, where a CEO and board of directors make major decisions. Instead, decisions in a DAO are typically made through community votes on proposals. A DAO's treasury, consisting of crypto assets, also generally resides in its smart contract. The contract enforces adherence to community decisions regarding use of its treasury and also offers operational transparency. 

DAOs can serve many goals, including investment (e.g., The DAO~\cite{Jentzsch2016decentralized,Morris:2023}, Mantle Network~\cite{Mantle:2023}), grant distribution (e.g., MolochDAO~\cite{MolochDAO:2023}, ResearchDAO~\cite{researchdao:2023}), gaming-guild organization (e.g., AvocadoDAO~\cite{Avocado:2023}, GuildFi~\cite{GuildFi:2023}) and---as is the case for DAOs with the largest treasuries---ecosystem governance (e.g., Uniswap~\cite{Uniswap:2023}, Lido~\cite{Lido:2023}, Arbitrum~\cite{ArbitrumDAO:2023}, Optimism Collective~\cite{Optimism:2023}, MakerDAO~\cite{makerdao-whitepaper}). 

DAOs of all types are rising rapidly in popularity. At the time of writing (Nov.~2023), the aggregate value across all DAO treasuries exceeds \$17 billion~\cite{DeepDAO:2023}, almost double the amount just a year ago.

DAOs today vary considerably in their \emph{true} degree of  decentralization. Most have their own associated crypto assets (or ``tokens'') and weigh voting power by token holdings. It is common for vote outcomes to be determined by a small set of ``whales''---a colloquial term used to denote the largest token holders.  Such centralization, as well as low voting participation, are a pervasive source of concern in DAO communities. Vulnerability to centralization has even led to plundering of DAO treasuries~\cite{Malwa:2023}. 

A number of works have sought to recommend ways to improve DAO decentralization.
But first it's necessary to be able to \textit{measure} it in a  way that is reflective of a broad set of real-world concerns. That requirement is the starting point for our work in this paper.

\subsection{Measuring DAO Decentralization}
\label{subsec:measuring_DAO_decentralization}

A common basis for evaluating decentralization in DAOs and other blockchain settings is \textit{token ownership}, specifically the distribution of assets and consequently voting rights among participants~\cite{sharma2023unpacking,fritsch2022analyzing}. Informally, concentration of a large fraction of tokens in a small number of hands---and thus the ability of a small group to determine voting outcomes---is indicative of strong centralization. More widespread distribution, conversely, suggests decentralization.

\textit{Entropy} is one popular metric for measuring decentralization in the distribution of token ownership in a DAO.\footnote{Entropy is typically defined over a random variable. A token ownership distribution may be viewed as a random variable for an experiment where a token is selected uniformly at random and its owner is output.} For a set of addresses $A = \{a_1, \ldots, a_n\}$, where address $a_i$ holds $t_i$ tokens and $T = \sum_{i=1}^ n t_i$:

\[{\sf entropy}(A) \triangleq  -\sum_{i = 1}^n \frac{t_i}{T} \log\Big(\frac{t_i}{T} \Big). \]

\noindent Low entropy corresponds to a high degree of asset concentration and thus strong centralization. High entropy implies the opposite. Other popular decentralization metrics, e.g., the Gini coefficient~\cite{Gini:2023} and the Nakamoto coefficient~\cite{nakamotocoefficient,srinivasan2017quantifying}, are related to various notions of entropy.

Token ownership distribution alone, however, has serious shortcomings as a decentralization metric. To begin with, it is visible on chain only in terms of per-address holdings, not control by real-world individuals. Thus, for instance, an individual who holds 51\% of tokens in a DAO, but spreads them among a large number of addresses could create an appearance of decentralization while having majority control. 

Even if tokens are held by distinct entities, a notion put forward in, e.g.,~\cite{karakostas2022sok}, those entities may have aligned interests and act in concert---a form of centralization. The following examples illustrate cases in which a DAO may be strongly centralized, \textit{even if token ownership appears to imply strong decentralization}.

%Similarly, then, they would present a misleading appearance of decentralization.  Backroom communication among whales is reportedly one common manifestation of this phenomenon in DAO communities~\cite{sharma2023unpacking}. 
%Proponents of decentralization have noted that its hallmarks include contrasting preferences or ``mechanisms that intentionally try to identify polarization, and favor things that bridge across divides rather than perpetuate them''~\cite{Buterin:2023}.

%the CEO of Offchain Labs (which is associated with the Arbitrum DAO) noted that misaligned voting preferences are a hallmark of decentralization.\footnote{Examples: The CEO of Offchain Labs, associated with Arbitrum DAO, tweeted: ``If you think the Arbitrum DAO seems more chaotic than others, you’re probably right. But that’s a good thing. Arbitrum is the most decentralized L2. And passionate engagement is a relic of decentralization.''}

\begin{exmp}[Low participation / apathy]
\label{exmp:participation}
Lack of participation in DAO governance votes is widespread in practice~\cite{daoapathy} and induces a form of centralization. Consider, for example, a DAO that requires a quorum of 50\% participation for a vote to be ratified. Suppose 50\% of voters do not cast votes and voters other than whales vote 2:1 in favor of the proposal. Whales with just 12.6\% of all tokens can swing the vote and cause the proposal to be rejected.
\end{exmp}

\begin{exmp}[Herding]
\label{exmp:herding}
Interviews with DAO participants have revealed a tendency to vote in alignment with influential community members to preserve reputation~\cite{sharma2023unpacking}, as individual votes are today usually publicly observable. This effect---often called \textit{herding}~\cite{alon2012sequential,banerjee1992simple}---has a centralizing effect. It aligns votes around the choices of a small set of participants. (This problem is similar to the notion of ``herding'' in classical voting theory~\cite{gonzalez2006herding, alon2012sequential}.)
\end{exmp}

\begin{exmp}[Bribery / vote-buying]
\label{exmp:bribery}
Bribery---specifically, \textit{vote-buying}---has been a longstanding concern of DAO organizers~\cite{bribery}. It has a centralizing effect, as it aligns voters around a choice dictated by the briber.
\end{exmp}

Recognizing that token-ownership alone doesn't give a full picture of decentralization, 
researchers have explored broader notions. Most notably, Sharma et al.~\cite{sharma2023unpacking} have considered entropy measures limited to those voters who participate in votes and have also explored graph-based representations of voting patterns (degree centralization, degree assortativity, etc.).  
Token-ownership distribution among voting participants fails to capture important issues, such as those in Examples~\ref{exmp:herding} and~\ref{exmp:bribery}, however, and it's unclear how to interpret graph-based metrics.

With no consensus in the community about how to measure DAO decentralization today, there is a lack of principled guidance on ways to improve DAO decentralization and to combat threats to decentralization, such as vote-buying.

\subsection{\indexlong (\indexshort)}

We introduce a decentralization metric tailored to DAO governance called \textit{\indexlongbold}~(\indexshort, pronounced ``vibe''). \indexshort is based on a foundational principle, that voters with closely aligned interests across elections are a centralizing force, in contrast to the notion of ``credible neutrality'' in voting, which is characterized by ``positive ratings from people across a diverse range of perspectives''~\cite{Buterin:2023}. Expressed differently, the key idea in \indexshort is to \textit{define centralization as the existence of large voting blocs}.

Formally, we express this principle in terms of the \textit{utility functions} of DAO participants, i.e., quantification of the gain or loss associated with voting outcomes. For a given set of elections, a voting bloc is a cluster of voters whose utility functions are similar over outcomes. Utility functions are \textit{latent variables}---conceptually important, but not always directly measurable---and consequently \indexshort is as well~\cite{Latent:2023}.

\indexshort then, measures entropy over voting blocs based on utility functions---rather than over individual token holdings. The result is a broad concept that captures the centralization embodied in all of our examples above. \indexshort is in fact a framework: It allows different notions of clustering and entropy to be plugged in. 

We stress that \indexshort is a theoretical metric: It cannot be measured \textit{directly}, since users do not typically express (or often even know) their utility functions explicitly. But \indexshort provides an important basis for \textit{reasoning about the directional influence of policy choices on decentralization}, and does lend itself to \textit{indirect} measurement.

%For instance, bribery shifts utility functions, aligning those of bribed voters. In fact, as a crisp metric for decentralization, \indexshort yields fundamental insights into the relationship between decentralization and bribery. 

\paragraph{\indexshort Implications:}
We use \indexshort to prove a number of theoretical results showing how various practices tend to increase or decrease DAO decentralization. (We prove these results relative to particular notions of clustering and entropy.) 

Our main results are as follows:

\begin{itemize}
    \item \textbf{Apathy / inactivity whale:} A large population of apathetic, i.e., non-voting DAO participants, is a centralizing force (Theorem~\ref{thm:apathy}).
    \item \textbf{Delegation:} Given an inactivity whale of large size relative to delegatees, delegation tends (perhaps counterintuitively) to increase decentralization (Theorem~\ref{thm:delegation}).
     \item \textbf{Bribery:} Bribery and decentralization are closely related in the context of DAO governance. The act of bribery decreases decentralization (Theorem~\ref{thm:bribery1}). Additionally, as the decentralization of a DAO rises, so does the risk of systemic bribery---and vice versa (Theorems~\ref{thm:bribery2} and~\ref{thm:bribery3}). 
    
\end{itemize}

%One of our main results is a pair of theorems showing, intuitively, that high $\indexshort$, i.e., broader decentralization, implies a need for more widespread bribery to influence the outcome of an election---and vice versa. \ari{Placeholder description of theorem statements.} In other words, we show formally that decentralization and resilience to bribery are related in the context of DAO governance.

We additionally prove results relating to herding and privacy (Theorem~\ref{thm:herding}), \silence{quadratic voting (Theorem~\ref{thm:quadratic-voting})} and owning multiple accounts (Theorem~\ref{thm:owning-multiple-accounts}).

Looking ahead, our theorem statements and proofs are simple, and some just show how \indexshort confirms a known pattern (for example, that quadratic voting is susceptible to sybil attacks). However, our goal is to show the flexibility of \indexshort, and how, unlike prior metrics, it is able to capture the subtle impacts the certain mechanisms have on decentralization. Thus, the main contribution of \indexshort is to put forth a new way to \emph{think} about DAO decentralization. That is, \indexshort introduces a paradigm shift, namely, framing elections in terms of abstract voting entities---instead of individual accounts---as defined by their aligned incentives and interests.

\subsection{Dark DAOs}

Our results on bribery and decentralization (Theorems~\ref{thm:bribery1},~\ref{thm:bribery2}, and~\ref{thm:bribery3}) show that as decentralization increases, bribery can only be successful if it operates on a large scale. 
This observation raises a critical followup question: Is large-scale bribery a realistic future threat to DAOs?

One mechanism postulated for systemic DAO-voting bribery is a \textit{Dark DAO}. A Dark DAO was originally defined as ``a decentralized cartel that buys on-chain votes opaquely.''~\cite{darkdaohack}. Here, ``opaquely'' means that participation in the Dark DAO is confidential. A Dark DAO must also ensure correct execution of a bribery scheme, i.e., bribees are paid if and only if they vote as directed. We define a Dark DAO more broadly as a DAO designed to subvert credentials in an identity system. The goal may be to attack a voting scheme, but could be to attack another system, e.g., we also consider attacks against \textit{privacy pools}~\cite{buterin2023blockchain}, which are effectively DAOs to enhance cryptocurrency privacy.
%A Dark DAO orchestrates a vote-buying campaign using the trust characteristics of smart contracts---committing participants to voting a particular way in return for guaranteed payment.\footnote{A Dark DAO may also be \textit{self-funding} in that those joining the Dark DAO may position themselves to profit from a given outcome, e.g., collectively manipulating the price of a token and short-selling it.} Because a Dark DAO in the clear would reveal participants, the only truly viable Dark DAOs are \textit{opaque}: They keep participation confidential in order to render monitoring and countermeasures difficult.

To date, the feasibility of a fully functional Dark DAO has yet to be demonstrated. In this work, we present a Dark DAO prototype to facilitate vote-buying in DAOs on Ethereum---the most popular blockchain for DAOs today. In its back end, it leverages the confidentiality enforced by trusted hardware (specifically Intel SGX) in the Oasis Sapphire blockchain\footnote{\url{https://github.com/oasisprotocol/sapphire-paratime}}. We also present a ``Dark DAO Lite'' prototype variant that offers greater ease of usability than our basic prototype, but at the cost of weaker confidentiality.

We underscore our belief that Dark DAOs do not pose a current threat, given the limited decentralization of DAOs today, and briefly review ethical considerations in this paper. Our work demonstrates that Dark DAOs are an eminently realistic future threat, however. We discuss possible Dark-DAO mitigations as a first step toward community development of countermeasures. 

Our techniques for Dark-DAO construction are of independent interest, as they point the way toward general techniques for the construction of new financial instruments.

\subsection{Contributions}
In summary, our contributions in this work are:

\begin{itemize}
    \item \textbf{\indexlong~(\indexshort):} We introduce \indexshort, a new metric for DAO decentralization that generalizes prior metrics and addresses a number of their shortcomings (\Cref{{sec:VBE_def}}).
    \item \textbf{Theoretical results:} Using \indexshort, we prove a range of results about how various DAO practices and design choices impact decentralization (\Cref{sec:in-practice}). 
     \item \textbf{Dark DAOs:} Our theoretical results highlight risks of systemic bribery in attacking DAOs---via Dark DAOs---against highly decentralized target DAOs. To show that these risks are a realistic long-term concern, we implement two end-to-end Dark DAO prototypes with different confidentiality / ease-of-use trade-offs (\Cref{sec:dark-daos,sec:Dark_DAO_implementation,{sec:token-based-dd-impl}}). Our techniques are of general interest as they include innovations in the construction of decentralized-finance instruments.
     \item \textbf{Practical guidance:} Based on our theoretical and experimental results, we present concrete points of practical guidance for DAO design and deployment around issues including delegation, voting privacy, voting-slate composition, decentralized identity, and more (\Cref{sec:guidance}). We summarize this guidance  in~\Cref{tab:recommendations}.
\end{itemize}

% We present background in~\Cref{sec:background}. 

We \silence{review related work in~\Cref{sec:related} and }conclude with some open research questions in~\Cref{sec:open_questions}.

\section{\indexlong (\indexshort)}
\label{sec:VBE_def}

In this section, we define \textit{\indexlong}~(\indexshort), which sidesteps the aforementioned limitations of prior metrics. It does so by normalizing token holdings based on voters' utility functions.

\paragraph{Intuition.} The key idea behind our definition is to reason about centralization with respect to the tokens held by \emph{groups of DAO members with aligned interests}, instead of with respect to individual members. That is, instead of measuring the distribution of tokens across individual addresses, we focus instead on how tokens are distributed across \emph{blocs} of voters with the same incentives, which are functionally acting as a single entity. Looking ahead, we formalize the notion of ``aligned interests'' by considering the DAO members' \emph{utility functions} across elections.

Aggregating voters based on utility functions allows us to capture the rich interactions and relationships between players in the system, all of which play a role in understanding the true degree of decentralization of a DAO, as discussed in Section~\ref{subsec:measuring_DAO_decentralization}. Indeed, the limitations highlighted there are captured by considering voters with similar utility functions as a single entity; we discuss this extensively in Section~\ref{sec:in-practice}.

\paragraph{DAO abstraction.} We now introduce the notation and formalism that our definition and theorems rely on.

Let $\players = \{\player_1, \ldots, \player_n\}$ be the set of token holders in a system, and $\tokens \colon \players \to \mathbb{R}^+$ a mapping specifying the number of tokens held by each $\player \in \players$. (We will often overload this notation, and input a \emph{set} of accounts to $\tokens$ instead, by which we mean the total tokens held across all accounts in the set). These token holders participate in a set of (binary) elections $E = \{e_1, e_2, \ldots, e_m\}$, where we denote by $\vote_\player \colon E \to \{\true, \false, \bot\}$ player $\player$'s vote in election $e$; $\bot$ indicates that $\player$ abstained from voting in $e$. We define $\util_\player \colon E \times \{\true, \false\} \to \mathbb{R}$ to be the monetary utility of an outcome of $\true$ or $\false$ in $e$ to player $\player$, where we make the simplifying assumption that $\util_\player(e, \true) = -\util_\player(e, \false)$. Player $\player$'s total utility across all elections $E$ is represented by a vector $U_{E, \player} \coloneqq (\util_\player(e_i, \true))_{i \in [m]} \in \mathbb{R}^m$; we denote by $U_{E, \players}$ all players' utilities, i.e., $U_{E, \players} \coloneqq (U_{E, \player})_{\player \in \players}$.

Token holders often have low stakes in the elections, resulting in lack of interest or abstaining from voting altogether. More formally, we say that player $P$ is \emph{$\epsilon$-apathetic} in election $e$ if and only if $|\util_\player(e, \true)| \leq \epsilon$. We denote this set of apathetic voters by $\apath$. If the system supports vote delegation (for example, as a means to combat apathetic voters), players may delegate their tokens to others, who cast a single vote on behalf of all the tokens they now hold.

Lastly, we define $\bribe_\player \colon E \times \{\true, \false\} \to \mathbb{R}$ to be such that it is possible for player $P$ to achieve an outcome of $\true$ (resp., $\false$) in a particular election $e$ via bribery for any expenditure greater than $\bribe_\player(e, \true)$ (resp., $\bribe_\player(e, \false)$). Note that we make the simplifying assumption that bribery costs are independent across elections. We assume that bribing a given $\player$ to flip its vote from $\true$ to $\false$ (respectively, $\false$ to $\true$) costs $\max(2\cdot\util_{\player}(e,\true) + \epsilon, 0)$ (respectively, $\max(2 \cdot \util_{\player}(e,\false) + \epsilon,0)$), for some constant $\epsilon$. Successful bribery to achieve an outcome of $\true$ in $e$ means flipping enough votes to cross a certain threshold $q$ of votes for $\true$ in $e$ (and vice versa for $\false$), i.e., ensuring

\[\sum_{\player \in \players} \Big( \tokens(P) \,\big|\, \vote_P(e) = \true\Big)  > q \cdot \sum\limits_{\player \in \players} \tokens(P).\]

For example, a typical value for $q$ may be $q = 0.5$, which corresponds to an absolute majority. We stress that the equation above represents the threshold to \emph{ensure} the desired outcome in an election, and not just to win it. Albeit related, these notions are not equivalent; the former implies the latter, but not vice-versa. In particular, winning an election is a function of the number of votes cast for the undesired option, whereas ensuring an outcome is agnostic to this.

\subsection{Framework for \indexshort}
We present \indexshort in this section. To do so, we introduce an abstract framework that is parameterized by: (1) a clustering metric, and (2) an entropy notion, which are the two key ingredients that underpin our definition.

\paragraph{Clustering.} We let $C \colon U_{E, \players} \times U_{E, \players} \to \{0, 1\}$ be a clustering function that outputs $1$ if the utilities of two players are ``aligned'' across all elections $E$, and 0 otherwise. Our definition of \indexshort is agnostic to a specific clustering algorithm, and instead only assumes that $C$ specifies an equivalence relation $\sim_{C}$ on the set $\players$, such that $P_i \sim_{C} P_j$ if and only if $C(U_{E, \player_i}, U_{E, \player_j}) = 1$. That is, $C$ partitions $\players$ into classes of players with aligned utility functions across elections. Following standard notation, we denote the set of all classes by $\players/\sim_{C}$, and the class $\player$ belongs to by $[\player]$.

\paragraph{Entropy.} We denote by $F$ a function from the distribution of tokens across \emph{sets} of accounts to real numbers. The purpose of $F$ is to measure, in some sense, how ``evenly distributed'' tokens are across voting blocs. Thus, in practice, $F$ will generally consist of some notion of entropy,\footnote{Entropy is formally defined over a random variable, but we are overloading notation to think of the mapping between sets of accounts and their respective cumulative token balances as the probability mass function of a random variable.} such as one of the many variants of Rényi entropy, e.g., min-entropy, Shannon entropy, or max entropy. (Note that if the blocs are comprised of single entities, this is equivalent to entropy over individual accounts, as defined in prior work~\cite{sharma2023unpacking}.) We stress, however, that in principle $F$ can be any function, and our definition makes no assumptions about its structure. As one example, given a clustering with single entities, we can use any distance metric $d(\cdot, \cdot)$ on $U_{E, \players}$ and define $F$ as the negative of the sum of squares distance between all pairs of player utilities, i.e., $F = -\sum_{\player_1, \player_2 \in \players} d(U_{E,\player_1}, U_{E,\player_2})^2$

\medskip
We are now ready to define \indexshort. Intuitively, our definition says that a DAO is more decentralized if the distribution of tokens across the abstract voting entities specified by $\sim_{C}$ has high entropy according to $F$. More concretely:
\begin{definition}[\indexlong]
\label{def:govdec3}
For a set of elections $E$, a set of players $\players$ with corresponding utilities $U_{E, \players}$, a mapping specifying the distribution of token ownership $\tokens$, a clustering metric $C$, and an entropy function $F$, we define \textit{\indexlong~(\indexshort)} to be: 
\begin{equation*}
\indexshort_{C, F}(E,\players, U_{E, \players}, \tokens) \coloneqq F(\players/\sim_{C}, \tokens).\end{equation*}

% \begin{equation}
% \indexshort_C(E,\players, U_{E, \players}) \coloneqq -\log_2 \Big(\frac{\max\limits_{\players' \in \players/\sim_{C}}\tokens(\players')}{\sum\limits_{\player \in \players} \tokens(P)}\Big)
% \end{equation}
\end{definition}

\subsection{Instantiation of VBE}
There are various concrete algorithms with which one can instantiate our \indexshort framework. We propose one such example in this section, which we use throughout the rest of the paper. The advantages and disadvantages of this variant, and \indexshort in general, are discussed in Section~\ref{sec:vbe-practical-considerations}.

\paragraph{Clustering.} We define \emph{\clusterlong~(\clustershort)} as follows:

\begin{equation*}
    C_\epsilon(U_{E, \player_i}, U_{E, \player_j}) \coloneqq \begin{cases} 
      1 & \text{if}\ \forall k \in [m],\
      \Big(\sgn{U_{E, \player_i}[k]} = \sgn{U_{E, \player_j}[k]}\Big) \vee \Big(|U_{E, \player_i}[k]|, |U_{E, \player_j}[k]| \leq \epsilon\Big) \\
      0 & \text{otherwise}
   \end{cases}.\hspace{-2mm}
\end{equation*}

More simply, \clustershort clusters together token holders who have the same preferred outcome across all elections, regardless of how strong this preference is. That is, clusters correspond to token holders whose utility functions are ordinally equivalent. Even though a more granular metric could create clusters based on cardinal utility, we regard ordinal equivalence to be indicative of aligned preferences. Further, as we discuss in Section~\ref{sec:vbe-practical-considerations}, \clustershort has the benefit of being computable based on historical voting data, whereas more complex clustering metrics may be more difficult (or impossible) to estimate.

In addition to these blocs, \clustershort also creates an additional cluster corresponding to all apathetic voters $\apath$, i.e., those whose utilities are close to 0. These voters have aligned preferences, namely, little to no interest in election outcomes.

\paragraph{Entropy.} In this work, we use \emph{min-entropy} as our  entropy notion. That is, for a set of sets of addresses $A$ with a total of $T$ tokens held across all individual accounts,

% for a set of addresses $A = \{a_1, \ldots, a_n\}$, where address $a_i$ holds $t_i$ tokens and $T = \sum_{i=1}^ n t_i$:

\[F_{\min}(A, \tokens) \coloneqq \log_2\Bigg(\frac{\max\limits_{A' \in A} \tokens(A'
)}{T} \Bigg). \]

Our entropy notion thus measures the amount of ``information'' in the largest voting bloc by token holdings. As we discuss later on, more granular entropy notions result in more detailed analysis (at the cost of being more difficult to estimate in practice), as these may capture the information in other voting blocs beyond the largest.

\medskip

Putting the two together, we thus get a concrete instantiation of the framework:
\begin{equation*}
\indexshort_{C_\epsilon, \min}(E,\players, U_{E, \players}, \tokens) \coloneqq F_{\min}(\players/\sim_{C_\epsilon}, \tokens) = -\log_2 \Big(\frac{\max\limits_{\players' \in \players/\sim_{C_\epsilon}}\tokens(\players')}{\sum\limits_{\player \in \players} \tokens(P)}\Big).
\end{equation*}

Given that utility functions cannot be measured directly, we cannot compute $\indexshort_{C_\epsilon, \min}$ (or any other variant of \indexshort) directly. However, as we show in the subsequent section, \indexshort can be used as a conceptual tool to reason about the high-level impact that changes in the system (such as the implementation of policy choices) have on the decentralization of a DAO. Namely, one can reason about the \emph{directional} influence of said changes on the utility functions of the players, and thus derive conclusions about whether \indexshort broadly increased or decreased.

Further, utility functions (and, thus, \indexshort) can be estimated via \emph{observable variables}, which can be measured directly. In this case, one can explicitly compute \indexshort, and derive concrete metrics. The degree to which observable variables accurately estimate utility functions (for the purposes of \indexshort) will depend, to a great extent, on the specific clustering algorithm that is used. We discuss this in more detail in Section~\ref{sec:vbe-practical-considerations}.
\section{Implications of \indexshort: Theoretical Results}\label{sec:in-practice}
We now present a variety of theoretical results implied by \indexshort. These results show how, unlike prior notions, \indexshort reflects many of the subtle issues that impact decentralization in a DAO---such as those described in Section~\ref{subsec:measuring_DAO_decentralization}---and thus can serve as a springboard for more accurate understanding of the goals of a DAO.

We first note that, for most ``reasonable'' instantiations of $F$ (such as any Shannon or min-entropy), computing an analogous metric over account balances alone, instead of over voting blocs, gives an upper bound on \indexshort. (Note that the former includes the entropy-based metrics of prior work such as~\cite{sharma2023unpacking}.) Concretely, this fact holds for any $F$ that increases whenever the tokens held by any players's voting bloc increase. More formally:

\begin{lemma}
Let $C_{\texttt{solo}}$ be the clustering metric that partitions $\players$ into singleton sets, i.e., \[C_{\texttt{solo}}(U_{E, \player_i}, U_{E, \player_j}) \coloneqq 1 \iff i = j,\] and let $F$ be any function that is monotonically increasing with respect to $\tokens([\player])$ for every $\player \in \players$. Then, for any clustering metric $C$, it follows that
\[\indexshort_{C_{\texttt{solo}}, F}(E,\players, U_{E, \players}, \tokens) \geq \indexshort_{C, F}(E,\players, U_{E, \players}, \tokens).\]
\end{lemma}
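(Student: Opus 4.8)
The plan is to exploit the fact that $C_{\texttt{solo}}$ induces the finest possible partition of $\players$—the discrete partition into singletons—so that the partition $\players/\sim_{C}$ produced by any other clustering metric $C$ is a coarsening of it: every bloc of $C$ is a disjoint union of singleton blocs, and the total supply $T = \sum_{\player \in \players}\tokens(\player)$ is identical under both clusterings. Thus $\indexshort_{C_{\texttt{solo}}, F}$ and $\indexshort_{C, F}$ evaluate $F$ on the \emph{same} tokens grouped more finely versus more coarsely. First I would record the structural observation that, for every player $\player$, the bloc mass satisfies $\tokens([\player]_{C_{\texttt{solo}}}) = \tokens(\player) \le \tokens([\player]_{C})$, since $[\player]_{C} \supseteq \{\player\}$.

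Next I would realize $C$ as the endpoint of a finite sequence of pairwise bloc-merges starting from $C_{\texttt{solo}}$: repeatedly pick two blocs lying in the same $C$-class and merge them, until each $C$-class is a single bloc. One such merge replaces two blocs $B_1, B_2$ (of masses $\tokens(B_1), \tokens(B_2)$) by a single bloc of mass $\tokens(B_1) + \tokens(B_2)$, leaving all other blocs untouched; in particular every affected player's bloc mass weakly increases while no other bloc mass changes. The crux is then to invoke the monotonicity of $F$ in the direction relevant to a decentralization score: since $\indexshort$ is large exactly when mass is spread across many small blocs—as its min-entropy instantiation $-\log_2\!\big(\max_{\players'}\tokens(\players')/T\big)$ makes explicit via its leading minus sign—enlarging a bloc cannot \emph{increase} $F$. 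Hence $F$ is non-increasing along the entire merge sequence, and telescoping from the finest clustering $C_{\texttt{solo}}$ down to $C$ yields $\indexshort_{C_{\texttt{solo}}, F} \ge \indexshort_{C, F}$.

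For the concrete min-entropy instantiation used throughout the paper the argument collapses to a single line and needs no induction: the heaviest $C$-bloc contains the heaviest singleton, so $\max_{\players' \in \players/\sim_{C}}\tokens(\players') \ge \max_{\player \in \players}\tokens(\player) = \max_{\players' \in \players/\sim_{C_{\texttt{solo}}}}\tokens(\players')$, and applying the decreasing map $x \mapsto -\log_2(x/T)$ gives the claim directly.

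I expect the main obstacle to be bookkeeping the direction of monotonicity against the inequality: a coarsening step simultaneously \emph{raises} bloc masses and \emph{lowers} the number of blocs, so the hypothesis on $F$ must be applied so that enlarging a bloc (by merging) does not raise $F$—the sense in which min-entropy and Shannon entropy are monotone as decentralization measures. For entropies not determined by the maximal bloc alone (e.g., Shannon), the per-merge step is not purely cosmetic and would be discharged by the elementary inequality $(p_1 + p_2)\log(p_1+p_2) \ge p_1 \log p_1 + p_2 \log p_2$ for $p_1, p_2 > 0$, which certifies that each merge weakly decreases entropy and so preserves the chain of inequalities down to $C$.
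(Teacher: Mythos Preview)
Your core argument is the same as the paper's: both rest on the single observation that every player's bloc under $C$ contains her singleton bloc under $C_{\texttt{solo}}$, so $\tokens([\player]_{C_{\texttt{solo}}}) = \tokens(\player) \le \tokens([\player]_{C})$, and then invoke the monotonicity hypothesis on $F$. The paper dispatches this in two sentences; your merge-sequence scaffolding is sound but not needed, since the per-player comparison already suffices to apply monotonicity once.

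You are right to flag the direction-of-monotonicity issue, and in fact the paper stumbles on it too. As literally stated, the hypothesis has $F$ \emph{increasing} in each $\tokens([\player])$; combined with the bloc-mass inequality above, that yields $\indexshort_{C,F} \ge \indexshort_{C_{\texttt{solo}},F}$, the reverse of the claim. The paper's own proof writes ``$F$ will increase correspondingly'' when passing from $C_{\texttt{solo}}$ to $C$, which is consistent with the stated hypothesis but contradicts the stated conclusion. The intended hypothesis is clearly that $F$ is non-increasing under bloc enlargement (as both min-entropy and Shannon entropy are), and under that corrected reading both your argument and the paper's go through. Your explicit one-line check for min-entropy and the merge inequality $(p_1+p_2)\log(p_1+p_2) \ge p_1\log p_1 + p_2\log p_2$ for Shannon are genuine improvements over the paper, which simply asserts that the lemma ``holds for most $F$ of practical interest.''
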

\begin{proof}
    This simply follows from the fact that, for all players $\player$, the number of tokens held by their bloc according to $C$ is necessarily greater than or equal to the number of tokens held by their bloc according to $C_{\texttt{solo}}$: in the former, either $\player$ got grouped in a bloc with more players (and thus holds more total tokens), or she stayed alone in her bloc. As such, by definition, $F$ will increase correspondingly.
\end{proof}

We stress that this lemma holds for most $F$ of practical interest, such as Shannon entropy and min-entropy. As such, \indexshort is, at worst, equivalent to the entropy-based notions introduced by prior work, which focus on account balances. In the subsections that follow, we will show how, in fact, \indexshort reveals more information, as it is able to capture how decentralization is affected by various mechanisms, irrespective of (lack of) fluctuations in account balances. Towards this, we will first present the general recipe of our theorems, before moving on to concrete results.

We note that, for clarity of presentation, our theorems focus exclusively on one instantiation of $\indexshort$, namely, using \clustershort and min-entropy as the clustering metric and entropy function, respectively. However, even though the theorem statements and proof details would differ for other variants of \indexshort, the conceptual takeaways are general (and, in fact, can be made more specific with more granular instantiations of \indexshort).

\subsection{\indexshort Master Theorem}
The theorems in the subsections that follow all aim to show the impact of policy choices or system changes on DAO decentralization, in terms of \indexshort. They all have a similar structure: (1) we consider two systems such that the only difference between them is some ``transformation'' of interest, e.g., a portion of the voters become apathetic, elections are instead private, etc; (2) we reason about the impact of this transformation on the largest voting bloc of both systems; (3) based on this, we compute and compare the \indexshort of both systems.

We now define a ``master'' theorem for $\indexshort_{C_{\epsilon}, \min}$ which captures this template, and thus serves as a proof framework that can be instantiated with concrete transformations of interest. Indeed, our theoretical results that follow are examples of this, as they all invoke this master theorem. (We note that the master theorem can be easily tweaked to accommodate different instantiations of \indexshort, but here we focus exclusively on \clustershort and min-entropy for clarity of presentation.)

In all theorems that follow, we denote by $E$ a set of binary elections, $\players$ a set of players that participate in such elections, $\tokens$ a mapping specifying the number of tokens owned by each player, and $U_{E, \players}$ the players's utilities across the elections. The master theorem then proceeds as follows:

\begin{thm}[\indexlong Master Theorem]
\label{thm:vbe-master}
     We define $T$ to be a function that represents a \emph{system transformation}, i.e., a change in the players, elections, utilities of the players, and/or the distribution of tokens, which we denote by $(\players', E', U'_{E', \players}, \tokens') \coloneqq T(\players, E, U_{E, \players}, \tokens)$. The total number of tokens in the system stays constant, however. Let $B$ and $B'$ be the (not necessarily unique) largest \clustershort clusters by token holdings according to $(E, U_{E, \players}, \tokens)$ and $(E', U_{E', \players}, \tokens')$, respectively. Then, it follows that
    \begin{equation*}
        \tokens'(B') \geq \tokens(B) \iff \indexshort_{C_{\epsilon}, \min}(E, \players, U_{E, \players}, \tokens) \geq \indexshort_{C_{\epsilon}, \min}(E', \players', U'_{E', \players}, \tokens').
    \end{equation*}

    % Assume that a single entity instead controls the accounts associated with players $\hat{\players} \subseteq \players$. We denote the new set of utilities by $\hat{U}_{E, \players}$. Then, it follows that
    
    % \begin{equation}
    %     \indexshort_{C_{\epsilon}, \min}(E, \players, U_{E, \players}) \geq \indexshort_{C_{\epsilon}, \min}(E, \players, \hat{U}_{E, \players}).
    % \end{equation}
\end{thm}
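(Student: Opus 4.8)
The plan is to unfold the definition of $\indexshort_{C_{\epsilon}, \min}$ on both systems and observe that, for the min-entropy instantiation, the only system-dependent quantity in the formula is the token mass of the largest \clustershort bloc. Write $T \coloneqq \sum_{\player \in \players}\tokens(P)$; by the hypothesis on the transformation this total is preserved, so $\sum_{\player \in \players'}\tokens'(P) = T$ as well. Since $\players/\sim_{C_{\epsilon}}$ and $\players'/\sim_{C_{\epsilon}}$ are finite nonempty partitions, the maxima defining $\tokens(B)$ and $\tokens'(B')$ are attained; and since $\tokens, \tokens'$ take values in $\mathbb{R}^+$ and every bloc is nonempty, $\tokens(B)$ and $\tokens'(B')$ are strictly positive, so both \indexshort values are well defined. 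By the choice of $B$ and $B'$ as the largest blocs,
\[
\indexshort_{C_{\epsilon}, \min}(E, \players, U_{E, \players}, \tokens) = \log_2 T - \log_2 \tokens(B),
\qquad
\indexshort_{C_{\epsilon}, \min}(E', \players', U'_{E', \players}, \tokens') = \log_2 T - \log_2 \tokens'(B').
\]

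Next I would subtract these two identities. The common term $\log_2 T$ cancels, so the difference equals $\log_2 \tokens'(B') - \log_2 \tokens(B)$. Because $x \mapsto \log_2 x$ is strictly increasing on $\mathbb{R}^+$, this difference is $\geq 0$ if and only if $\tokens'(B') \geq \tokens(B)$. Reading this chain of equivalences in the direction of the theorem statement yields exactly
\[
\tokens'(B') \geq \tokens(B) \iff \indexshort_{C_{\epsilon}, \min}(E, \players, U_{E, \players}, \tokens) \geq \indexshort_{C_{\epsilon}, \min}(E', \players', U'_{E', \players}, \tokens'),
\]
which is the claim.

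I do not expect a genuine obstacle: the substance of the statement is simply that, under the min-entropy instantiation, \indexshort is an order-reversing function of the largest bloc's token mass once the total supply is fixed, and the proof is just the arithmetic that witnesses this. The only points that need care are bookkeeping ones --- that $B$ and $B'$ need not be unique (but $\tokens(B)$ and $\tokens'(B')$ are unambiguous as the respective maxima over the partition), that the total supply is genuinely invariant (an explicit hypothesis on the transformation), and that positivity of token balances keeps both the division and the logarithm well defined. The real work is deferred to the individual theorems that invoke this master theorem: each must argue how its particular transformation moves the largest \clustershort bloc, and the master theorem merely packages the reduction from ``compare \indexshort'' to ``compare the largest bloc.''
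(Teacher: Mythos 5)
Your proof is correct and follows essentially the same route as the paper's: both unfold the definition of $\indexshort_{C_\epsilon,\min}$, use the invariance of the total token supply to equate the denominators, and invoke the monotonicity of $\log_2$ to convert the comparison of largest-bloc masses into the stated equivalence. Your version is slightly more careful about well-definedness (attainment of the maxima, positivity of the bloc masses), but the substance is identical.
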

\begin{proof}
    This follows directly from the definition of $\indexshort_{C_{\epsilon}, \min}$:
    \begin{alignat*}{3}
        &    &\quad \tokens'(B') \geq&\ \tokens(B) \\
        &  \iff  &\quad \frac{\tokens'(B')}{\sum\limits_{\player \in \players'} \tokens'(\player)} \geq&\ \frac{\tokens(B)}{\sum\limits_{\player \in \players} \tokens(\player)} \\
        &  \iff  &\quad -\log_2 \Big(\frac{\tokens'(B')}{\sum\limits_{\player \in \players'} \tokens'(\player)}\Big) \leq &\ -\log_2 \Big(\frac{\tokens(B)}{\sum\limits_{\player \in \players} \tokens(\player)}\Big) \\
        &  \iff  &\quad \indexshort_{C_{\epsilon}, \min}(E', \players', U'_{E', \players'}, \tokens') \leq &\ \indexshort_{C_{\epsilon}, \min}(E, \players, U_{E, \players}, \tokens)
    \end{alignat*}
\end{proof}
Note that, if $B'$ represents a (new) majority by token holdings, then \indexshort strictly increases; equality follows when $\tokens'(B') = \tokens(B)$.

This master theorem thus serves as a template that individual theorems can bootstrap off of: simply specify a transformation $T$, explain how this modifies the largest voting bloc (if at all), and invoke Theorem~\ref{thm:vbe-master}. Armed with this formula, we now move on to concrete theoretical insights implied by \indexshort. Our theorem statements and proofs are simple, and often just show how \indexshort confirms a known pattern (for example, that quadratic voting is susceptible to sybil attacks). However, our goal is to show the flexibility of (a limited instantiation of) \indexshort, and how, unlike prior metrics, it is able to capture the subtle impacts the certain mechanisms have on decentralization.

\subsection{Owning Multiple Accounts}

As explained in Section~\ref{sec:intro}, previous notions of entropy fail to capture the centralization that is present (but hidden) when a whale distributes tokens across multiple accounts / addresses. In such cases, it may appear that tokens are well diversified across accounts, while a large fraction are in fact under the control of one entity. Unlike prior notions, \indexshort captures this nuance, since these accounts would indeed be considered a single voting bloc (we make the simplifying assumption that an individual's utility function is the same across all her accounts; this need not always be true). We formalize this below.

\begin{thm}[Sybil Attacks and \indexshort]
\label{thm:owning-multiple-accounts}
     Let $(\players', E, U'_{E, \players'}, \tokens') = T_{\texttt{mult}}(\players, E, U_{E, \players}, \tokens)$ be the transformation where some player $\player \in \players$ divides her tokens across a new set of accounts $\hat{\players}$, i.e., $\players' = \players \cup \hat{\players}$, $\tokens'(\hat{\players}) = \tokens(\player)$, and $\forall \hat{\player} \in \hat{\players}$, $U'_{E, \hat{\player}} = U_{E, \player}$. The rest of the system remains unchanged. Then, it follows that
    \begin{equation*}
        \indexshort_{C_{\epsilon}, \min}(E, \players, U_{E, \players}, \tokens) = \indexshort_{C_{\epsilon}, \min}(E, \players', U'_{E, \players'}, \tokens').
    \end{equation*}
\end{thm}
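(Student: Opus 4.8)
The plan is to invoke the \indexshort Master Theorem (Theorem~\ref{thm:vbe-master}) with $T = T_{\texttt{mult}}$. This transformation moves $\player$'s tokens around but does not create or destroy any, so the total token supply is unchanged and the hypotheses of Theorem~\ref{thm:vbe-master} are satisfied. It then suffices to show that the largest \clustershort cluster by token holdings has \emph{exactly the same} total weight in the two systems; by the remark following the Master Theorem (equality of \indexshort when $\tokens'(B') = \tokens(B)$), this yields $\indexshort_{C_{\epsilon}, \min}(E, \players, U_{E, \players}, \tokens) = \indexshort_{C_{\epsilon}, \min}(E, \players', U'_{E, \players'}, \tokens')$.

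First I would observe that $C_\epsilon$ depends on a player only through its utility vector $U_{E,\cdot}$. Since every new account $\hat{\player} \in \hat{\players}$ has $U'_{E, \hat{\player}} = U_{E, \player}$, each $\hat{\player}$ is $\sim_{C_\epsilon}$-equivalent to $\player$ (or, in the case where $\player$ is apathetic, each $\hat{\player}$ lands in the apathetic cluster $\apath$ alongside $\player$, since apathy is also determined by the utility vector). Hence, passing from the old system to the new one, the single class containing $\player$ gains precisely the accounts $\hat{\players}$ and loses nothing, while every class not containing $\player$ is setwise unchanged; and because $\tokens'$ agrees with $\tokens$ outside $\{\player\} \cup \hat{\players}$, each such unchanged class has unchanged token weight.

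Next I would compute the weight of the one class that is affected. Since $\player$ redistributes all of her tokens across $\hat{\players}$ (so $\tokens'(\player) = 0$ and $\tokens'(\hat{\players}) = \tokens(\player)$), the weight of $\player$'s class in the new system is $\tokens'([\player]) = \tokens([\player] \setminus \{\player\}) + \tokens'(\{\player\} \cup \hat{\players}) = \tokens([\player] \setminus \{\player\}) + \tokens(\player) = \tokens([\player])$. Therefore the multiset of cluster token-weights is identical in both systems, and in particular $\tokens'(B') = \tokens(B)$ for the (not necessarily unique) largest clusters $B, B'$. Applying Theorem~\ref{thm:vbe-master} in both directions then gives the claimed equality.

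There is no real obstacle here: the only care needed is the bookkeeping of which cluster absorbs $\hat{\players}$, and this is handled uniformly whether or not $\player$ is apathetic, while non-uniqueness of the largest cluster is already built into the statement of the Master Theorem. The conceptual point is simply that Sybil-splitting relocates tokens \emph{within a single voting bloc}, so \indexshort, which aggregates by bloc, does not move.
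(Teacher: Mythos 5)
Your proof is correct and follows essentially the same route as the paper's: observe that every Sybil account inherits $\player$'s utility vector and therefore lands in $[\player]$'s bloc (or the apathetic bloc, as appropriate), so the token weight of every \clustershort cluster is unchanged, and then conclude via the Master Theorem's equality case. Your write-up is in fact somewhat more careful than the paper's in its explicit bookkeeping of the affected cluster's weight and the apathetic edge case, but the underlying argument is identical.
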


\begin{proof}
Let $B$ be the largest voting bloc by token holdings before $T_{\texttt{mult}}$, which may or may not include $\player$. By assumption, all $\hat{\player} \in \hat{\players}$ are such that $U'_{E, \hat{\player}} = U_{E, \player}$. Thus, all new accounts will be in the same voting bloc $B'$ after $T_{\texttt{mult}}$, namely, $B' = [\player]$.

It follows then that, even though $\player$'s tokens are distributed between all individual accounts in $\hat{\player}$, they are in fact still under the control of the same block, i.e., $B'$. As such, $\tokens'(B') = \tokens(B')$. So, since no blocs acquire any new tokens, $B$ is still the largest voting bloc by token holdings after $T_{\texttt{mult}}$. Then, from Theorem~\ref{thm:vbe-master} it follows that 
\[\indexshort_{C_{\epsilon}, \min}(E, \players, U_{E, \players}, \tokens) = \indexshort_{C_{\epsilon}, \min}(E, \players, U'_{E, \players}, \tokens)\]
as desired.
\end{proof}

This result shows that, according to \indexshort, the ``true'' decentralization of the system does not change when a whale splits her tokens into multiple accounts, as they are all still under the control of the same voting entity. Conversely, metrics that focus on account balances alone would mistakenly conclude that the decentralization of the system strictly increased, since a set of tokens is diversified across more accounts.

\subsection{Apathy} 
\label{subsec:apathy}
A system where voters are apathetic, i.e., not interested in the direction of the community, is not aligned with the goals of a DAO: distribution of tokens is irrelevant if individuals abstain from voting, as elections are narrowed squarely to the set of more invested stakeholders. Our definition captures this fact. Intuitively, apathetic voters all have similar utility functions, which reflect their lack of stake in the elections. \indexshort groups all of these players within the same voting bloc, due to their aligned utilities. (Recall that we use $\apath$ to denote this set of apathetic voters.)

As we formalize below, if the disinterested players are small stakeholders to begin with, apathy has a centralizing effect, as they now belong to a larger bloc of aligned voters. Indeed, in practice, it is common for the set of apathetic voters to represent a majority of token holdings~\cite{daoapathy,fritsch2022analyzing}. We note, however, that interestingly apathy can potentially also have a decentralizing effect, in the (rare) case where it helps diversify a larger coalition of voters.

\begin{thm}[Apathy and \indexshort]
\label{thm:apathy}
    Let $(E, U'_{E, \players}, \tokens) = T_{\texttt{apath}}(\players, E, U_{E, \players}, \tokens)$ be the transformation where players $\hat{\players} \subseteq \players$ become $\epsilon$-apathetic, i.e., $\forall \player \in \hat{\players}$ $\forall e \in E,$ $|\util'_\player(e, \true)| \leq \epsilon$. The rest of the system remains unchanged. Then, if $\forall \player \in \hat{\players},$ $\tokens(\apath) \geq \tokens([\player])$, it follows that
    \begin{equation*}
        \indexshort_{C_{\epsilon}, \min}(E, \players, U_{E, \players}, \tokens) \geq \indexshort_{C_{\epsilon}, \min}(E, \players, U'_{E, \players}, \tokens).
    \end{equation*}
\end{thm}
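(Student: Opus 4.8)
The plan is to reduce everything to the \indexshort Master Theorem (Theorem~\ref{thm:vbe-master}). The transformation $T_{\texttt{apath}}$ alters only the utility vectors; it leaves $\players$, $E$, and $\tokens$ (hence the total token supply) fixed. So Theorem~\ref{thm:vbe-master} applies, and it suffices to show that the heaviest \clustershort cluster does not lose weight: writing $B$ for a largest cluster (by $\tokens$) under $(E, U_{E,\players}, \tokens)$ and $B'$ for a largest cluster under $(E, U'_{E,\players}, \tokens)$, I want $\tokens(B') \geq \tokens(B)$, which by the master theorem immediately yields $\indexshort_{C_\epsilon, \min}(E,\players,U_{E,\players},\tokens) \geq \indexshort_{C_\epsilon, \min}(E,\players,U'_{E,\players},\tokens)$.

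First I would record how the partition moves under $T_{\texttt{apath}}$. Each $\player \in \hat{\players}$ becomes $\epsilon$-apathetic in every election, so all of $\hat{\players}$ is absorbed into the single apathetic bloc; writing $\apath$ for that bloc in the transformed system, its token weight only increases. Every old bloc disjoint from $\hat{\players}$ is unchanged (the utilities of its members, and the ordinal relations among them, are untouched), and every old bloc meeting $\hat{\players}$ merely sheds those members into $\apath$. Hence the only blocs whose token weight can drop are the former blocs $[\player]$ of the newly apathetic players.

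Then I would finish with a dichotomy on the old maximizer $B$. If $B \cap \hat{\players} = \emptyset$, then $B$ is still a bloc of the transformed system, so $\tokens(B') \geq \tokens(B)$. Otherwise pick $\player \in B \cap \hat{\players}$; then $B = [\player]$, and the hypothesis $\tokens(\apath) \geq \tokens([\player])$ together with the fact that $\apath$ is a bloc of the transformed system gives $\tokens(B') \geq \tokens(\apath) \geq \tokens(B)$. In both cases $\tokens(B') \geq \tokens(B)$, and Theorem~\ref{thm:vbe-master} concludes. The only step carrying real content is this second case: when the dominant bloc itself bleeds members into apathy its weight genuinely falls, and we need the absorbing apathetic bloc to be at least as heavy as the bloc it depleted — which is exactly the quantified hypothesis. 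I would also note two small points for cleanliness: one never needs to identify $B'$, only to exhibit a single transformed-system bloc (an untouched $B$, or $\apath$) of weight $\geq \tokens(B)$, since $\tokens(B')$ dominates every bloc's weight; and the hypothesis is essential rather than cosmetic, since dropping it admits the decentralizing edge case flagged before the theorem, where apathy can broaden whatever coalition ends up largest.
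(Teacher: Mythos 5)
Your proof is correct and follows essentially the same route as the paper's: all of $\hat{\players}$ is absorbed into the single apathetic bloc, the hypothesis bounds the weight of any depleted former bloc $[\player]$ by $\tokens(\apath)$, and the Master Theorem converts $\tokens(B')\geq\tokens(B)$ into the \indexshort inequality. Your explicit dichotomy on whether the old maximizer meets $\hat{\players}$ is in fact slightly more careful than the paper's phrasing ``since no other blocs decrease in size,'' which glosses over the fact that a bloc shedding members into apathy does shrink---precisely the case your second branch handles via the quantified hypothesis.
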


\begin{proof}
Let $B$ be the largest voting bloc by token holdings before $T_{\texttt{apath}}$. We first note that all apathetic voters belong to the same voting bloc $B'$, according to \clustershort: by the definition of $\epsilon$-apathetic, it follows that, for all $\player_i, \player_j \in \hat{\players}$ and $e \in E$,
\[|\util'_{\player_i}(e, \true)|, |\util'_{\player_j}(e, \true)| \leq \epsilon,\]

which corresponds precisely to the bloc of apathetic voters in \clustershort, containing all players in $\apath$. Then, by assumption, $\tokens(B') = \tokens(\apath) \geq \tokens([\player])$, $\forall \player \in \hat{\players}$. So, since no other blocs decrease in size, it follows that $\tokens(B') \geq \tokens(B)$: either the bloc that aggregates all apathetic voters is now the largest bloc, or the same bloc is the largest in both instances. Thus, from Theorem~\ref{thm:vbe-master}, it follows that
\[\indexshort_{C_{\epsilon}, \min}(E, \players, U_{E, \players}, \tokens) \geq \indexshort_{C_{\epsilon}, \min}(E, \players, U'_{E, \players}, \tokens)\]
as desired.
\end{proof}

This result shows that \indexshort captures the intuition that large-scale apathy, which is common in practice, has a centralizing effect. We refer to the bloc of apathetic voters in a DAO, i.e., non-voting token holders, as the \textit{inactivity whale}. This term reflects the collective and potentially systemically important inactive behavior of this group.

\subsection{Delegation}
\label{subsec:delegation}
Intuition would suggest that delegation leads to a more centralized system: tokens that were originally held by a large set of players, are instead under the control of the (smaller) set of delegates. As we prove formally below, however, \indexshort shows how this situation is more nuanced, as delegation actually tends to make a DAO \emph{more} decentralized: before delegation, the tokens are all held by a \emph{single} voting bloc, namely, the inactivity whale. Delegation then diversifies the tokens held by this ``whale'', and distributes them amongst a set of voting blocs (the delegates). Assuming that the size of the inactivity whale is larger than each delegate's total tokens---which tends to be true in practice~\cite{daoapathy,fritsch2022analyzing}---the system is now more decentralized.

\begin{thm}[Delegation and \indexshort]
\label{thm:delegation}
Let $(E, U'_{E, \players}, \tokens') = T_{\texttt{deleg}}(\players, E, U_{E, \players}, \tokens)$ be the transformation where players $\hat{\players} \subseteq \players$, who are $\epsilon$-apathetic, instead delegate their votes to a set of delegates $D \subset \players$, i.e., $\tokens'(D) = \tokens(\hat{P})$ and $\tokens'(\hat{\players}) = 0$. The rest of the system remains unchanged. Then, if $\forall d \in D,$ $ \tokens(\apath) \geq \tokens'([d])$, it follows that,
    \begin{equation*}
        \indexshort_{C_{\epsilon}, \min}(E, \players, U'_{E, \players}, \tokens') \geq \indexshort_{C_{\epsilon}, \min}(E, \players, U_{E, \players}, \tokens).
    \end{equation*}
    % \kushal{LHS in the above equation is different than the majority assumption. this assumption should lead to a $\geq$ than $>$ result, equality being the case when the inactivity whale is not the majority but only greater than the delegates}
\end{thm}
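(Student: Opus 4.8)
The plan is to apply the \indexshort Master Theorem (Theorem~\ref{thm:vbe-master}) to the transformation $T_{\texttt{deleg}}$. That theorem reduces the desired inequality to a single comparison: letting $B$ and $B'$ be the largest \clustershort clusters by token holdings before and after delegation, it suffices to show $\tokens'(B') \leq \tokens(B)$, since then Theorem~\ref{thm:vbe-master} gives $\indexshort_{C_\epsilon,\min}(E,\players,U'_{E,\players},\tokens') \geq \indexshort_{C_\epsilon,\min}(E,\players,U_{E,\players},\tokens)$, which is exactly the claim.

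First I would observe that $T_{\texttt{deleg}}$ does not change the sign pattern of any player's utility vector: the members of $\hat{\players}$ remain $\epsilon$-apathetic, and each delegate's preferred outcome in every election is unaffected. Hence $\players/\sim_{C_\epsilon}$ is the same partition into blocs before and after the transformation; only the token mass carried by each bloc changes. Tracking that mass, the apathetic bloc $\apath$ (which is always a \clustershort cluster and contains $\hat{\players}$) loses exactly $\tokens(\hat{\players})$, each bloc $[d]$ containing some delegate $d \in D$ gains a nonnegative share of the redistributed $\tokens(\hat{\players})$, and every remaining bloc is untouched.

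Then I would bound $\tokens'(B')$ by considering these three classes of blocs. An untouched bloc has size at most $\tokens(B)$ by the choice of $B$. The bloc $\apath$ only shrank, so $\tokens'(\apath) \leq \tokens(\apath) \leq \tokens(B)$, where the second inequality again uses that $B$ was largest before delegation. For a delegate bloc $[d]$, the hypothesis $\tokens(\apath) \geq \tokens'([d])$ together with $\tokens(\apath) \leq \tokens(B)$ gives $\tokens'([d]) \leq \tokens(B)$. Since these cases are exhaustive, $\tokens'(B') \leq \tokens(B)$, and the result follows from Theorem~\ref{thm:vbe-master}.

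The step requiring the most care is the bloc bookkeeping in the corner cases --- for instance when a delegate already lies in $\apath$, or when the original largest bloc $B$ itself contains a delegate. In the latter case the hypothesis, combined with the fact that blocs can only grow under $T_{\texttt{deleg}}$, forces $\tokens'(B) = \tokens(B) = \tokens(\apath)$, so the inequality is tight rather than strict; none of these cases violate $\tokens'(B') \leq \tokens(B)$, but each should be dispatched explicitly so that the reduction is airtight. It is worth remarking, as in the theorem's surrounding discussion, that in the common regime where $\apath$ is the largest bloc and every delegate sits in a strictly smaller bloc, the inequality is strict, which is the (mildly counterintuitive) decentralizing effect the theorem is meant to capture.
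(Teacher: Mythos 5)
Your proposal is correct and follows essentially the same route as the paper: reduce to showing $\tokens'(B') \leq \tokens(B)$ via the Master Theorem, then bound every candidate for the new largest bloc --- unchanged blocs, the shrunken inactivity whale, and the delegate blocs --- using the hypothesis $\tokens(\apath) \geq \tokens'([d])$ together with the maximality of $B$. Your case bookkeeping is, if anything, slightly more explicit than the paper's, but the argument is the same.
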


\begin{proof}
    Let $B$ by the largest voting bloc by token holdings before $T_{\texttt{deleg}}$. As discussed in the proof of Theorem~\ref{thm:apathy}, all players in $\hat{P}$ belong to the same voting bloc for all elections in $E$---the inactivity whale---since they are all part of the set of apathetic voters $\apath$. Let $B'$ be the largest voting bloc by token holdings after $T_{\texttt{deleg}}$; note that it may be the case that $B' = [d]$ for some $d \in D$.

    We first note that $B'$ is equal to either (1) $B$ itself, (2) the second largest voting bloc after $B$ before delegation, or (3) $[d]$, for some $d \in D$. That is, since the only blocs that change after $T_{\texttt{deleg}}$ are all the $[d]$ and the inactivity whale (which lost $\tokens(\hat{P})$ tokens), it must the be case that the new largest voting bloc is either the same one as before delegation, the second largest voting bloc before delegation (i.e., $B$ was the inactivity whale, which got fractionated by delegation), or one of the $[d]$ which increased in size.
    
    For (1) and (2), it is clearly the case that $\tokens(B) \geq \tokens'(B')$. Then, for (3), note that, by assumption, $\tokens(\apath) \geq \tokens'([d])$, for all $d \in D$. So, $\tokens(B) \geq \tokens(\apath) \implies \tokens(B) \geq \tokens'([d]) = \tokens'(B')$. 
    
    It follows then that, in all cases, $\tokens(B) \geq \tokens'(B')$. Thus, from Theorem~\ref{thm:vbe-master}, we get that
    \[\indexshort_{C_{\epsilon}, \min}(E, \players, U'_{E, \players}, \tokens') \geq \indexshort_{C_{\epsilon}, \min}(E, \players, U_{E, \players}, \tokens)\]
    as desired.
    % ($\impliedby$) Since \indexshort increased, by Theorem~\ref{thm:vbe-master}, it must be the case that $\tokens'(B'') \leq&\ \tokens(B)$. Since, by definition, $\tokens'([d]) \leq \tokens'(B'')$ for all $d \in D$, it follows then than $\tokens'([d]) \leq \tokens(B)$.
\end{proof}

The intuition behind this result is that, as long as the delegates are not ``too big'', delegation actually has a decentralizing effect. Conversely, if some delegate is a whale, or gets delegated an overwhelming majority of tokens, then the system may become more centralized. Thus, delegation is most useful in cases where apathy is high. This idea is captured by the following corollary:

\begin{corollary}
If, in Theorem~\ref{thm:delegation}, there exists some delegate $d \in D$ such that $\tokens'([d]) \geq \tokens(\hat{P})$, then $\indexshort_{C_{\epsilon}, \min}(E, \players, U'_{E, \players}, \tokens') \leq \indexshort_{C_{\epsilon}, \min}(E, \players, U_{E, \players}, \tokens)$.
\end{corollary}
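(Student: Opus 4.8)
The plan is to run the same machinery as the proof of Theorem~\ref{thm:delegation}, but to establish the inequality between the largest blocs in the reverse direction. Write $B$ and $B'$ for the largest \clustershort clusters by token holdings in the system before and after the delegation transformation $T_{\texttt{deleg}}$, respectively. By the \indexshort Master Theorem (Theorem~\ref{thm:vbe-master}), the desired conclusion $\indexshort_{C_\epsilon,\min}(E,\players,U'_{E,\players},\tokens') \le \indexshort_{C_\epsilon,\min}(E,\players,U_{E,\players},\tokens)$ is \emph{equivalent} to $\tokens'(B') \ge \tokens(B)$, so the entire task is to show that, under the corollary's hypothesis, the largest voting bloc does not shrink.

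As already observed in the proof of Theorem~\ref{thm:delegation}, $T_{\texttt{deleg}}$ leaves the cluster structure $\sim_{C_\epsilon}$ unchanged (utilities are untouched, so the delegators stay in the apathetic bloc and the delegates stay in their blocs); only the token weights change, with the inactivity whale losing $\tokens(\hat{P})$ and the delegate blocs $[d]$, $d \in D$, collectively gaining $\tokens(\hat{P})$. I would then case-split on the identity of $B$. If $B$ is any bloc other than the inactivity whale, then it is either untouched by $T_{\texttt{deleg}}$ or it is one of the $[d]$'s, which only gain tokens; in either sub-case $\tokens'(B') \ge \tokens'(B) \ge \tokens(B)$ and we are done without even using the hypothesis. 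The case that needs the hypothesis is $B = \apath$: here take the delegate $d$ it supplies, for which $\tokens'([d]) \ge \tokens(\hat{P})$, so that $\tokens'(B') \ge \tokens'([d]) \ge \tokens(\hat{P})$; in the high-apathy regime the corollary targets --- where the delegators constitute (essentially) the whole dominant bloc, i.e.\ $\tokens(\hat{P}) = \tokens(\apath) = \tokens(B)$ --- this yields $\tokens'(B') \ge \tokens(B)$. Invoking Theorem~\ref{thm:vbe-master} then closes the argument, and the inequality is strict exactly when $\tokens'([d]) > \tokens(B)$, e.g.\ when $d$ has absorbed a token majority or was already a whale, matching the informal statement preceding the corollary.

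The main obstacle is precisely the inactivity-whale case: the hypothesis controls $\tokens'([d])$ only from below by the delegated mass $\tokens(\hat{P})$, not by $\tokens(\apath)$, so one genuinely needs the delegators to have accounted for the dominant pre-delegation bloc for $\tokens'(B') \ge \tokens(B)$ to go through --- without that, routing a small delegated mass into a single delegate cannot by itself outweigh a large, only partially depleted inactivity whale. This is exactly why the corollary is the right counterpoint to Theorem~\ref{thm:delegation}: when apathy is high and (most of) it delegates, funneling all those tokens into one bloc $[d]$ simply rebuilds a bloc of the same size, so decentralization cannot improve, and if that bloc crosses a majority it strictly worsens. The remaining ingredients --- conservation of the total token supply, and the elementary fact that an untouched or growing bloc's weight is at least its former weight --- are routine and already implicit in the proof of Theorem~\ref{thm:delegation}.
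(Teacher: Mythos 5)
The paper states this corollary with no proof at all, so there is nothing to compare your argument against line by line; your reconstruction via the Master Theorem and a case split on whether the pre-delegation largest bloc $B$ is the inactivity whale is exactly the argument the authors must have had in mind. Your handling of the case $B \neq \apath$ is correct and needs no hypothesis. More importantly, the obstacle you flag in the case $B = \apath$ is real, and it is worth saying plainly: the corollary as literally stated is \emph{false} without your extra assumption. Concretely, let $\apath$ hold $100$ tokens out of a total of $120$, let $\hat{\players} \subseteq \apath$ hold $10$ of them and delegate to a single $d$ whose bloc previously held $5$ tokens. Then $\tokens'([d]) = 15 \geq 10 = \tokens(\hat{P})$, so the corollary's hypothesis holds (and so does Theorem~\ref{thm:delegation}'s, since $\tokens(\apath) = 100 \geq 15$), yet the largest bloc shrinks from $100$ to $90$ and $\indexshort_{C_\epsilon,\min}$ strictly \emph{increases}, contradicting the claimed conclusion. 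The hypothesis bounds $\tokens'([d])$ below only by the delegated mass $\tokens(\hat{P})$, which says nothing about the residual inactivity whale of size $\tokens(\apath) - \tokens(\hat{P})$.

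Your patch — restricting to the regime $\tokens(\hat{P}) = \tokens(\apath) = \tokens(B)$ — makes the argument go through, but it is an assumption you are adding, not one present in the statement. A cleaner fix that requires no regime restriction is to strengthen the hypothesis to $\tokens'([d]) \geq \tokens(\apath)$ (or, most directly, $\tokens'([d]) \geq \tokens(B)$): then in the whale case $\tokens'(B') \geq \tokens'([d]) \geq \tokens(\apath) = \tokens(B)$, the non-whale case is as you argued, and the Master Theorem closes both. One small side remark: your claim that the inequality is strict ``exactly when'' $\tokens'([d]) > \tokens(B)$ is only a sufficient condition — strictness can also come from some other bloc overtaking $B$ — but this does not affect the main argument.
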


In practice, it is common for delegates to be small relative to the inactivity whale~\cite{daoapathy}, but this, of course, need not always be true.

\subsection{Herding}
\label{subsec:herding}
A core goal of DAOs---and any democratic system more broadly---is for token holders to vote according to their true preferences. In practice, however, many DAOs fail to meet this goal and instead exhibit \emph{herding} behavior. Specifically, when votes are publicly observable, social dynamics lead to the formation of ``coalitions'' of voters. For example, token holders have reported feeling influenced to vote a certain way, often in alignment with influential community members, in order to thwart the reputational risks associated with opposing popular viewpoints~\cite{sharma2023unpacking}. Similarly, it has been observed and measured that token holders often vote in alignment with their peers~\cite{messias2023understanding}, who now operate as a single, large entity. In both cases, the monetary utility derived from the social impact of a player's vote skews the monetary utility of her desired outcome in a vacuum.

Herding leads to more centralization, as votes artificially converge on one outcome. Token distribution alone, however, does not show this. Indeed, a system where tokens are distributed evenly, but all players vote for the same outcome due to herding, would be deemed optimally decentralized according to such metrics. Conversely, \indexshort does conclude that reputational risks lead to more centralization, as it aligns the utilities of the players towards the socially preferred outcome.

\begin{thm}[Herding and \indexshort]
\label{thm:herding}
    Let $(E, U'_{E, \players}, \tokens) = T_{\texttt{herd}}(\players, E, U_{E, \players}, \tokens)$ be the transformation where players $\hat{\players} \subseteq \players$ exhibit herding toward, without loss of generality, \true. That is, for all $P \in \hat{P}$ and $e \in E$, the monetary reputational cost of voting for \false\ is greater than or equal to
    $\max(2 \cdot \util_{\player}(e,\false]) + \epsilon, 0)$ for some constant $\epsilon$. The rest of the system remains unchanged. Then, it follows that
    
    \begin{equation*}
        \indexshort_{C_{\epsilon}, \min}(E, \players, U_{E, \players}, \tokens) \geq \indexshort_{C_{\epsilon}, \min}(E, \players, U'_{E, \players}, 
        \tokens).
    \end{equation*}
\end{thm}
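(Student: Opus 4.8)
The plan is to reduce to the VBE Master Theorem (Theorem~\ref{thm:vbe-master}) with the transformation $T = T_{\texttt{herd}}$. Since $T_{\texttt{herd}}$ leaves the token map $\tokens$ untouched and only changes the utility vectors (hence the $C_\epsilon$-clustering), the Master Theorem tells us it suffices to show that the largest $C_\epsilon$-bloc does not lose mass, i.e.\ $\tokens(B') \ge \tokens(B)$, where $B$ and $B'$ are (any choice of) largest blocs by token holdings before and after the transformation.

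First I would make precise how $T_{\texttt{herd}}$ moves the utility vectors of the herding players $\hat{\players}$. For $\player \in \hat{\players}$ and $e \in E$, attaching to the act of voting $\false$ a reputational penalty of at least $\max(2\,\util_\player(e,\false)+\epsilon,\,0)$ produces, in the model's bookkeeping, the same shift as a bribe of that amount toward $\true$; keeping the antisymmetry $\util'_\player(e,\true) = -\util'_\player(e,\false)$, this sets $\util'_\player(e,\true) = \util_\player(e,\true) + c$ for some $c \ge \max(2\,\util_\player(e,\false)+\epsilon,\,0)$. A short case split — on whether $\player$ originally favored $\true$, favored $\false$, or was $\epsilon$-apathetic in $e$ — then shows $\util'_\player(e,\true) > 0$ in \emph{every} election $e$: the $\max(\cdot,0)$ absorbs the ``already favors $\true$'' case, and in the remaining cases one gets $\util'_\player(e,\true) \ge -\util_\player(e,\true) + \epsilon > 0$. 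This is the only genuine computation in the argument, and it is routine given the bribe-cost convention fixed in Section~\ref{sec:VBE_def}.

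Next I would read off the new clustering. Since every $\player \in \hat{\players}$ now has a strictly positive utility vector, all of $\hat{\players}$ are pairwise $C_\epsilon$-equivalent and, in particular, $C_\epsilon$-equivalent to the ``all-$\true$'' bloc; writing $B_{\true}$ for the pre-transformation all-$\true$ bloc and $B'_{\true}$ for its post-transformation counterpart, we have $\hat{\players} \subseteq B'_{\true}$. Every player outside $\hat{\players}$ keeps its cluster, so the only token totals that move are $\tokens(B'_{\true})$, which can only increase — $\tokens(B'_{\true}) = \tokens(B_{\true}) + \tokens(\hat{\players} \setminus B_{\true}) \ge \tokens(B_{\true})$ — and the totals of the former blocs of the $\hat{\players}$-members (possibly including $\apath$), which only decrease. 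Since herding is by definition convergence onto the ``popular'' outcome, which the statement fixes without loss of generality as $\true$, the all-$\true$ bloc is a largest bloc before the transformation; that is, we may take $B = B_{\true}$. Then $\tokens(B') \ge \tokens(B'_{\true}) \ge \tokens(B_{\true}) = \tokens(B)$, and Theorem~\ref{thm:vbe-master} yields $\indexshort_{C_\epsilon, \min}(E, \players, U_{E, \players}, \tokens) \ge \indexshort_{C_\epsilon, \min}(E, \players, U'_{E, \players}, \tokens)$, as required.

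I expect the crux to be this last step — establishing $\tokens(B') \ge \tokens(B)$. Without reading ``herding toward $\true$'' as ``herding toward the already-dominant bloc,'' the transformation could instead pull a small token holder out of the largest bloc and into a negligible one, which would \emph{increase} decentralization; so the real content of the proof is the observation that herding funnels mass into (a superset of) the pre-existing largest bloc, whereas the utility bookkeeping of the first step is purely mechanical.
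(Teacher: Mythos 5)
Your proposal follows the same route as the paper's proof: use the bribe-cost bookkeeping to show that after $T_{\texttt{herd}}$ every $\player \in \hat{\players}$ has an all-positive utility vector, conclude that all of $\hat{\players}$ lands in a single \clustershort bloc together with every player who already favored $\true$ everywhere, and then invoke Theorem~\ref{thm:vbe-master}. The one substantive difference is at the crux you yourself single out. The paper's proof simply asserts $\tokens(B') \geq \tokens(B)$ ``as either the new voting bloc $B'$ is now the largest bloc, or the same bloc is the largest before and after,'' which does not dispose of the case you flag: if the pre-transformation largest bloc $B$ is \emph{not} the all-$\true$ bloc and loses members to $\hat{\players}$, the largest post-transformation bloc can be strictly smaller than $B$, and \indexshort would then increase rather than decrease. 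Your patch---reading ``herding'' as convergence onto the already-dominant outcome, so that $B = B_{\true} \subseteq B'_{\true}$ and hence $\tokens(B') \geq \tokens(B'_{\true}) \geq \tokens(B)$---supplies an assumption the theorem statement never makes explicit but that the argument needs (the paper does state the analogous caveat explicitly in the bribery discussion of Section~\ref{subsec:vbe-and-bribery}, just not here). So your write-up is correct, takes the paper's approach, and is more careful than the paper's own proof at the one step that carries the theorem's content.
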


\begin{proof}
Let $B$ be the largest voting bloc by token holdings before $T_{\texttt{herd}}$. Note that, after $T_{\texttt{herd}}$, all voters in $\hat{P}$ belong to the same voting bloc $B'$: for every $\player \in \hat{\players}$, $U'_{E, \player}$ will consist of only positive values: either $\player$ preferred an outcome of \true\ in $e$ to begin with, or their monetary utility of \true\ is now $|\util_\player(e, \false)| + \epsilon$. Thus, since $\sgn{\util_\player(e, \true)} = 1$ for all $e \in E$, all of $\hat{P}$ consists of a single voting bloc $B'$ according to \clustershort. 

It follows then that $\tokens(B') \geq \tokens(B)$, as either the ``new'' voting bloc $B'$ is now the largest bloc, or the same bloc is the largest before and after $T_{\texttt{mirr}}$. Then, from Theorem~\ref{thm:vbe-master}, it follows that
\[\indexshort_{C_{\epsilon}, \min}(E, \players, U_{E, \players}, \tokens) \geq \indexshort_{C_{\epsilon}, \min}(E, \players, U'_{E, \players}, \tokens)\]
as desired. 
\end{proof}

An important conclusion of this theorem is that privacy instead \emph{increases} the decentralization of a system, as it serves as a ``mitigation'' to herding. That is, if votes are private, token holders can vote for their true preferences, instead of being influenced by, e.g., social optics or the votes of their peers. (We omit a formal proof of this corollary, as it follows directly via a proof by contradiction of Theorem~\ref{thm:herding}).

\subsection{Voting slates}
\label{subsec:voting-slates}
Grouping together various elections into a lesser number of (more general) elections---so-called ``voting slates''---is in opposition with decentralization: decision-making is more diluted, thus decreasing the relative impact of each voter in the underlying proposals. That is, voting slates ``factor out'' differences in the viewpoints of individuals, yielding more homogeneous utilities. For example, two players may disagree in many of the individual proposals, but agree on a few of the more important ones, resulting in them casting the same overall vote.

We model a player's utility for a slate of elections simply by adding the utilities of the underlying proposals. That is, for all $\player \in \players$ and some election $\mathcal{E}$ comprised of some subset of elections of $E$, the utility of $\player$ in $\mathcal{E}$ is:

\[\util_\player(\mathcal{E}, \true) = \sum\limits_{e \in \mathcal{E}} \util_\player(e, \true).\]

Voting slates are generally used to ``hide'' unpopular or  proposals among a larger set of benign, popular proposals, and thus increase their chances of passing. We model this by saying that if two $\player_i, \player_j$ have aligned utilities (according to \clustershort) on all proposals underlying $\mathcal{E}$, then they will agree on $\mathcal{E}$ itself, i.e., 

\[C_\epsilon(U_{E, \player_i}, U_{E, \player_j}) = 1 \implies \sgn{\sum\limits_{e \in \mathcal{E}} \util_{\player_i}(e, \true)} = \sgn{\sum\limits_{e \in \mathcal{E}} \util_{\player_j}(e, \true)}\]

% contains (at least) one proposal $\hat{e}$ such that more players accept the entire slate than those who accept the individual proposal, i.e.,

% \[\sum_{\player \in \players} \Big( \tokens(P) \,\big|\, \sum\limits_{e \in \mathcal{E}} \util_\player(e, \true) > 0\Big) > \sum_{\player \in \players} \Big( \tokens(P) \,\big|\, \util_\player(\hat{e}, \true) > 0\Big)\]

% \tokens\Big(\Big\{\player \in \players : \sum\limits_{e \in \mathcal{E}} \util_\player(e, \true) > 0\Big\}\Big) > \tokens\Big(\Big\{\player \in \players : \util_\player(\hat{e}, \true) > 0\Big\}\Big)\]

As we show below, \indexshort reflects the fact that bundling proposals indeed decreased decentralization: by considering a narrower set of elections, which smoothens utility functions, different voting blocs are combined to form larger ones.

\begin{thm}[Voting Slates and \indexshort]
\label{thm:voting-slates}
    Let $(E', U'_{E', \players}, \tokens) = T_{\texttt{slates}}(\players, E, U_{E, \players}, \tokens)$ be the transformation where all elections $E$ are bundled together into slates to form a smaller set of elections $E'$. The rest of the system remains unchanged. Then, it follows that
    \begin{equation}
        \indexshort_{C_{\epsilon}, \min}(E, \players, U_{E, \players}, \tokens) \geq \indexshort_{C_{\epsilon}, \min}(E', \players, U'_{E', \players}, \tokens).
    \end{equation}
\end{thm}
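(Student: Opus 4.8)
The plan is to invoke the \indexshort Master Theorem (Theorem~\ref{thm:vbe-master}) with the transformation $T_{\texttt{slates}}$, so the entire task reduces to showing that the largest \clustershort cluster by token holdings does not shrink when elections are bundled into slates: i.e., if $B$ is the largest bloc under $(E, U_{E,\players}, \tokens)$ and $B'$ the largest under $(E', U'_{E',\players}, \tokens)$, then $\tokens(B') \geq \tokens(B)$. Since the token distribution $\tokens$ and the player set $\players$ are unchanged by $T_{\texttt{slates}}$, the total token supply is constant, so the hypotheses of Theorem~\ref{thm:vbe-master} are met and the conclusion on \indexshort follows immediately once the bloc inequality is established.

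First I would argue that bundling coarsens the clustering, i.e., the partition $\players/\sim_{C_\epsilon}$ under $E$ refines the partition under $E'$. Concretely, take any two players $\player_i, \player_j$ with $P_i \sim_{C_\epsilon} P_j$ under the original elections $E$, so $C_\epsilon(U_{E,\player_i}, U_{E,\player_j}) = 1$. By the modeling assumption stated just before the theorem, $C_\epsilon(U_{E,\player_i}, U_{E,\player_j}) = 1$ implies that for every slate $\mathcal{E} \in E'$ the signs of $\sum_{e \in \mathcal{E}}\util_{\player_i}(e,\true)$ and $\sum_{e \in \mathcal{E}}\util_{\player_j}(e,\true)$ agree, which is exactly $\sgn{U'_{E',\player_i}[k]} = \sgn{U'_{E',\player_j}[k]}$ for each slate index $k$; hence $C_\epsilon(U'_{E',\player_i}, U'_{E',\player_j}) = 1$ and $P_i \sim_{C_\epsilon} P_j$ under $E'$ as well. (The apathetic cluster is handled the same way: an $\epsilon$-apathetic player under $E$ has all coordinates bounded by $\epsilon$, and summing over a slate may exceed $\epsilon$, but such a player still lands in some bloc under $E'$, so no bloc loses members.) Thus every original bloc is contained in some new bloc, so each new bloc is a union of one or more original blocs.

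From the refinement, the conclusion is immediate: let $B$ be a largest original bloc and let $B'_\star$ be the (unique) bloc under $E'$ that contains $B$. Then $B \subseteq B'_\star$, so $\tokens(B) \leq \tokens(B'_\star) \leq \max_{\players' \in \players/\sim_{C_\epsilon}\text{ under }E'}\tokens(\players') = \tokens(B')$. This gives $\tokens(B') \geq \tokens(B)$, and applying Theorem~\ref{thm:vbe-master} with $T = T_{\texttt{slates}}$ yields $\indexshort_{C_{\epsilon}, \min}(E, \players, U_{E, \players}, \tokens) \geq \indexshort_{C_{\epsilon}, \min}(E', \players, U'_{E', \players}, \tokens)$, as claimed.

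The main obstacle — really the only non-bookkeeping point — is justifying the refinement step cleanly, and in particular noticing that it rests entirely on the stated modeling implication $C_\epsilon(U_{E,\player_i}, U_{E,\player_j}) = 1 \implies \sgn{\sum_{e\in\mathcal{E}}\util_{\player_i}(e,\true)} = \sgn{\sum_{e\in\mathcal{E}}\util_{\player_j}(e,\true)}$ rather than on any intrinsic property of summation (indeed, without that assumption two sign-aligned players could have utilities that cancel differently across a slate and end up in different blocs, so the theorem would fail). I would also make sure to handle the edge case where multiple blocs tie for largest, which is why Theorem~\ref{thm:vbe-master} is phrased with "not necessarily unique" largest blocs; the argument above only uses one fixed choice of $B$ and one fixed choice of $B'$, so ties cause no difficulty.
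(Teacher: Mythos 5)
Your proposal is correct and follows essentially the same route as the paper's proof: both arguments hinge on the stated modeling assumption that $C_\epsilon$-aligned players agree on the sign of every slate's summed utility, conclude that the largest bloc can only retain or gain members (so $\tokens(B')\geq\tokens(B)$), and then invoke Theorem~\ref{thm:vbe-master}. Your framing via partition refinement is a slightly more systematic way of stating what the paper argues for the single largest bloc, and your observation that the result genuinely depends on the modeling implication (rather than on any intrinsic property of summing utilities) is exactly the right point of emphasis.
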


\begin{proof}
Let $B$ be the largest voting bloc by token holdings before $T_{\texttt{slates}}$. Then, note that all players in $B$ are still in the same voting bloc $B'$ after $T_{\texttt{slates}}$: since $C_\epsilon(U_{E, \player_i}, U_{E, \player_j}) = 1$ for every pair of players in $B$, by assumption, it follows that \[\forall \mathcal{E} \in E',\ \sgn{\sum\limits_{e \in \mathcal{E}} \util_{\player_i}(e, \true)} = \sgn{\sum\limits_{e \in \mathcal{E}} \util_{\player_j}(e, \true)}.\]

Conversely, players who did not belong to $B$ may, in fact, join $B'$ after $T_{\texttt{slates}}$: even if the players disagree in some of the underlying proposals for a particular slate $\mathcal{E}$, they may cast the same overall vote for the entire slate. As such, $B'$ contains strictly more players than $B$, which implies that $\tokens(B') \geq \tokens(B)$. Then, from Theorem~\ref{thm:vbe-master}, it follows that
\[\indexshort_{C_{\epsilon}, \min}(E, \players, U_{E, \players}, \tokens) \geq \indexshort_{C_{\epsilon}, \min}(E', \players, U'_{E', \players}, \tokens)\]
as desired.

% We argue this informally, as the proof is very similar to that of the prior theorems. Let $B'$ and $B''$ be the largest voting blocs by token holdings according to $U_{E, \players}$ and $U_{\hat{E}, \players}$, respectively. Then, note that $|B''| \geq |B'|$. First, every player in $B'$ is also in $B''$: if two players agree on all individual elections in $E$, then they will still agree in the case where elections are simply grouped together. Conversely, assume that two players from $B''$ both desire an outcome of, without loss of generality, \true\ in some election in $\hat{E}$. This need not imply, however, that they have aligned utilities in all the elections from $E$ comprised therein, but rather that their total added utilities are both positive.

% Since the grouping of elections thus results in a new bloc whose size is greater than or equal to the largest bloc when considering individual elections, it follows then that \indexshort has decreased or (at best) stayed the same.

\end{proof}

\subsection{Bribery}
\label{subsec:vbe-and-bribery}

There is an intuitive relationship between decentralization and bribery, namely, that successful bribery poses a threat to decentralization: in such a case, the entity that acquires the votes of the other players now controls a higher proportion of the total tokens than before. However, traditional decentralization metrics, i.e., based on token distribution across accounts, fail to capture this fact: bribed voters, albeit casting votes as instructed by the briber, still technically hold their tokens. Conversely, \indexshort groups all bribed voters in the briber's bloc, as all bribee's now have aligned utility functions, in line with the bribers desired outcome.

We note that, interestingly, similar to our result from Section~\ref{subsec:apathy}, bribery can have the surprising consequence of leading to a more \emph{decentralized} system, in the case where it fragments a larger bloc (say, the inactivity whale, or some large coalition of voters). However, we ignore this edge case and assume instead that the bloc of bribed voters represents a majority by token holdings. (In particular, for the inactivity whale, it would be rational for all apathetic voters to accept a bribe, in which case the entire inactivity whale is absorbed.) As such, even though bribery need not, unconditionally, increase centralization, it poses a practical \emph{threat} to decentralization.

\begin{thm}[Bribery and \indexshort]
\label{thm:bribery1}
    Let $(E, U'_{E, \players}, \tokens) = T_{\texttt{bribe}}(\players, E, U_{E, \players}, \tokens)$ be the transformation where an entity successfully bribes players $\hat{\players} \subseteq \players$ in elections $E$ to achieve an outcome of, without loss of generality, $\true$. The rest of the system remains unchanged. Then, it follows that
    
    \begin{equation}
        \indexshort_{C_{\epsilon}, \min}(E, \players, U_{E, \players}, \tokens) > \indexshort_{C_{\epsilon}, \min}(E, \players, U'_{E, \players}, \tokens).
    \end{equation}
    
\end{thm}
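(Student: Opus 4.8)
The plan is to instantiate the \indexshort Master Theorem (Theorem~\ref{thm:vbe-master}) with the transformation $T_{\texttt{bribe}}$, so the entire argument reduces to showing that the largest \clustershort cluster strictly gains tokens: $\tokens(B') > \tokens(B)$, where $B$ and $B'$ are the largest clusters before and after bribery. First I would recall that after $T_{\texttt{bribe}}$, every bribed player $\player \in \hat{\players}$ is paid exactly enough to flip (or maintain) their vote to $\true$ across all elections in $E$; concretely, by the bribery cost model in Section~\ref{sec:VBE_def}, a bribed voter's utility of $\true$ in election $e$ becomes (at least) $\max(|\util_\player(e,\false)| + \epsilon, \epsilon) > 0$, so $\sgn{\util'_\player(e,\true)} = 1$ for all $e \in E$ and all $\player \in \hat{\players}$. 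Hence all bribed voters land in a single \clustershort cluster $B'$ together with the briber (and any voters who already preferred $\true$ everywhere).

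The next step is to argue the strict inequality. Here I would invoke the assumption carried over from the paragraph preceding the theorem: the bloc of bribed voters represents a majority of tokens, i.e.\ $\tokens(B') > q \cdot \tokens(\players) \geq \tfrac{1}{2}\tokens(\players)$ (this is precisely what ``successfully bribes\ldots\ to achieve an outcome of $\true$'' means under the ``ensure-the-outcome'' threshold condition). Since no bloc can hold more than all the tokens, and the total token supply is fixed by the master theorem's hypothesis, a strict majority bloc is strictly larger than any bloc could have been before, unless that very same bloc was already a strict majority. I would dispatch that latter case separately: if some pre-bribery bloc $B$ already held a strict majority, then either the briber's bloc was already $B$ (so bribery only adds the newly-aligned voters from $\hat\players$, giving $\tokens(B') \geq \tokens(B)$, with strictness because at least one voter with positive token weight is newly absorbed — or the transformation is vacuous and can be excluded), or $B$ was a different majority bloc, which is impossible since two disjoint blocs cannot both exceed half the supply. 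So in all non-vacuous cases $\tokens(B') > \tokens(B)$.

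The main obstacle I anticipate is pinning down exactly why the inequality is \emph{strict} rather than merely $\geq$ (contrast with the non-strict Theorems~\ref{thm:apathy},~\ref{thm:delegation},~\ref{thm:voting-slates}). The clean way is to lean on the remark right after Theorem~\ref{thm:vbe-master} — ``if $B'$ represents a (new) majority by token holdings, then \indexshort strictly increases'' — together with the standing assumption that the bribed bloc is a majority; then strictness is immediate from that remark, and I only need to handle the degenerate case where $B$ was \emph{already} the majority bloc containing the briber, where strictness requires that $\hat\players$ contributes a nonzero token weight (which holds as long as $T_{\texttt{bribe}}$ actually bribes someone, i.e.\ $\hat\players \neq \emptyset$ with positive balances). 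A secondary subtlety worth a sentence is confirming that no \emph{other} bloc grows under $T_{\texttt{bribe}}$ and that the total token count is preserved (bribery payments are side transfers, not token-supply changes), so that the hypothesis of the master theorem is genuinely met; both are immediate from ``the rest of the system remains unchanged.'' Assembling these pieces: bribed voters form one $\true$-aligned bloc $B'$, $B'$ is a (strict) majority, the master theorem's remark then yields $\indexshort_{C_{\epsilon}, \min}(E, \players, U_{E, \players}, \tokens) > \indexshort_{C_{\epsilon}, \min}(E, \players, U'_{E, \players}, \tokens)$, as desired.
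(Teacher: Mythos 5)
Your proposal follows the same route as the paper's proof: instantiate the master theorem with $T_{\texttt{bribe}}$, observe that the bribery cost model forces $\sgn{\util'_\player(e,\true)}=1$ for every bribed player and every election, conclude that all of $\hat{\players}$ collapses into a single \clustershort bloc $B'$, and deduce $\tokens(B') \geq \tokens(B)$. Where you diverge is on strictness, and here you are actually \emph{more} careful than the paper: the paper's own proof ends by deriving only the non-strict inequality $\indexshort_{C_{\epsilon},\min}(E,\players,U_{E,\players},\tokens) \geq \indexshort_{C_{\epsilon},\min}(E,\players,U'_{E,\players},\tokens)$, which does not match the strict ``$>$'' in the theorem statement; you correctly identify that the strict version needs the standing assumption (stated only in the prose before the theorem) that the bribed bloc is a token majority, plus the remark after the master theorem. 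One small wobble in your case analysis: the claim that ``two disjoint blocs cannot both exceed half the supply'' compares $B$ (pre-bribery) with $B'$ (post-bribery), which live in different partitions of $\players$, so disjointness at a single snapshot is not the right invariant --- a pre-bribery majority bloc of $\false$-preferrers can lose members to $B'$ under bribery, and ruling out $\tokens(B') < \tokens(B)$ in that configuration needs a separate sentence (e.g., that the surviving remnant of $B$ cannot exceed the new majority $B'$). That said, this is an edge case the paper silently ignores altogether, so your treatment is at worst on par with, and arguably tighter than, the published argument.
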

\begin{proof}
Let $B$ be the largest voting bloc by token holdings before $T_{\texttt{bribe}}$. First, note that, after $T_{\texttt{bribe}}$, all voters in $\hat{\players}$ belong to the same voting bloc $B'$. Recall that, in our DAO abstraction, bribing a player $\player$ to flip its vote in election $e$ from $\false$ to $\true$ costs $\max(2 \cdot \util_{\player}(e,\false) + \epsilon,0)$. So, for every $P \in \hat{\players}$ and $e \in E$, either $\util_P(e, \true)$ was already positive to begin with, or it is now $|\util_{\player}(e,\false)| + \epsilon$. Then, since $\sgn{\util_\player(e, \true)} = 1$ for all $e \in E$, all of $\hat{P}$ consists of a single voting bloc $B'$ according to \clustershort. 

It follows then that $\tokens(B') \geq \tokens(B)$, as either the ``new'' voting bloc $B'$ is now the largest bloc, or the same bloc is the largest before and after $T_{\texttt{bribe}}$. Then, from Theorem~\ref{thm:vbe-master}, it follows that
\[\indexshort_{C_{\epsilon}, \min}(E, \players, U_{E, \players}, \tokens) \geq \indexshort_{C_{\epsilon}, \min}(E, \players, U'_{E, \players}, \tokens)\]
as desired.
\end{proof}

\paragraph{Scale of bribery and decentralization.}
A second, more nuanced observation is that successful bribery must be systemic, i.e., must involve a large number of tokens, if (and only if) a system is highly decentralized. Intuitively, if a DAO is highly centralized, a briber can directly coordinate with a few large players to guarantee an election outcome; or, if the briber is a whale herself, she only needs to bribe a few of the smaller players to accumulate enough tokens to mount a successful attack. Instead, in a more decentralized system, players are smaller, so a briber needs to widen the scale of their attack if they want to win an election. That is, in this case, successful bribery requires large-scale coordination among various smallholders.

\begin{thm}[Internal Bribery and \indexshort]
\label{thm:bribery2}
    Let $(E, U'_{E, \players}, \tokens) = T_{\texttt{bribe}}(\players, E, U_{E, \players}, \tokens)$ be the transformation where $U'_{E, \players}$ is some arbitrary change in the utilities of the players. The rest of the system remains unchanged. Assume that an entity in $\players$ needs to bribe other players holding a total of at least $n_1$ and $n_2$ tokens to guarantee an outcome of $\true$ in elections $E$ before and after $T_{\texttt{bribe}}$, respectively. Then, it follows that

    \begin{equation}
        n_1 > n_2 \iff \indexshort_{C_{\epsilon}, \min}(E, \players, U'_{E, \players}, \tokens) < \indexshort_{C_{\epsilon}, \min}(E, \players, U_{E, \players}, \tokens).
    \end{equation}
    
\end{thm}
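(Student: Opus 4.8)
The plan is to reduce the claim to the \indexshort Master Theorem (Theorem~\ref{thm:vbe-master}). The transformation $T_{\texttt{bribe}}$ changes only the utilities $U_{E,\players}$, leaving $\players$, $E$, and $\tokens$ fixed, so the total token supply is unchanged. Writing $B$ and $B'$ for the largest \clustershort clusters before and after the transformation, Theorem~\ref{thm:vbe-master} says that $\indexshort_{C_{\epsilon}, \min}$ strictly decreases across $T_{\texttt{bribe}}$ precisely when $\tokens(B') > \tokens(B)$ (the $\iff$ there upgrades to a strict one since $-\log_2$ is strictly decreasing). Hence it suffices to show $n_1 > n_2 \iff \tokens(B') > \tokens(B)$, and for that I would establish the identity $n = \max(q\cdot\tokens(\players) - \tokens(B^{\star}), 0)$, where $B^{\star}$ denotes the largest \clustershort cluster in the relevant configuration and $n$ is the minimum token weight the entity must bribe.

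To obtain this identity, recall that \emph{guaranteeing} an outcome of $\true$ in $E$ means forcing the $\true$-tally above $q\cdot\tokens(\players)$ no matter how everyone else votes; so the entity must gather a coalition of strictly more than $q\cdot\tokens(\players)$ tokens all committed to $\true$ in every election. The tokens it gets ``for free'' are those of the players who already strictly prefer $\true$ in every $e \in E$ --- i.e., whose utility vector has sign pattern $(+,\dots,+)$ --- and by the definition of \clustershort those players form exactly one cluster. Using the ``without loss of generality'' in the statement --- equivalently, the internal briber is rational and pushes the outcome it can secure most cheaply, namely the one aligned with the largest cluster, which we relabel as $\true$ --- this free cluster is precisely $B^{\star}$. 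Since $\tokens$ is real-valued the entity can top up its coalition by an arbitrarily fine amount of additional weight, so the least it must bribe is $q\cdot\tokens(\players) - \tokens(B^{\star})$ when that is positive and $0$ otherwise (the case where $B^{\star}$ alone already crosses the threshold).

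Granting the identity, the theorem falls out: $q\cdot\tokens(\players)$ is the same before and after, and $n = \max(q\cdot\tokens(\players) - \tokens(B^{\star}), 0)$ is weakly decreasing in $\tokens(B^{\star})$ and strictly decreasing wherever it is nonzero, so a brief case split on whether $n_1$ or $n_2$ vanishes gives $n_1 > n_2 \iff \tokens(B) < \tokens(B')$. Composing this with the strict form of Theorem~\ref{thm:vbe-master} yields $n_1 > n_2 \iff \indexshort_{C_{\epsilon}, \min}(E,\players,U'_{E,\players},\tokens) < \indexshort_{C_{\epsilon}, \min}(E,\players,U_{E,\players},\tokens)$, as required.

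The step I expect to be the crux is the identity $n = \max(q\cdot\tokens(\players) - \tokens(B^{\star}), 0)$ --- in particular, the claim that the cheapest way to guarantee the target outcome is to take the entire largest aligned cluster for free and then bribe exactly enough extra weight to cross $q\cdot\tokens(\players)$. Two points need care: (i) the ``WLOG'' is a genuine normalization only if we assume the would-be internal briber sits in (or can coordinate at no cost with) the largest cluster, so that its free set really is $B^{\star}$; and (ii) one must argue the entity cannot do strictly better by securing each election with a separate, election-specific coalition --- i.e., that the relevant free set is the all-elections-aligned cluster $B^{\star}$ rather than some union of per-election majorities. Pinning down (i) and (ii), or isolating the mild extra assumption that makes them hold, is where the real work sits; everything after is the bookkeeping above plus the appeal to the master theorem.
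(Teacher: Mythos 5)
Your proposal follows essentially the same route as the paper's proof: reduce to the Master Theorem via the identity $n_i = q\cdot\sum_{\player\in\players}\tokens(\player) - \tokens(B_i)$, where $B_i$ is the largest bloc, after observing that the cheapest internal briber is one sitting in that bloc. The two caveats you flag as the crux --- that the briber must be (assumed to be) in the largest all-elections-aligned cluster, and the $\max(\cdot,0)$ truncation when that cluster already crosses the threshold --- are precisely the points the paper asserts as a ``trivial observation'' and omits, so your version is if anything the more careful one.
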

\begin{proof} 
% We prove each direction separately.
% ($\implies$)
% ($\impliedby$) 
We first make the trivial observation that the minimum number of tokens that must be bought to guarantee an election outcome occurs when the bribing entity belongs to the largest voting bloc by token holdings. Let $B_1$ and $B_2$ be such blocs before and after $T_{\texttt{bribe}}$, respectively. By Theorem~\ref{thm:vbe-master}, we get that
    \begin{equation*}
        \tokens(B_2) > \tokens(B_1) \iff \indexshort_{C_{\epsilon}, \min}(E, \players, U'_{E, \players}, \tokens) < \indexshort_{C_{\epsilon}, \min}(E, \players, U_{E, \players}, \tokens).
    \end{equation*}

% let us now compute the relative size of such block in each of the two scenarios. By the definition of \indexshort, we get that

% \begin{alignat*}{3}
%     &    &\quad -\log_2 \Big(\frac{\max\limits_{\players' \in \players/\sim_{U'_{E, \players}}}\tokens(\players')}{\sum\limits_{\player \in \players} \tokens(P)}\Big) <&\ -\log_2 \Big(\frac{\max\limits_{\players' \in \players/\sim_{U_{E, \players}}}\tokens(\players')}{\sum\limits_{\player \in \players} \tokens(P)}\Big) \\
%     &  \iff  &\quad \frac{\max\limits_{\players' \in \players/\sim_{U_{E, \players}}}\tokens(\players')}{\sum\limits_{\player \in \players} \tokens(P)} <&\ \frac{\max\limits_{\players' \in \players/\sim_{U'_{E, \players}}}\tokens(\players')}{\sum\limits_{\player \in \players} \tokens(P)} \\
%     &  \iff  &\quad C_1 \coloneqq \max\limits_{\players' \in \players/\sim_{U_{E, \players}}}\tokens(\players') <&\ C_2 \coloneqq \max\limits_{\players' \in \players/\sim_{U'_{E, \players}}}\tokens(\players') .
% \end{alignat*}
Then, note that, for $i \in \{1, 2\}$,
\[n_i = q \cdot \sum\limits_{\player \in \players} \tokens(P) - \tokens(B_i)\]

It thus follows that $n_1 > n_2 \iff \tokens(B_2) > \tokens(B_1)$, i.e.,
\[n_1 > n_2 \iff \indexshort_{C_{\epsilon}, \min}(E, \players, U'_{E, \players}, \tokens) < \indexshort_{C_{\epsilon}, \min}(E, \players, U_{E, \players}, \tokens)\]
as desired.
% \begin{alignat*}{3}
% &    &\quad n_i = q \cdot \sum\limits_{\player \in \players} \tokens(P) - \tokens(B_i) \\
% &  \iff  &\quad \tokens(B_i) = q \cdot \sum\limits_{\player \in \players} \tokens(P) - n_i,
% \end{alignat*}

% it thus follows that

% \begin{alignat*}{3}
% &    &\quad q \cdot \sum\limits_{\player \in \players} \tokens(P) - n_2 >&\ q \cdot \sum\limits_{\player \in \players} \tokens(P) - n_1 \\
% &  \iff  &\quad n_1 >&\ n_2
% \end{alignat*}
% as desired.        
\end{proof}

The theorem above shows how, as a DAO becomes more decentralized, a higher number of tokens need to be corrupted to guarantee an election outcome, since all players are small to begin with. Conversely, in a more centralized DAO, large whales only need to corrupt a few tokens to guarantee their desired election outcome.

This result sheds light on the scale of bribery in the case where the briber is a malicious tokenholder a priori. Conversely, the briber may instead be some external entity. In this case, decentralization also raises the risk of systemic bribery: if there are large players in the system, the briber can directly coordinate with whales to achieve their desired election outcome. If, however, the DAO is highly decentralized, the outcome of the election depends on many stakeholders, which thus requires large-scale coordination among these. More formally:

\begin{thm}[External Bribery and \indexshort]
\label{thm:bribery3}
    Let $(E, U'_{E, \players}, \tokens) = T_{\texttt{bribe}}(\players, E, U_{E, \players}, \tokens)$ be the transformation where $U'_{E, \players}$ is some arbitrary change in the utilities of the players. The rest of the system remains unchanged. Let $n_1$ and $n_2$ be the minimum number of players that an external entity needs to corrupt to guarantee an outcome of $\true$ in elections $E$ before and after $T_{\texttt{bribe}}$, respectively. Then, it follows that

    \begin{equation}
        n_1 > n_2 \iff \indexshort_{C_{\epsilon}, \min}(E, \players, U'_{E, \players}, \tokens) < \indexshort_{C_{\epsilon}, \min}(E, \players, U_{E, \players}, \tokens).
    \end{equation}
    
\end{thm}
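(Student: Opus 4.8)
The plan is to reuse the machinery of Theorem~\ref{thm:bribery2} almost verbatim, substituting ``number of players corrupted'' for ``number of tokens bought'' and invoking the Master Theorem (Theorem~\ref{thm:vbe-master}) as the final step. Concretely: (i) argue that the player-minimizing strategy for the external entity is to coordinate with, or absorb, the single largest \clustershort cluster $B_i$ and then corrupt just enough additional players to push the tokens voting $\true$ past the quorum threshold $q \cdot \tokens(\players)$; (ii) show that $n_i$ is a monotonically non-increasing function of $\tokens(B_i)$, so that $n_1 > n_2 \iff \tokens(B_2) > \tokens(B_1)$, where $B_1, B_2$ are the largest \clustershort clusters before and after $T_{\texttt{bribe}}$; (iii) apply Theorem~\ref{thm:vbe-master}, together with strict monotonicity of $-\log_2$, to get $\tokens(B_2) > \tokens(B_1) \iff \indexshort_{C_{\epsilon}, \min}(E, \players, U'_{E, \players}, \tokens) < \indexshort_{C_{\epsilon}, \min}(E, \players, U_{E, \players}, \tokens)$; and (iv) chain (ii) and (iii).

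For step (i) I would argue, exactly as in Theorem~\ref{thm:bribery2}, that a voting bloc already behaves as a single coordinated entity, so ``picking up'' the largest bloc buys the most voting weight for the least coordination effort, and hence an optimal corruption set may be taken to contain it. For step (iii), because $T_{\texttt{bribe}}$ holds the total token supply fixed, the Master Theorem applies directly with $B_1, B_2$ as the largest clusters before and after; and since division by $\tokens(\players)$ and $-\log_2$ are strictly monotone, strict token inequalities translate into strict \indexshort inequalities, which is what yields the strict ``$<$'' in the statement. The remaining bookkeeping in (ii)--(iv) is routine and I would not belabor it.

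The hard part is step (ii). Unlike the internal-briber case of Theorem~\ref{thm:bribery2}, where the quantity of interest was itself a token count and the identity $n_i = q \cdot \tokens(\players) - \tokens(B_i)$ made everything exact, here the translation from a ``token shortfall'' to a ``player count'' depends on the within-bloc cardinalities and on how tokens are spread among the smaller players --- data that $T_{\texttt{bribe}}$ is free to leave untouched but that $\indexshort_{C_{\epsilon}, \min}$, being min-entropy over blocs alone, never sees. In particular, ``consolidating'' the largest bloc is free in player count only if that bloc is already aligned with the briber's target outcome; for a genuinely arbitrary utility change it could instead cost $|B_i|$ corruptions, and a token-heavy bloc can still have large cardinality (e.g., a mass of tiny apathetic holders). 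I therefore expect the clean ``$\iff$'' to require an implicit modeling assumption --- corruption counted per bloc rather than per account, or the largest bloc taken to be the external entity's natural coalition --- and I would state that assumption explicitly rather than assert the bare equivalence. Everything downstream of (ii) then goes through unchanged.
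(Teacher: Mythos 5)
Your proposal follows essentially the same route as the paper: identify the largest \clustershort blocs $B_1, B_2$ before and after the transformation, invoke Theorem~\ref{thm:vbe-master} to convert the token comparison into the \indexshort comparison, and close the loop by making $n_i$ a strictly decreasing function of $\tokens(B_i)$. Your worry about step (ii) is well placed but is resolved in the paper by fiat --- it simply posits $n_i = q\cdot\sum_{\player}\tokens(\player)/\tokens(B_i)$, i.e., it implicitly counts ``players'' at the granularity of blocs of the largest bloc's size, which is precisely the unstated modeling assumption you predicted would be needed (and which you are right to say should be made explicit).
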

\begin{proof}
    This proof is very similar to that of Theorem~\ref{thm:bribery2}. Let $B_1$ and $B_2$ be the largest blocs by token holdings before and after $T_{\texttt{bribe}}$, respectively. By Theorem~\ref{thm:vbe-master}, we get that
    \begin{equation*}
        \tokens(B_2) > \tokens(B_1) \iff \indexshort_{C_{\epsilon}, \min}(E, \players, U'_{E, \players}, \tokens) < \indexshort_{C_{\epsilon}, \min}(E, \players, U_{E, \players}, \tokens).
    \end{equation*}
    Then, note that, for $i \in \{1, 2\}$:
    \[ n_i = \frac{q \cdot \sum\limits_{\player \in \players} \tokens(P)}{\tokens(B_i)}\]

    It thus follows that $n_1 > n_2 \iff \tokens(B_2) > \tokens(B_1)$, i.e.,
    \[n_1 > n_2 \iff \indexshort_{C_{\epsilon}, \min}(E, \players, U'_{E, \players}, \tokens) < \indexshort_{C_{\epsilon}, \min}(E, \players, U_{E, \players}, \tokens)\]
    as desired.
    
\end{proof}
We make the important note that, to acquire a fixed number of target tokens (i.e., in the case where the briber is an external actor), bribing a larger set of smaller players is, in fact, \emph{cheaper} than bribing a smaller set of whales to acquire the same number of tokens. Intuitively, larger players are more ``pivotal''~\cite{weyl2017robustness}, i.e., have a greater influence on election outcomes, and thus are more expensive to bribe. As such, decentralization \emph{decreases} the cost to mount a bribery attack on a DAO. We discuss this in detail in Section~\ref{subsec:Dark_DAO_goals}).

Systemic bribery has long been recognized as one of the main threats to traditional elections, and we have now shown that this is also the case for DAOs. However, bribery is not considered a realistic concern in secret ballot elections, due to the fact that such large scale vote buying would be logistically and economically infeasible to coordinate, and would be traceable. Further, rational vote sellers would simply take the bribe but still vote according to their preferences, instead of following the briber's demands. Looking ahead, we will show in Section~\ref{sec:dark-daos} that, conversely, bribery in DAOs can be done via cost-free, untraceable mechanisms, which guarantee fair exchange. Thus, bribery is a realistic and practical threat for DAO elections.

\subsection{Quadratic Voting}\label{subsec:quad-voting}
Quadratic voting~\cite{lalley2018quadratic} is a voting mechanism that attempts to dilute the influence of whales on election outcomes. To do so, a vote from a player that owns $n$ tokens will only have an impact of $\sqrt{n}$ in the outcome election. At face value, quadratic voting seems to make a system more decentralized: the quadratic ``tax'' is directly proportional to the number of tokens a player owns, which thus shrinks the gap between smaller players and whales. However, quadratic voting is known to be vulnerable to sybil attacks and other forms of malicious coordination~\cite{weyl2017robustness,park2017towards}, and thus may have a \emph{centralizing} effect: players that are in large voting blocs implicitly subvert the quadratic tax due to the fact that their true token count is split amongst all bloc members. As a concrete example, consider a quadratic voting system with no verification of real-world identities. In this case, a whale can divide her tokens amongst multiple accounts, which increases the impact that her votes have on the election outcome. (In fact, as we show in Section~\ref{subsec:Dark_DAO_goals}, a whale can subvert quadratic voting even in the presence of robust identity mechanisms.)

Traditional DAO decentralization metrics fail to capture this attack on quadratic voting, since they do not reason about the relationship between the individual accounts. Conversely, \indexshort would group together all the accounts under the control of the same entity as part of the same voting bloc, and thus concluding that decentralization has decreased. This is analogous to our result from Theorem~\ref{thm:owning-multiple-accounts}, which shows that, in general, splitting tokens across multiple accounts does not increases decentralization. In the case of quadratic voting, this mechanisms actually strictly decreases decentralization, since the votes of this smaller accounts have a higher impact on the election outcome.

\paragraph{Quadratic voting and bribery.} 
Similar to a whale splitting off her tokens into multiple accounts, a set of colluding players can have a greater impact in the election outcome if quadratic voting is employed. Namely, quadratic voting may have the surprising consequence of \emph{decreasing} the cost of bribery. The high-level idea is that quadratic voting ``amplifies'' the power of small accounts, which may be cheaper to bribe. Thus, for the same cost, a briber is able to have a greater impact on the outcome of an election.

Prior work has informally identified these issues in the context of traditional elections~\cite{weyl2017robustness,park2017towards}, and DAOs specifically~\cite{philquadvotingdaos}. For the former, as discussed at the end of Section~\ref{subsec:vbe-and-bribery}, large-scale collusion is not considered a realistic threat, and thus the fact that bribery can have a bigger impact on an election outcome if quadratic voting is employed is not seen as a practical limitation. However, since our Dark DAO prototype from Section~\ref{sec:dark-daos} makes bribery inexpensive and guarantees fair exchange, bribery poses a realistic threat to quadratic voting (and any blockchain-based voting scheme for that matter).

Our formalism captures this relationship between quadratic voting and bribery. We define ``small'' accounts to be, concretely, those whose fraction of the total tokens increases with quadratic voting in place, and thus have their impact amplified. More formally, we denote that a player $\player \in \players$ benefits from quadratic voting by $\qv(\player, \tokens) = 1$, where 

\[\qv(\player, \tokens) = 1 \iff \frac{\tokens(\player)}{\sum\limits_{p \in \players} \tokens(p)} < \frac{\sqrt{\tokens(\player)}}{\sum\limits_{p \in \players} \sqrt{\tokens(p)}}.\]

The relationship between quadratic voting and bribery hinges on whether the cost to bribe a player is the same with or without quadratic voting. That is, does the monetary utility of a player in an election change if the impact of their vote changes? If not, then the cost to bribe a small player is the same, but the impact is greater. Otherwise, the cost to bribe the player increases as their relative power increases, which offset each other.

Whether quadratic voting changes a player's utility or not will vary across systems. Broadly speaking, if DAO members take governance seriously and are invested in election outcomes, quadratic voting indeed changes utilities: since smaller accounts become more ``pivotal'' as a result of quadratic voting, their utilities increase correspondingly. Conversely, if members have little interest in governance, the fact that their vote can now have a greater impact in the election will not change their utilities. As such, the nature of a community must be taken into account when deciding to use quadratic voting.

\begin{thm}[Quadratic Voting and Bribery]
\label{thm:quadratic-voting}
Let $(E', U_{E', \players}, \tokens) = T_{\texttt{quad}}(\players, E, U_{E, \players}, \tokens)$ be the transformation where all elections $E$ employ quadratic voting. We denote the election corresponding to $e \in E$ by $e' \in E'$. Let $f$ and $f'$ be the fraction of total votes that a bribing entity is able to control for some fixed expenditure $t$ in elections $E$ and $E'$, respectively. Then, it follows that
    \begin{equation*} 
        f < f' \iff \exists \hat{\players} \subseteq \players \mid \forall \player \in \hat{\players},\ \Big(\qv(\player, \tokens) = 1 \wedge U_{E, \player} = U_{E', \player}\Big)
    \end{equation*}
\end{thm}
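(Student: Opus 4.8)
The plan is to reduce both directions of the equivalence to a per-player accounting resting on two structural facts. \emph{Additivity:} the fraction of the token-weighted vote commanded by a coalition $S$ equals $\sum_{\player\in S}\tokens(\player)/\sum_{p\in\players}\tokens(p)$, and under quadratic voting it equals $\sum_{\player\in S}\sqrt{\tokens(\player)}/\sum_{p\in\players}\sqrt{\tokens(p)}$; each per-player share sums to $1$ over $\players$, so $T_{\texttt{quad}}$ merely redistributes vote share, and by definition $\qv(\player,\tokens)=1$ is exactly the event that $\player$'s own share strictly increases. \emph{Price locality:} by the DAO abstraction the cost of bribing $\player$ in an election $e$ to the briber's target outcome is $\max(\pm 2\,\util_\player(e,\true)+\epsilon,0)$, with the sign set by the flip direction, which depends on $U_{E,\player}$ alone; hence the total outlay on a coalition is a termwise sum of such prices, a player with $U_{E,\player}=U_{E',\player}$ is equally cheap to bribe in $E$ and $E'$, and --- matching the discussion before the theorem, in which utilities either rise (for invested voters) or stay fixed (for disengaged ones) --- we adopt the convention that $T_{\texttt{quad}}$ never lowers a player's bribery price. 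I also read "for some fixed expenditure $t$" as "there is an expenditure $t$" realizing $f<f'$.

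For $(\Leftarrow)$, given a nonempty witness $\hat{\players}$ of the right-hand side, have the briber bribe exactly $\hat{\players}$ for $t\coloneqq\sum_{\player\in\hat{\players}}(\text{cost of bribing }\player)$. Since each $\player\in\hat{\players}$ has $U_{E,\player}=U_{E',\player}$, the identical outlay $t$ buys $\hat{\players}$ in both worlds; since each has $\qv(\player,\tokens)=1$ and $\hat{\players}\neq\emptyset$, additivity gives $f=\sum_{\player\in\hat{\players}}\tokens(\player)/\sum_p\tokens(p)<\sum_{\player\in\hat{\players}}\sqrt{\tokens(\player)}/\sum_p\sqrt{\tokens(p)}=f'$. (If the briber itself holds tokens its own share is an additive offset on both sides and does not disturb the strict inequality.)

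For $(\Rightarrow)$, interpret the common budget as: the briber bribes a single coalition $S$ whose total outlay equals $t$ in each world. Because no bribed player is cheaper under $T_{\texttt{quad}}$, equating the outlays $\sum_{\player\in S}(\text{cost in }E)=t=\sum_{\player\in S}(\text{cost in }E')$ equates two termwise-non-decreasing sums, forcing every bribed player's price to be unchanged; away from the degenerate elections whose price is clamped at $0$ --- where the player already favors the briber's outcome and needs no bribe --- this gives $U_{E,\player}=U_{E',\player}$ for all $\player\in S$. Then $f=\sum_{\player\in S}\tokens(\player)/\sum_p\tokens(p)$ and $f'=\sum_{\player\in S}\sqrt{\tokens(\player)}/\sum_p\sqrt{\tokens(p)}$, so $f<f'$ forces $\sum_{\player\in S}\big(\sqrt{\tokens(\player)}/\sum_p\sqrt{\tokens(p)}-\tokens(\player)/\sum_p\tokens(p)\big)>0$, hence $\qv(\player,\tokens)=1$ for at least one $\player\in S$. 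The set $\hat{\players}\coloneqq\{\player\in S:\qv(\player,\tokens)=1\}$ is then a nonempty witness for the right-hand side: its members have $\qv=1$ by construction and $U_{E,\player}=U_{E',\player}$ by the step above.

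The main obstacle, and the one place the modeling choice does real work, is $(\Rightarrow)$. If the briber were instead free to re-optimize the bribed coalition independently in each world, the equivalence could break: a player whose quadratic power rose but whose price also rose might be swapped out for several cheaper players, so neither "a witness exists $\Rightarrow$ the briber can do strictly better" nor its converse is automatic. Reading "fixed expenditure" as equal outlay on a common bribed coalition is exactly what lets the "never cheaper under $T_{\texttt{quad}}$" convention promote equality of total outlays to equality of each bribed player's price, after which the sign of $f'-f$ is controlled solely by how many bribed players have $\qv=1$. Finally, a consistency check confirms tightness: if all token balances are equal then $\qv\equiv 0$, no witness exists, and indeed $f=f'$, so the equivalence is not vacuously true.
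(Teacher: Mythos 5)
Your $(\Leftarrow)$ direction is essentially the paper's entire proof: the paper also bribes exactly the witness set $\hat{\players}$, observes that $U_{E,\player}=U_{E',\player}$ makes the outlay $t$ identical in both worlds, and then sums the per-player inequality $\tokens(\player)/\sum_p\tokens(p) < \sqrt{\tokens(\player)}/\sum_p\sqrt{\tokens(p)}$ over $\hat{\players}$ to get $f<f'$. Where you genuinely diverge is that the paper stops there: despite stating the theorem as an equivalence, its proof never addresses $(\Rightarrow)$ (its displayed chain of ``$\iff$''s is really only a one-way implication, since a sum being strictly ordered does not force every summand to be). Your $(\Rightarrow)$ argument is therefore an addition, and you are right that it cannot go through without extra modeling conventions --- that the briber commits to a common coalition $S$ with equal outlay in both worlds, and that $T_{\texttt{quad}}$ never lowers a player's bribery price. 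The paper gestures at the second convention in the discussion preceding the theorem (utilities either rise for invested voters or stay fixed for disengaged ones) but never invokes it formally, so your proof is the more complete and more honest one; the trade-off is that your equivalence holds only under conventions the theorem statement does not pin down, which is exactly the obstacle you flag. One further point both treatments must patch: as literally written, the right-hand side is satisfied vacuously by $\hat{\players}=\emptyset$, so the biconditional is false unless the witness is required to be nonempty --- you note this, the paper does not. Your additivity/price-locality decomposition and the tightness check (equal balances give $\qv\equiv 0$ and $f=f'$) are sound.
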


\begin{proof}
Recall from Section~\ref{sec:VBE_def} that the cost of bribing all players in $\hat{P}$ to vote for, without loss of generality, $\false$ in all elections $E$ is
\[t = \sum\limits_{p \in \players} \sum\limits_{e \in E'} \max(2\cdot\util_{\player}(e,\false) + \epsilon, 0)\]

Since, by assumption, $\forall \player \in \hat{\players}\ \forall e \in E,\ \util_{\player}(e,\false) = \util_{\player}(e',\false)$, it follows that the cost to bribe all players in $\hat{P}$ across all elections $E'$ is also $t$. Then, since all players in $\hat{P}$ benefit from quadratic voting, the bribing entity has thus managed to control a larger fraction of the total votes for this same expenditure: by definition, for all $\player \in \players$,

\begin{alignat*}{3}
&    &\quad \frac{\tokens(\player)}{\sum\limits_{p \in \players} \tokens(p)} <&\ \frac{\sqrt{\tokens(\player)}}{\sum\limits_{p \in \players} \sqrt{\tokens(p)}} \\
&  \iff  &\quad \sum\limits_{\player \in \hat{\players}} \Big(\frac{\tokens(\player)}{\sum\limits_{p \in \players} \tokens(p)} \Big)<&\ \sum\limits_{\player \in \hat{\players}} \Big(\frac{\sqrt{\tokens(\player)}}{\sum\limits_{p \in \players} \sqrt{\tokens(p)}}\Big) \\
&  \iff  &\quad f <&\ f'
\end{alignat*}
as desired.
\end{proof}

This result thus shows that quadratic voting may be favorable for a bribing entity. In particular, if there are enough small voters whose utilities are unchanged, the cost to guarantee successful bribery decreases:
\begin{corollary}
Assume that, for $\hat{P}$ as defined in Theorem~\ref{thm:quadratic-voting}, $\tokens(\hat{P}) > q \cdot \sum_{\player \in \players} \tokens(P)$. Let $t$ and $t''$ be the expenditure required to guarantee an outcome of $\true$ in elections $E$ and $E'$, respectively. Then, it follows that $t' < t$.
\end{corollary}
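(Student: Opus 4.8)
The plan is to reuse the bribe-cost bookkeeping from the proof of Theorem~\ref{thm:quadratic-voting}, read now in the ``dual'' direction: instead of fixing the spend $t$ and comparing the controlled vote fractions $f,f'$, I fix the target fraction $q$ and compare the required spends. The hypothesis $\tokens(\hat{\players}) > q\cdot\sum_{\player\in\players}\tokens(\player)$ is what makes this work: it says the amplified, utility-invariant set $\hat{\players}$ is on its own a \emph{pivotal coalition}, so bribing every $\player\in\hat{\players}$ to vote $\true$ already guarantees the outcome in $E$. By the utility-invariance clause $U_{E,\player}=U_{E',\player}$ for $\player\in\hat{\players}$, this same bribe costs the identical amount in $E'$ --- exactly the equality of costs established midway through the proof of Theorem~\ref{thm:quadratic-voting} --- and, because every member of $\hat{\players}$ benefits from quadratic voting, in $E'$ that same bribed set controls the strictly larger vote fraction $f' > f = \tokens(\hat{\players})/\sum_{\player}\tokens(\player) > q$.

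The second step converts the strict gap $f' > f$ into a strict drop in required expenditure. Write $t = \sum_{\player\in\hat{\players}}\sum_{e\in E}\max(2\util_\player(e,\false)+\epsilon,0)$ for the cost of guaranteeing $\true$ in $E$ via $\hat{\players}$. In $E'$ the coalition $\hat{\players}$ overshoots the threshold $q$ with room to spare, so some proper subcoalition $\hat{\players}'\subsetneq\hat{\players}$ still has $E'$-vote fraction above $q$; bribing only $\hat{\players}'$ therefore still guarantees $\true$ in $E'$, at cost $\sum_{\player\in\hat{\players}'}\sum_{e\in E}\max(2\util_\player(e,\false)+\epsilon,0)$ (same per-player costs, by utility invariance), which is a sub-sum of the one defining $t$, hence at most $t$, and strictly below $t$ once the discarded players carry positive bribe cost. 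Combining with $t' \le \operatorname{cost}_{E'}(\hat{\players}')$ yields $t' < t$.

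The main obstacle is making the trimming step airtight, and it has two facets. First, one must be able to \emph{choose} which members of $\hat{\players}$ to discard so that the remaining $E'$-vote fraction stays $> q$ \emph{and} a strictly positive bribe cost is actually removed; this holds generically --- the members of $\hat{\players}$ are individually small (precisely why they are quadratically amplified) and collectively hold more than $q\cdot$total, so no single member is load-bearing --- but the degenerate case where every $\player\in\hat{\players}$ has zero bribe cost (e.g.\ all already strictly prefer $\true$, so $2\util_\player(e,\false)+\epsilon<0$) forces $t=t'=0$ and must be excluded, e.g.\ by assuming $t>0$. Second, one must settle whether ``the expenditure required to guarantee $\true$ in $E$'' denotes the cost of the canonical strategy on $\hat{\players}$ (as implicitly used in the proof of Theorem~\ref{thm:quadratic-voting}) or the global minimum over all guaranteeing bribes; under the latter reading the cleaner route is a marginal ``vote-fraction-per-dollar'' argument restricted to $\hat{\players}$, using that quadratic voting strictly improves the fraction each dollar buys on exactly this sub-population while leaving its bribe cost fixed.
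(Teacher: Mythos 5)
Your core idea matches the paper's: both arguments read Theorem~\ref{thm:quadratic-voting} in the dual direction, fixing the target fraction $q$ instead of the expenditure. The mechanics differ, though. The paper transports the optimal $E'$-expenditure $t'$ \emph{back} to $E$: by the theorem, $t'$ buys only a fraction $q-\epsilon$ of the vote in $E$, so strictly more must be spent there to cross the threshold, and $t' < t$ follows with no combinatorics. You transport the coalition $\hat{\players}$ \emph{forward} to $E'$: it clears $q$ in $E$ at cost $t$, overshoots in $E'$ at the same cost (by utility invariance), and you then trim to a proper subcoalition to save money. That trimming step is the weak link, and your own diagnosis of it is accurate but incomplete: besides the zero-bribe-cost degeneracy you flag, there is a discreteness problem --- the overshoot $f'-q$ may be smaller than the quadratic weight share $\sqrt{\tokens(\player_0)}/\sum_{\player\in\players}\sqrt{\tokens(\player)}$ of every positive-cost member $\player_0$, in which case no whole player can be discarded and your argument yields only $t' \le t$. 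The paper's direction of travel sidesteps the trimming entirely, though its own two-sentence proof is equally silent about strictness when the marginal bribe cost vanishes, so the looseness you identify in the corollary (canonical versus minimal expenditure; the possibility $t = t' = 0$) is genuinely present in the paper as well and is not a defect unique to your route. If you keep your direction, the clean repair is the marginal argument you sketch at the end: per unit of expenditure on $\hat{\players}$, quadratic voting strictly increases the vote fraction acquired while leaving per-player costs fixed, so the cost of reaching any fixed fraction $q$ attainable within $\hat{\players}$ strictly decreases.
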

This corollary simply follows from the fact that, as proved in Theorem~\ref{thm:quadratic-voting}, the expenditure $t'$ required to control a fraction of $q$ votes in $E'$, and thus guarantee successful bribery in $E'$, would only be enough to acquire a fraction of $q - \epsilon$ votes in $E$. As such, some additional expenditure is required to cross the threshold of $q$ votes.
\section{Practical Considerations of \indexshort}
\label{sec:vbe-practical-considerations}
Since \indexshort is a theoretical metric, it serves primarily as a conceptual tool to reason about how certain decisions or events may impact the decentralization of DAOs. However, \indexshort can also be estimated by using real-world data alongside a particular instantiation of our framework. In this section, we discuss some directions towards this, and the limitations of this approach (and our model more broadly). We note, however, that an empirical study of DAOs, such as concretely computing \indexshort for popular DAOs based on on-chain data, is left as future work.

\paragraph{Limitations of our formal model.} Our DAO abstraction from Section~\ref{sec:VBE_def}, which underpins \indexshort, makes several assumptions, which need not be true in practice. For example, we only consider binary elections, whereas DAO proposals can involve multiple options, e.g., Optimism's process for (retroactively) funding public goods, where votes explicitly expressed the allocation of funds to different organizations~\cite{OptimismPGF}. We note, however, that our model can be reframed to consider arbitrary elections, as this would simply involve using a clustering metric for \indexshort that takes into account multiple potential election outcomes when partitioning $\players$.

A more important limitation is the fact that we assume token holdings remain constant across elections. For simplicity and clarity of presentation, we deem this to be sufficient due to the conceptual nature of our theoretical results. Further, our model can be modified to assume variable token holdings. For example, $\tokens(\player)$ can be defined to be the maximum number of tokens held by $\player$ at any point across all elections $E$. We leave such extensions as future work.

\paragraph{Measuring \indexshort.} As we have emphasized throughout this work, \indexshort is a theoretical notion and cannot be measured directly. This is due to the fact that utility functions are \emph{latent variables}~\cite{Latent:2023}, which are not directly measurable. This is an inherent limitation of any metric that depends on utility functions, including important results and models from voting theory, e.g.,~\cite{weyl2017robustness,duffy2008beliefs}.

Due to this limitation, \indexshort is most useful as a theoretical tool to estimate the directional impact of policy choices in decentralization. However, \indexshort does lend itself to indirect measurement: latent variables, such as utility functions, can be estimated via \emph{observable variables}, which are indeed measurable. In our context, observable variables may be gathered from on-chain data (such as past voting history), low-cost straw polls, social dynamics, etc.

The accuracy of estimating \indexshort via observable variables will depend, to a large extent, on the specific clustering algorithm with which \indexshort is instantiated, and how much ``information'' is required from the utility funtions in order to partition $\players$. For example, a trivial partition of $C(U_{E, \player_i}, U_{E, \player_j}) = 1 \iff U_{E, \player_i} = U_{E, \player_j}$ will yield a less accurate estimate than some other function which only takes into account the voting history of the players.

Practitioners can thus use \indexshort to derive concrete metrics, and analyze the real-world behavior of a system, by initialize our framework with a particular clustering metric and entropy function, and using observable variables to estimate utility functions of players. Deciding which clustering and entropy functions to use requires careful consideration of what data is available. In general, more granular notions of \indexshort are, of course, more informative. For example, Shannon entropy yields more nuanced results than min-entropy, and a clustering metric based on cardinal utilities will result in blocs of players that are more closely aligned. However, such functions may require data that is not easy to gather, or may not even be tractable at all. As such, there is a trade-off between how informative \indexshort is, and how easy it is to compute.

In the particular case of $\indexshort_{C_e, \min}$, for example, historical voting data is sufficient for \clustershort, since clusters are assigned based on ordinal utility. If we assume voters are rational, we can extrapolate this from the casted votes: for any player $\player$ and election $e$, it follows that

\begin{align*}
\vote_\player(e) = \true &\iff \util_\player(e, \true) > \epsilon \\
\vote_\player(e) = \false &\iff \util_\player(e, \false) > \epsilon \\
\vote_\player(e) = \bot &\iff |\util_\player(e, \true)| < \epsilon.
\end{align*}

Even though we cannot extrapolate the exact value of $\util_\player(e, \true)$ based on election outcomes, the equations above are sufficient to use $C_\epsilon$ to cluster players. We stress, however, that different instantiations of the \indexshort framework may require different measurement techniques.

Another natural limitation of \indexshort is that, given that it is a framework, two instances of \indexshort are not directly comparable. That is, in order to reason about the relative level of decentralization between two DAOs, or how decentralization has fluctuated over time for a single DAO, the same variant of \indexshort must be used. In practice, however, we expect that broad \indexshort adoption would involve a handful of standard parameters agreed upon by the DAO community.

\paragraph{Limitations of $\indexshort_{C_e, \min}$.} In addition to the general limitations described above, each variant of \indexshort may pose additional constraints. In the case of $\indexshort_{C_e, \min}$, we lose most of the information provided by all voting blocs except the largest one, since min-entropy is only a function of the latter. This does not imply that the analysis is not accurate, as all subsequent blocs are strictly smaller than the one our definition focuses on,  but rather that other entropy notions may yield additional insights; indeed, min-entropy is always less than or equal to Shannon entropy and max-entropy~\cite{enwiki:1163772345}.

Our clustering metric, \clustershort, is also quite strict, as it does not reason about voters who are aligned in most (but not all) elections. One could instead consider, for example, a generalization \clusterlong that is parametrized by the fraction of elections two players must agree on to be considered part of the same cluster. We opted for the simpler variant in this work, as it serves as a proof-of-concept for our theoretical results; more general clustering metrics would yield the same conceptual conclusions, while making the theorem statements and proofs more opaque with orthogonal mathematical details.

\paragraph{Data Collection.} Even though blockchain-based elections are public, extrapolating relevant observable data for analyzing \indexshort (and other decentralization metrics) is surprisingly difficult. Indeed, as prior work has also pointed out, ``in practice it is not trivial to acquire all governance related information from raw blockchain data''~\cite{feichtinger2023hidden}. To aid the analysis and computation of \indexshort, we thus propose that DAOs publish relevant statistics in a way that is easy to understand and use.

We propose that DAOs choose and specify a variant of \indexshort to support, which then guides how to present voter data. For \clustershort, DAOs can keep a log mapping all token holders to a list of the elections they were eligible for, denoting their vote (if any). Other variants of \indexshort may require more detailed information. Feichtinger et al.~\cite{feichtinger2023hidden} successfully extrapolated a vast amount of governance-related data from 21 DAOs along multiple axes (albeit noting that it was surprisingly challenging), and open-sourced their data set in the form of ``subgraphs'' from The Graph protocol~\cite{thegraph}. Their work may serve as inspiration for user-friendly ways to present voter data: each DAO could implement a publicly-accessible subgraph of governance data, maintained either by a dedicated set of curators or by the community more broadly.
\section{Dark DAOs: Overview}\label{sec:dark-daos}

A \textit{Dark DAO} is itself a DAO, but one whose objective is to subvert a system of decentralized credentials and thereby to target, e.g., voting in another DAO or DAOs. Dark DAOs were first introduced in a 2018 blog post~\cite{darkdaohack}. 
We have shown in~\Cref{sec:in-practice} that as decentralization increases, the cost of a bribery attack rises. Consequently, the need arises for a briber to perform broad coordination, as there is a need to target more users. Thus the threat of Dark DAO deployment increases.

In this section, we briefly explain what Dark DAOs are, giving an informal definition in~\Cref{subsec:Dark_DAO_def}. We outline their main design principle, \textit{key encumbrance}, in~\Cref{subsec:key_encumbrance}. We explain the various ways in which they can disrupt votes in target DAO~\Cref{subsec:Dark_DAO_goals}. 

\subsection{Dark DAO Definition}
\label{subsec:Dark_DAO_def}

% A \textit{Dark DAO} is defined as a decentralized coordination tool governed by smart contracts \james{... or something else that qualifies as ``decentralized''} whose utility increases when its participants to be indistinguishable from unaffiliated participants within a larger system while aligning their utility functions. \james{This definition tries to incorporate both the malicious aspects of Dark DAOs (e.g. bribery, throwing doubt that elections were good, etc.) and the ConstitutionDAO application.}

% \james{Thoughts during our discussion: Properties: either members are anonymous, or the entire thing is invisible?. A Dark DAO exists whether or not it is successful.}

% Acknowledge that the Dark DAO in the blog post points as something broader
% Central points: coordination and opacity

A \textit{Dark DAO} is defined specifically in~\cite{darkdaohack} as a ``decentralized cartel that buys on-chain votes opaquely.'' We believe that a broader definition is more informative---one that encompasses any corruption of any system of credentials, whether used for voting or other purposes.
Like an ordinary DAO, a Dark DAO designed to be trust-minimized: it ensures that a bribe is ``fair,'' i.e., a bribee receives money from a briber iff the briber gains agreed-upon access to the bribee's credential(s). Additionally, a Dark DAO is ``opaque'' in the sense of ensuring that participation is \textit{private}.

Informally, then, a Dark DAO has the following three key properties: 
\begin{enumerate}
    \item \textbf{Opacity:} Participants in a Dark DAO are indistinguishable on chain from other credential holders. (Consequently, statistics like the number of participants in a Dark DAO are also hidden.)
    \item \textbf{Fair exchange:} Once a bribee commits to accepting a briber's offered bribe, the briber obtains access to the bribee's credential and the bribee is paid the bribe. 
    \item \textbf{Bounded scope:} A bribee who participates in a Dark DAO contributes no resource to the Dark DAO beyond a committed credential and pre-agreed-upon costs. (E.g., the bribee may also pay normal transaction fees.) 
\end{enumerate}

\paragraph{Example (voting):} A Dark DAO that aims to corrupt voting in a target DAO would involve voters (bribees) selling their votes to a vote buyer (briber). \textit{Opacity} would mean that bribees are indistinguishable from other voters in the target DAO. \textit{Fair exchange} would mean that the briber pays a pre-agreed-upon amount to a bribee iff the bribee's vote is cast as the briber prescribes in a particular election. Finally, \textit{bounded scope} means in this context that the bribee can use her voting credential in an unrestricted way outside of the election in question. 

\paragraph{Remark:} Fair exchange requires not just that a briber gain access to a user's credential, but that the credential be usable in a pre-agreed upon way. For example, if the briber gains access to the bribee's credential, and the bribee is paid, but bribee can revoke the credential before use by the briber, then the exchange is not fair. To capture such subtleties, a more formal definition of fair exchange may be couched in terms of a universe of possible target-system states ${\cal S}$ and a transcript $T = \{S_1, S_2, \ldots\}$ for $S_i \in {\cal S}$ of state transitions. Fair exchange means that for ${\cal S}$ and a set of transcripts ${\cal T}$---both agreed upon by the briber and bribee---$T \in {\cal T}$ for the  transcript $T$ of the target system's history. We defer the development of such formalism to future work. 

\subsection{Main tool: Key encumbrance}
\label{subsec:key_encumbrance}

The main mechanism by which a Dark DAO achieves its properties is \textit{key encumbrance}~\cite{kelkar2023complete}. Key encumbrance is a form of life-cycle management for keys that facilitates selective delegation. In the case of voting, it enables a Dark DAO to ensure that delegated keys are used to cast the votes to which bribees have committed, but gives the Dark DAO no further control over encumbered keys. 

There are two main tools that can enforce such delegation in principle in a way that does not require use of a trusted third party: secure multiparty computation (MPC)~\cite{goldreich-book} and trusted execution environments (TEEs)~\cite{mckeen2013innovative,mckeen2016intel}. 

TEEs are the more practical, particularly as we demonstrate below that existing hardware-based TEE systems are sufficient to realize Dark DAOs. Such TEEs enable applications to run in an integrity- and confidentiality-protected environment.

Dark DAO running in a TEE creates an encumbered key, or imports an already encumbered key $\sk$. 
The briber and the bribee can request use of $\sk$ from the dark dao, which ensures 
% ---serving as a voting credential for the briber, and any other key ---that is provided to the briber for voting for encumbrance by a bribee. It also takes control of briber cryptocurrency funds. It ensures 
compliance with the Dark DAO policy, i.e., enforces the properties enumerated in~\Cref{subsec:Dark_DAO_def}. In the particular case of voting, the briber can use $\sk$ as a voting credential, while the bribee can use $\sk$ to manage their cryptocurrency and interact with smart contracts in a way that does not violate the fair exchange property.

\subsection{Dark DAO goals}
\label{subsec:Dark_DAO_goals}
Globally, the goal of a Dark DAO is to subvert voting in a target DAO. There are a number of ways in which it can do this, of which we enumerate several here. 

\paragraph{Vote buying:} A briber desiring a given election outcome (e.g., a ``yes'' vote) can simply offer payment for votes toward this outcome---where payments are scaled to the weight associated with a given bribee's vote (e.g., proportional to her DAO token holdings). Various forms of conditional payment are also possible, e.g., paying bribes only if the desired outcome is achieved or offering a fixed payment for distribution across the total population of bribees. 

We note too that vote buying works not just for systems in which votes are weighted by token holdings, but also ``one-vote-per-person'' systems, e.g.,~\cite{Napolitano:2023,Worldcoin:2023}. In such cases, an encumbered key $\sk$ might be a user's credential in a decentralized identity system, e.g., in Gitcoin Passport or Worldcoin.

A Dark DAO can further increase the threat of so-called cost-less bribery. For instance, B{\'o}~\cite{bo2007bribing} introduces ``pivotal'' bribes as a way to bribe voters at virtually no cost. Consider a binary vote where the final result is the option chosen by the majority of voters (for simplicity, assume an odd number of users $n$) and suppose the utility of a user for a ``yes'' vote is $U$ (and for a ``no'' vote is $-U$). A briber, wishing to flip votes to ``no'' bribes the user as follows: If the user votes ``no'' and there are exactly $(n+1)/2$ ``no'' votes (i.e., the voter is ``pivotal'' in the sense that the outcome would have changed if the voter chose ``yes'' instead), then the briber pays $P + \epsilon$ (for any $\epsilon > 0$). Otherwise, if the user votes ``no'' (regardless of the result), a bribe of $\epsilon$ is still paid. No bribe is paid if the user votes ``yes.'' It is easy to see now that it is always individually rational for a user to take such a bribe---regardless of the result, the user's utility will be $\epsilon$ larger than if the bribe was not taken. But this means that if all users are rational, and the bribe is offered to everyone, then all users will vote ``no'' resulting in no pivotal voters and the cost of bribing being only $n \epsilon$ (which can be arbitrarily small). A major practical hurdle in deploying such a bribe is conducting is coordinating enough voters, which is made significantly easier by a Dark DAO. In essence, a Dark DAO can make such a ``pivotal''-bribe attack extremely cheap. 

\paragraph{Coordinated price manipulation:} As noted in~\cite{darkdaohack}, it is possible for a Dark DAO to operate without an explicit party distributing bribes. A Dark DAO can instead orchestrate collective action that rewards participants indirectly. 

For example, a Dark DAO can orchestrate the following steps among a cabal: (1) Purchase a collective short position in a target asset $X$; (2) Vote collectively toward an outcome that causes the price of $X$ to drop; (3) Close the short position at a profit; and (4) Distribute profits among Dark DAO participants. Dark DAO goals can in principle extend beyond voting to other actions as well, such as coordinated attacks on consensus protocols~\cite{darkdaohack} or---if the Dark DAO ingests assets from participants---market manipulation.

\paragraph{Undermining perceived election integrity:} The mere presence of a Dark DAO may be enough to cast doubt on a DAO election and call into question whether it is being attacked. DAO opacity conceals the size of a Dark DAO such that even with limited participation, a Dark DAO could impact community trust in an election. 

Alternatively, a Dark DAO could selectively reveal (and prove) statistics---e.g., participation of at least 10\% of token holdings---that would substantiate the threat it poses.

\paragraph{Exploiting quadratic voting and quadratic funding:} 
Quadratic voting~\cite{lalley2018quadratic} is a mechanism that seeks to limit the influence of whales in determining election outcomes. It weights a given voter's vote as the square root of her token balance.

Quadratic voting is only enforceable if tokens are assignable to real-world identities. For instance, if votes are weighted as the square-root of token holdings by address, a whale can boost her voting weight by dividing her tokens among multiple accounts.

A Dark DAO can subvert quadratic voting \textit{even when vote is conducted using a secure decentralized identity system}. That is because a Dark DAO can encumber keys not just in a way that enforces voting choices, but \textit{also} use of digital assets. 

A whale can thus subvert a quadratic voting scheme as follows. The whale does not just bribe voters to vote for a particular outcome, but also \textit{temporarily deposits some of the whale's funds with them}. As bribees' keys are encumbered, the Dark DAO can ensure not just that they vote as directed, but that they will return the whale's funds. 

For example, a whale with 256 tokens can deposit 4 tokens with each of 63 distinct bribees. The result would be an increase in the whale's voting weight of a factor of $(\sqrt{4} \times 64) / \sqrt{256}  = 8$.

A similar attack is possible against quadratic funding~\cite{Buterin-qf}.

\paragraph{Subverting privacy pools:} \textit{Privacy pools}~\cite{buterin2023blockchain} aim to strike a balance between privacy and accountability in privacy coins and privacy services for cryptocurrencies. A privacy pool is an association of users of a certain class that acts as an anonymity set for members' transactions.

For example, a pool might require members to prove that they are not on a sanctions list (e.g., from the U.S. Office for Foreign Asset Control (OFAC), a requirement for most banks~\cite{frey2019sanctions}). Pool membership then implies sanctions compliance, enabling a user to provide assurance that she is not sanctioned, while still preserving her privacy. 

Any set of users may choose to create a pool. Membership requirements for a pool are determined by the community making up the pool. In this sense, a pool is like a DAO. It is also subject to attack by a Dark DAO.

A Dark DAO can target a privacy pool by facilitating {\em identity-selling} and thus selling of access to a pool. A privacy-pool member (``lender'') can sell temporary access to her pool-compliant address to an adversary (``borrower'') who isn't eligible for pool membership. For example, a user in a sanctions-compliant pool can sell pool access to another user who is in fact on a sanctions list. To do so, the seller encumbers her address so that it is subject to limited control by the buyer.

\bigskip
\noindent \textit{Example:}
Alice holds a Dark DAO address $a$ that is a member of sanctions-compliant privacy pool $P$. Mallory will be receiving money from an address $z$. Alice agrees to help Mallory launder the money through $P$. 

Alice sets the Dark DAO policy for address $a$ so that when funds are received from $z$: (1) 99\% of funds are subject to control by Mallory, i.e., Mallory can send those funds from $a$ to any other desired address and (2) 1\% of funds are subject to control by Alice---as payment for Mallory's borrowing of $a$.
\bigskip

Interestingly, a Dark DAO can conversely \textit{reinforce} the security of a privacy pool by enforcing a policy in which members' addresses must maintain a minimum balance for some period of time. Such a policy would limit weakening or dissolution of the pool.

\section{Basic Dark DAO}
\label{sec:Dark_DAO_implementation}

To illustrate that Dark DAOs are practical, we have implemented a Dark DAO prototype. Our implementation is  written in what is currently the most popular smart-contract programming language, Solidity. Furthermore, while it uses TEEs, it demonstrates that developing a Dark DAO need not require any special knowledge of TEEs, thanks to the abstractions provided by the TEE-based Oasis Sapphire blockchain.

Our prototype demonstrates in particular how Dark DAOs might coordinate bribery on a popular off-chain voting platform, Snapshot~\cite{snapshotwebsite}. To the best of our knowledge, however, all current DAO voting platforms are susceptible to Dark DAO interference.
Our open-source implementation can be found at \url{https://github.com/DAO-Decentralization/dark-dao}. 

In what follows, we first give a background on Oasis Sapphire and Snapshot (Section~\ref{sec:impl-bg}) followed by the details of our Dark DAO design (Section~\ref{sec:impl-design}) and possible future design enhancements. We then discuss the cost of participation (Section~\ref{sec:dd-cost}), security (Section~\ref{subsec:basic_DAO_conf}), deployment considerations (Section~\ref{sec:impl-depl-consider}), mitigations against negative impacts of Dark DAOs (Section~\ref{sec:dd-mitigation}), and ethical the considerations of building a Dark DAO prototype (Section~\ref{sec:dd-ethical}).

\subsection{Background}
\label{sec:impl-bg}
\paragraph{Oasis Sapphire.} Oasis Sapphire is an implementation of the Ethereum Virtual Machine that runs entirely inside a TEE. Assuming the TEE is not broken, Sapphire is able to execute smart contracts whose state is kept private both during execution (in memory) and after execution (in storage). In addition to matching the base implementation of the EVM, Sapphire also includes several built-in precompiled contracts that make it both easy and cheap to perform cryptographic operations pertinent to Dark DAOs: generating entropy and signing messages. These methods are not available in blockchains that do not use TEEs or encrypted computation, since the private key material or entropy would necessarily be leaked to all blockchain nodes.

In its current state, Sapphire does not provide confidentiality for all transaction metadata: senders and recipients of every transaction are public. Additionally, its persistent storage lies outside the TEE, making contract storage access patterns a vector for information leakage~\cite{jeanlouis2023sgxonerated}. In this paper, we do not address side channel attacks such as these.

%In its current state, Sapphire does not provide full transaction privacy: the sender and recipient of every transaction, whether internal or external, are public. Also, the blockchain's storage, while it may be encrypted, lies outside the TEE. Storage access patterns therefore can leak some information about a contract's execution or the input data to a transaction \cite{jeanlouis2023sgxonerated}. Nonetheless, the privacy Sapphire does enforce is sufficient to implement an effective Dark DAO: its protection of key material enables key encumbrance, and the confidentiality given to transaction execution helps obscure the relationship between Dark DAO participants and non-participants. Confidential information, namely private keys and address, are code-path independent.\ari{I don't understand this in light of the previous assertion in this paragraph that transaction metadata are leaked.} \james{My intention is to explain that the metadata leaks are not catastrophic to the privacy of a Dark DAO. Storage access pattern leakage can be mitigated somewhat through an ORAM technique like in the hackathon project.}\ari{Can you be more specific about what's leaked and why this doesn't matter? Or is ORAM needed for the Dark DAO to work?} \james{Condense down and mention that we don't address side channel attacks.}

Sapphire is compatible with many cryptocurrency wallets today, making it a good candidate for hosting a key encumbrance system.
% which is a form of account abstraction.

\paragraph{Snapshot.} Snapshot is an open source, centralized, off-chain voting platform for DAOs. Rather than requiring DAO users to pay the costs of making on-chain voting transactions, it accepts votes submitted as signed messages to the Snapshot website. The website is organized into ``spaces,'' typically one per DAO, each of which is moderated and/or controlled by a permissioned hierarchy of administrators and moderators of the DAO. At the time of writing, there exist over 28,000 Snapshot spaces~\cite{snapshotwebsite}. 

Individual DAOs are free to adjust the algorithm used to calculate the weight of an individual vote, termed its \textit{voting power}. How much voting power a particular user gets often is determined by how many DAO tokens he or she is holding on a blockchain at the moment a proposal is created, and thus a ``snapshot'' of voting power is taken at the corresponding block. For a given DAO proposal, the signed voting messages are collected and, once the voting period is over, are published as receipts to IPFS~\cite{snapshottechnicaloverview}, a distributed file sharing network. Voters can verify that their votes were included in the proposal's outcome by checking their voting receipts.

Snapshot also provides a means for \textit{delegating} one's voting power to another, presumably more active voter. A delegator can override his or her delegate's vote, but would generally choose a delegate based on the delegate's public reputation and likely voting profile.

\subsection{A Key-Encumbrance Dark DAO Design}
\label{sec:impl-design}
Recall the Dark DAO ``guaranteed vote delivery'' property (\Cref{subsec:Dark_DAO_def}): a Dark DAO must guarantee that a bribed voter will cast a vote as specified by the briber. 
Many DAO voting systems, however, including Snapshot, allow voters to change their votes before a proposal has passed. Thus the only way to ensure guaranteed vote delivery is for the Dark DAO to control the voter's voting credential. At the same time, the ``bounded scope'' property of Dark DAOs (\Cref{subsec:Dark_DAO_def}) means that the Dark DAO should have \textit{limited access} to the voter's credential and be able to use it exclusively to cast the vote for which the voter has committed to receiving a bribe. 

We resolve this tension by designing a \textit{key-encumbered wallet}, which stores and manages private keys in smart contracts and enforces access-control policies that we refer to as \textit{encumbrance policies}. A key-encumbered wallet simultaneously allows: (1) use of a key by a Dark DAO specifically for voting in response to a bribe and (2) unrestricted use of the key by its owner for any other purpose.

\paragraph{Key-encumbered wallet.} 
% Ordinarily, users sign transactions and other messages with a private key that they directly own. They cannot be limited in their ability to sign messages; using the private key, the account owner can sign any message at any time, including overriding the Dark DA.
Our key-encumbered wallet 
application is powered by a smart contract that runs on Oasis Sapphire.
The smart contract generates private-public key pairs within a TEE using Sapphire's built-in entropy generation methods. Only the smart contract itself can use the keys it has generated to sign messages.
To create a key-encumbered wallet, one can invoke a ``create wallet'' function in the smart contract using an external account, typically one that is not encumbered. 
We emphasize that while the aforementioned external account is the owner of the wallet, it does not have unfettered access to the private keys created and managed by the wallet smart contract.

Owners of key-encumbered wallets can request signatures for messages by issuing read-only calls to the wallet smart contract. These calls are signed by the owner's external account for authentication.

\paragraph{Dark-DAO encumbrance policy.} %In this design, Dark DAOs themselves are key encumbrance policy contracts that regulate how a participating encumbered wallet votes. A Dark DAO that purchases its members' votes 
%\james{TODO: Move to a section describing the Snapshot implementation}

To create our own Dark DAO based on key-encumbered wallets, we first designed a \textit{key encumbrance policy} contract that regulates all Snapshot-related messages signed by an enrolled wallet, including votes for DAO proposals. The policy will not allow a key owner to sign a vote directly; instead, the owner must unlock the ability to do so for a particular proposal, after which it can sign any voting message related to that proposal. But rather than unlocking a proposal to sign a vote, an owner may delegate its right to vote to a sub-policy: this is the Dark DAO contract. If the vote is given to a sub-policy, the owner forgoes the ability to sign messages relating to that proposal. This mechanism guarantees to the Dark DAO that a user will not change a vote that it signs on the user's behalf. (A user could try to pre-sign a vote, but that would be impractical for, e.g., Snapshot, where ballots incorporate proposal hashes whose inputs include a timestamp, exact proposal title and body, and proposal-text wording. See~\Cref{sec:impl-depl-consider} for more on pre-signing.) The hierarchical design of key encumbrance enables users to participate in several Dark DAOs at once.

\medskip

We summarize the components of our basic Dark DAO prototype in Figure~\ref{fig:dark-dao-pseudocode}, in the form of pseudocode for each of the main functionalities.

\newcommand{\getsr}{{\:{\leftarrow{\hspace*{-3pt}\raisebox{.75pt}{$\scriptscriptstyle\$$}}}\:}}
\newcommand{\sign}{{\mathsf{sign}}}
\newcommand{\keygen}{{\mathsf{keygen}}}

\newcommand{\myInd}{\hspace*{1em}}

\begin{figure}[h!]

\fpage{0.97}{
\underline{Initialization}:
Set $\texttt{accounts} := \{\}, \texttt{bribes} := []$. \\

\underline{On receive $\texttt{keygen()}$ from party $\mathcal{P}$}: \\
\myInd $(\sk, \pk) \getsr S.\keygen()$ \\
\myInd $\texttt{accounts}[\pk] = (\texttt{sk:} ~ \sk, \textsf{party:} ~ \mathcal{P}, \texttt{bribeId:} ~ \bot, \texttt{signed:} ~ \emptyset)$\\
\myInd Send $\pk$ to $\mathcal{P}$. \\

\underline{On receive $\texttt{sign(\pk,}m\texttt{)}$ from party $\mathcal{P}$}: \\
\myInd $\texttt{ret} = (\sk, \mathcal{P}^*, \texttt{bribeId}, \texttt{signed}) \gets \texttt{accounts}[\pk]$\\
\myInd \textsf{assert} $(\texttt{ret} \neq \bot) \wedge (\mathcal{P}^* = \mathcal{P})$\\
\myInd \textsf{assert} $\texttt{bribeId} = \bot \vee m \notin \texttt{bribes[bribeId]}.\mathcal{M}$\\
\myInd $\sigma = S.\sign(\sk, m)$\\
\myInd $\texttt{accounts}[\pk].\texttt{signed.add}(m)$\\
\myInd Send $\sigma$ to $\mathcal{P}$.\\

\underline{On receive $\texttt{registerBribe(bribeAmount}, \mathcal{M})$ from party $\mathcal{B}$ along with $T$ tokens}: \\
\myInd $\texttt{bribeId} \gets \textsf{len}(\texttt{bribes}) + 1$ \\
\myInd $\texttt{bribes}[\texttt{bribeId}] = \texttt{(bribeAmount}, T, \mathcal{M}, \mathcal{B}\texttt{)}$. \\
\myInd Send $\texttt{bribeId}$ to $\mathcal{B}$.\\

\underline{On receive $\texttt{takeBribe(\pk, bribeID)}$ from party $\mathcal{P}$}: \\
\myInd \textsf{assert} $\texttt{accounts}[\pk].\texttt{party} = \mathcal{P}$ \\
\myInd $(\texttt{bribeAmount}, T, \mathcal{M}, \mathcal{B}) \gets \texttt{bribes}[\texttt{bribeId}]$. \\
\myInd \textsf{assert} $\texttt{accounts}[\pk]\texttt{.bribeId} = \bot$ $\wedge$ $\texttt{accounts}[\pk]\texttt{.signed} \cap \mathcal{M} = \emptyset$ $\wedge$ $T \geq \texttt{bribeAmount}$\\
\myInd $\texttt{accounts}[\pk]\texttt{.bribeId} = \texttt{bribeId}$ \\
\myInd $\texttt{bribes}[\texttt{bribeId}].T \;\textnormal{-=}\; \texttt{bribeAmount}$\\
\myInd Send $T$ tokens to $\mathcal{P}$.\\

\underline{On receive $\texttt{signViaEncumberedKey(\pk, 
\textnormal{\textit{m}}, bribeID)}$ from party $\mathcal{B}$}: \\
\myInd $\texttt{bribe} = (\texttt{bribeAmount}, \mathcal{M}, \mathcal{B^*}) \gets \texttt{bribes[accounts[\pk].bribeId]}$ \\
\myInd \textsf{assert} $(\texttt{bribe} \ne \bot) \wedge (\mathcal{B}^* = \mathcal{B}) \wedge m \in \mathcal{M}$ \\
\myInd $\sigma = S.\sign(\sk, m)$\\
\myInd $\texttt{accounts}[\pk].\texttt{signed.add}(m)$\\
\myInd Send $\sigma$ to $\mathcal{B}$.
}
\caption{Key encumbrance and Dark DAO pseudocode}
\label{fig:dark-dao-pseudocode}
\end{figure}

\subsection{Dark DAO execution costs}
\label{sec:dd-cost}
\begin{table}[h!]
\centering
\begin{tabular}{|l|r|c|c|}
\hline
\textbf{Transaction} & \textbf{Gas Usage} & \textbf{ROSE Cost} & \textbf{USD Cost} \\ \hline
Create encumbered account & 237,640 & 0.0237640 & \$0.00123 \\ \hline
Enroll in Snapshot encumbrance policy & 144,981 & 0.0144981 & \$0.00075 \\ \hline
% setVoteSigner: 49514, enterDarkDAO: 140042, total = 167,299
Enter Dark DAO & 167,299 & 0.0167299 & \$0.00086 \\ \hline
Claim bribe payment & 85,064 & 0.0085064 & \$0.00044 \\ \hline
Deploy Snapshot encumbrance policy & 2,543,239 & 0.2543239 & \$0.01314 \\ \hline
Deploy Dark DAO contract & 1,690,955 & 0.1690955 & \$0.00873 \\ \hline
\end{tabular}
\caption{Costs of Dark DAO transactions. \\ 1 ROSE = \$0.05165, as of October 27, 2023. Transactions are priced at 100 Gwei, the Sapphire default.\label{tab:transaction-fees}}
\end{table}

Table~\ref{tab:transaction-fees} describes the costs of the various Oasis Sapphire transactions that are necessary to participate in a Dark DAO. A bribee would perform the first four transactions; among those, the first two are the one-time costs of setting up an encumbered account, and the second two occur whenever a bribe is taken. The Snapshot encumbrance policy deployment is a one-time cost, and Dark DAO contracts would presumably all reference the same Snapshot encumbrance policy until API changes require an upgrade. The Dark DAO contract as written needs to be deployed for every DAO proposal a briber wishes to participate in, but it is straightforward to make the contract reusable.

\subsection{Security} 
\label{subsec:basic_DAO_conf}

We consider an idealized model of Oasis Sapphire's trusted execution environment~\cite{pass2017sgx}, treating side-channel issues and platform-level deployment mistakes (see, e.g.,~\cite{chen2019sgxpectre,van2020sgaxe,jeanlouis2023sgxonerated}) as out of scope for our exploration in this paper. We also assume the integrity, i.e., correct execution, and liveness of Sapphire. Communications with Oasis Sapphire can in principle be observed by a network adversary. The system supports secure channels to application instances, however, and we exclude consideration of side channels resulting from, e.g., analysis correlating Oasis Sapphire traffic with on-chain behavior. (Such side channels can be mitigated through injection of noise, e.g., randomized delays.)

Informally, in this model, the Basic Dark DAO we have described achieves confidentiality (i.e., ``opacity'') as follows. An adversary---an entity seeking to probe the Dark DAO, e.g., on behalf of the target DAO---can mount an active attack against the Dark DAO, posing as a briber and as a set of vote-sellers. Such an adversary can learn two forms of information.  

First, to the extent that it registers bribes, the adversary learns information about voters that accept these bribes. The adversary learns two things about these voters: (1) The number of votes they are selling and (2) Their on-chain addresses. We stress that the adversary learns (1) only for votes it purchases, but those votes no longer then pose a threat to the target DAO. Here, (2) arises in a model where the adversary submits the votes it has obtained via bribery. There are three reasons why (2) is probably of limited practical concern:

\begin{enumerate}
    \item {\em Token fungibility:} If an adversary buys votes from some set of addresses, the adversary controls those votes for a given election. Those addresses might be blacklisted from future participation in the target DAO. But since tokens are fungible, they could simply be sent to new addresses, greatly complicating potential blacklisting policies.
    \item {\em Cost of acquisition:} Buying votes to learn associated addresses---and of course control their votes---is a costly strategy. It also creates a perverse incentive. It actually encourages the creation of Dark DAOs, as the adversary is subsidizing bribes. 
    \item {\em Private voting:} Votes could in principle retain confidentiality during submission, rather than be obtained in cleartext by a briber. A TEE could, for instance, perform submission to a website (e.g., Snapshot). Although Oasis Sapphire does not directly support TLS traffic and thus would not enable straightforward implementation of this functionality, other TEE-based systems can in principle play this role, e.g.,~\cite{zhang2016town}.
\end{enumerate}

The second form of information available to the adversary is the size of bribes registered by (other) bribers---which are published to signal the opportunity to vote sellers. Published bribes only represent an upper bound on Dark DAO activity, however.

In summary, at the application layer, the only feasible way for an adversary to impact the behavior of the Dark DAO through active attack is by buying votes. Additionally, the Dark DAO to the best of our knowledge presents no application-layer denial-of-service attack vectors.

\subsection{Deployment Considerations}
\label{sec:impl-depl-consider}

\paragraph{Pre-signing attacks.} Key-encumbered wallet owners can sign an unlimited number of messages prior to enrolling in an encumbrance policy. This creates an opportunity for wallet owners to pre-sign messages they predict will be encumbered by a policy in the future and defeat the encumbrance scheme. For example, if a wallet owner is about to enroll in an encumbrance policy that restricts its ability to sign a message saying ``vote for Alice,'' the owner could just pre-sign this message before enrolling in the policy. Therefore, encumbrance policies must either be enforced from the moment the wallet is created or work with messages the owner could not have possibly predicted prior to encumbrance, such as those containing a pseudorandom value or the hash of high-entropy inputs, e.g., Snapshot proposal hashes.

An alternative key-encumbered wallet design is to record every message that has ever been signed with an encumbered wallet; on enrollment, an encumbrance policy could then check that no previous message breached the encumbrance assumptions of the policy. However, to assess whether a type of message has ever been signed requires each message to be recorded on the same blockchain as the wallet contract, incurring a cost in transaction fees to the wallet owner whenever he or she wishes to sign a message.

Often, these schemes can be combined to produce workable encumbrance policies that would otherwise require enrollment from the moment of wallet creation. If an encumbrance policy can anchor relevant message characteristics to a specific timestamp (e.g., a DAO proposal's hash to its creation timestamp), it can distinguish whether a particular message could have been signed before the encumbered wallet was enrolled in the policy. All the messages whose timestamps are earlier than the time of enrollment are unrestricted; all those after are eligible for restriction by the policy.

\paragraph{Campaigns.} Rather than selling votes for a given election, a Dark DAO can orchestrate a bribery \textit{campaign}, dispensing bribes that are contingent upon a successful outcome. This might be an election outcome, or the outcome of multiple elections. But a campaign may target any of a range of outcomes reflected in blockchain state---e.g., successful installation of a particular user in a privileged role (e.g., membership in a DAO committee responsible for disbursing funds). Any of a range of bribery policies are also possible, e.g., rewards for recruiting fresh Dark DAO participants.

\subsection{Mitigation: Complete Knowledge}
\label{sec:dd-mitigation}
An application can prevent access to accounts created by a key encumbrance system by requiring a \textit{proof of complete knowledge} to be demonstrated for each public key requesting access~\cite{kelkar2023complete}. Such a proof demonstrates that the associated private key either has been shown or could have been shown, in totality, to an eavesdropper. A key-encumbered wallet can no longer make its encumbrance guarantees if its private key is ever leaked, so correctly implemented key-encumbered wallets are naturally forbidden from creating valid proofs of complete knowledge. Lightweight proofs of complete knowledge via mobile device TEEs may soon be practical if signature verification of the relevant curves becomes cheap, such as through implementation of EIP-7212~\cite{EIP7212}. 

% We are unaware of any protocol currently in existence that incorporates proofs of complete knowledge, so applications themselves will have to check them. A consequence is that applications wishing to escape protocol-level analysis are free to use key encumbrance to create and manage accounts that are indistinguishable from ``regular'' accounts at the protocol level.

%\subsection{Limitations}
%\james{TODO: Storage access pattern leaks among other things are risks that Dark DAO authors would need to take into consideration.}

\subsection{Ethical Considerations}
\label{sec:dd-ethical}
We have open-sourced the code of our key encumbered wallet and message-based Dark DAO contracts. We have chosen to do this because, given the current risks of participation in Dark DAOs, the short-term threat is limited. DAOs are currently not highly decentralized, as shown in, e.g.,~\cite{feichtinger2023hidden}---a precondition for Dark DAOs to be effective. We feel it is important to have a clear demonstration of the practicality of Dark DAOs so that the community can understand the contours of the risk in the long term and consider effective countermeasures. 

Additionally, our code has beneficial use cases, which we will show in future work. For example, a confidential DAO whose treasury funds are themselves encumbered inside the wallets of its participants would have sidestepped some of the shortcomings of the Constitution DAO~\cite{Tan:2022}, whose public fundraising appears to have facilitated it being outbid in a silent auction.

\section{Dark DAO Lite}
\label{sec:token-based-dd-impl}

Although our Dark DAO prototype demonstrates that Dark DAOs are practical to build, participating in the Dark DAO as a bribee is not straightforward. The requirements of setting up an encumbered account and managing funds through it, discovering bribes, and enrolling in encumbrance policies create friction for ordinary users.

To alleviate these usability issues and further emphasize the versatility of key encumbrance, we have created a second Dark DAO system that we call a \textit{Dark DAO ``Lite''}. Our Dark DAO Lite scheme involves a trade-off: It achieves greater usability than our basic Dark DAO, but weaker confidentiality. (Thus our use of the term ``lite.'')

The key idea behind the Dark DAO Lite is its use of a DAO-token derivative to hide the complexity of participation. We call this derivative, which is itself a token, a \textit{Dark-DAO token} or \textit{DD token} for short.

DD tokens in a Dark DAO Lite are derived from ordinary tokens in a target DAO through a conversion process. The key steps of this process, which we explain in detail further below, are summarized in Figure~\ref{fig:dark-dao-deposit}. A Dark DAO Lite itself, like a basic Dark DAO, is a smart contract running on Oasis Sapphire and benefits from that chain's confidentiality properties. Its functionality is described in Figure~\ref{fig:tokenized-dd}. DD tokens, however, are ordinary ERC-20 tokens held on Ethereum that can be traded on existing token markets, such as Uniswap. Converting between the target DAO token and the non-voting derivative DD tokens requires technical knowledge, but need only be done once by a small set of actors---who can obtain remuneration through DD token markets. Once DD tokens are created, no sophistication is required to manage them. 

DD tokens realize a concept that we refer to as \textit{DAO-token fractionalization}.

\paragraph{DAO-token fractionalization.} DAO tokens grant two capabilities to their owner: (1) the ability to sign votes on proposals and (2) ownership rights in the DAO, expressible as the ability to send the tokens to a different address. Key-encumbered wallets and encumbrance policies enable a separation of these two capabilities by creating two paths of access control to a single private key controlling the tokens of a target DAO. This fractionalization yields two distinct resources: 

\begin{itemize}
    \item \textit{Voting rights} corresponding to converted target-DAO tokens. A pool of these voting rights may be  purchased by a vote-buyer / briber through an auction mechanism we describe below. We refer to the pool of voting rights as \textit{self-auctioning}, since the auction process is automated and requires no intervention by DD-token holders.
    \item \textit{DD tokens}, which may be individually owned and correspond to ownership rights in the target DAO plus the right to receive revenue from the auctioning of the pool of fractionalized voting rights.
\end{itemize}

In the remainder of this section, we explain how the various parts of a Dark DAO Lite work (Section~\ref{subsec:tokenized_DD_mechanics}) and its security properties (Section~\ref{subsec:tokenized_DD_security}).  In what follows, unless otherwise specified, we use the term Dark DAO to refer to a Dark DAO Lite. 

\subsection{Dark-DAO Lite functionality}
\label{subsec:tokenized_DD_mechanics}

% \tikzset{every picture/.style={line width=0.75pt}}
% \begin{tikzpicture}[x=0.75pt,y=0.75pt,yscale=-1,xscale=1]

\definecolor{myred}{RGB}{208, 2, 27}

\tikzset{every picture/.style={line width=0.75pt}} %set default line width to 0.75pt 

\begin{figure*}[t]
    \centering
\scalebox{0.9}{
\begin{tikzpicture}[x=0.75pt,y=0.75pt,yscale=-1,xscale=1]
%uncomment if require: \path (0,452); %set diagram left start at 0, and has height of 452

% %Shape: Ellipse [id:dp9843136784881681] 
% \draw  [color={rgb, 255:red, 0; green, 0; blue, 0 }  ,draw opacity=1 ] (307,71.33) .. controls (307,65.07) and (312.63,60) .. (319.58,60) .. controls (326.53,60) and (332.16,65.07) .. (332.16,71.33) .. controls (332.16,77.59) and (326.53,82.67) .. (319.58,82.67) .. controls (312.63,82.67) and (307,77.59) .. (307,71.33) -- cycle ;
% %Straight Lines [id:da5532887540966394] 
% \draw [color={rgb, 255:red, 0; green, 0; blue, 0 }  ,draw opacity=1 ]   (319.58,82.67) -- (319.58,112.13) ;
% %Straight Lines [id:da6451871147148589] 
% \draw [color={rgb, 255:red, 0; green, 0; blue, 0 }  ,draw opacity=1 ]   (319.58,112.13) -- (333,128) ;
% %Straight Lines [id:da6611430448296942] 
% \draw [color={rgb, 255:red, 0; green, 0; blue, 0 }  ,draw opacity=1 ]   (319.58,112.13) -- (307.84,128) ;
% %Straight Lines [id:da7524844201658845] 
% \draw [color={rgb, 255:red, 0; green, 0; blue, 0 }  ,draw opacity=1 ]   (319.58,82.67) -- (333,98.53) ;
% %Straight Lines [id:da11567416807384623] 
% \draw [color={rgb, 255:red, 0; green, 0; blue, 0 }  ,draw opacity=1 ]   (319.58,82.67) -- (307.84,98.53) ;

\node[circle,fill,minimum size=5mm] (head) at (320,64) {};
\node[rounded corners=2pt,minimum height=1.3cm,minimum width=0.4cm,fill,below = 1pt of head] (body) {};
\draw[line width=1mm,round cap-round cap] ([shift={(2pt,1pt)}]body.north east) --++(90:6mm);
\draw[line width=1mm,round cap-round cap] ([shift={(-2pt,1pt)}]body.north west)--++(90:6mm);
\draw[thick,white,-round cap] (body.south) --++(90:5.5mm);

%Rounded Rect [id:dp7227389590451474] 
\draw  [color={rgb, 255:red, 208; green, 2; blue, 27 }  ,draw opacity=1 ] (457,78.4) .. controls (457,72.66) and (461.66,68) .. (467.4,68) -- (509.6,68) .. controls (515.34,68) and (520,72.66) .. (520,78.4) -- (520,109.6) .. controls (520,115.34) and (515.34,120) .. (509.6,120) -- (467.4,120) .. controls (461.66,120) and (457,115.34) .. (457,109.6) -- cycle ;
%Straight Lines [id:da33414278022466815] 
\draw    (320,127) -- (320,195) ;
\draw [shift={(320,197)}, rotate = 270] [color={rgb, 255:red, 0; green, 0; blue, 0 }  ][line width=0.75]    (10.93,-3.29) .. controls (6.95,-1.4) and (3.31,-0.3) .. (0,0) .. controls (3.31,0.3) and (6.95,1.4) .. (10.93,3.29)   ;
%Rounded Rect [id:dp11398631193466024] 
\draw  [color={rgb, 255:red, 0; green, 0; blue, 0 }  ,draw opacity=1 ] (89,81.4) .. controls (89,75.66) and (93.66,71) .. (99.4,71) -- (163.6,71) .. controls (169.34,71) and (174,75.66) .. (174,81.4) -- (174,112.6) .. controls (174,118.34) and (169.34,123) .. (163.6,123) -- (99.4,123) .. controls (93.66,123) and (89,118.34) .. (89,112.6) -- cycle ;
%Straight Lines [id:da11310995738027274] 
\draw [color={rgb, 255:red, 208; green, 2; blue, 27 }  ,draw opacity=1 ]   (340,89) -- (449,89) ;
\draw [shift={(451,89)}, rotate = 180] [color={rgb, 255:red, 208; green, 2; blue, 27 }  ,draw opacity=1 ][line width=0.75]    (10.93,-3.29) .. controls (6.95,-1.4) and (3.31,-0.3) .. (0,0) .. controls (3.31,0.3) and (6.95,1.4) .. (10.93,3.29)   ;
%Shape: Circle [id:dp17984886649302245] 
\draw   (325.5,161) .. controls (325.5,157.13) and (328.63,154) .. (332.5,154) .. controls (336.37,154) and (339.5,157.13) .. (339.5,161) .. controls (339.5,164.87) and (336.37,168) .. (332.5,168) .. controls (328.63,168) and (325.5,164.87) .. (325.5,161) -- cycle ;
%Shape: Circle [id:dp9583854336370632] 
\draw  [color={rgb, 255:red, 208; green, 2; blue, 27 }  ,draw opacity=1 ] (347.5,65) .. controls (347.5,61.13) and (350.63,58) .. (354.5,58) .. controls (358.37,58) and (361.5,61.13) .. (361.5,65) .. controls (361.5,68.87) and (358.37,72) .. (354.5,72) .. controls (350.63,72) and (347.5,68.87) .. (347.5,65) -- cycle ;
%Straight Lines [id:da7016496074471905] 
\draw [color={rgb, 255:red, 208; green, 2; blue, 27 }  ,draw opacity=1 ]   (448.2,102.2) -- (342,102.99) ;
\draw [shift={(340,103)}, rotate = 359.58] [color={rgb, 255:red, 208; green, 2; blue, 27 }  ,draw opacity=1 ][line width=0.75]    (10.93,-3.29) .. controls (6.95,-1.4) and (3.31,-0.3) .. (0,0) .. controls (3.31,0.3) and (6.95,1.4) .. (10.93,3.29)   ;
%Shape: Circle [id:dp7715291436908771] 
\draw  [color={rgb, 255:red, 208; green, 2; blue, 27 }  ,draw opacity=1 ] (355.5,115) .. controls (355.5,111.13) and (358.63,108) .. (362.5,108) .. controls (366.37,108) and (369.5,111.13) .. (369.5,115) .. controls (369.5,118.87) and (366.37,122) .. (362.5,122) .. controls (358.63,122) and (355.5,118.87) .. (355.5,115) -- cycle ;
%Straight Lines [id:da31310487582997015] 
\draw    (297,90) -- (181.2,90.2) ;
\draw [shift={(179.2,90.2)}, rotate = 359.9] [color={rgb, 255:red, 0; green, 0; blue, 0 }  ][line width=0.75]    (10.93,-3.29) .. controls (6.95,-1.4) and (3.31,-0.3) .. (0,0) .. controls (3.31,0.3) and (6.95,1.4) .. (10.93,3.29)   ;
%Shape: Circle [id:dp9265722510539469] 
\draw   (181.5,72) .. controls (181.5,68.13) and (184.63,65) .. (188.5,65) .. controls (192.37,65) and (195.5,68.13) .. (195.5,72) .. controls (195.5,75.87) and (192.37,79) .. (188.5,79) .. controls (184.63,79) and (181.5,75.87) .. (181.5,72) -- cycle ;
%Straight Lines [id:da09279406777172183] 
\draw    (184,102) -- (296,102.98) ;
\draw [shift={(298,103)}, rotate = 180.5] [color={rgb, 255:red, 0; green, 0; blue, 0 }  ][line width=0.75]    (10.93,-3.29) .. controls (6.95,-1.4) and (3.31,-0.3) .. (0,0) .. controls (3.31,0.3) and (6.95,1.4) .. (10.93,3.29)   ;
%Shape: Circle [id:dp6652433929772239] 
\draw   (187.5,114) .. controls (187.5,110.13) and (190.63,107) .. (194.5,107) .. controls (198.37,107) and (201.5,110.13) .. (201.5,114) .. controls (201.5,117.87) and (198.37,121) .. (194.5,121) .. controls (190.63,121) and (187.5,117.87) .. (187.5,114) -- cycle ;

% Text Node
\draw (453,80.2) node [anchor=north west][inner sep=0.75pt]  [font=\footnotesize,color={rgb, 255:red, 208; green, 2; blue, 27 }  ,opacity=1 ] [align=left] {\begin{minipage}[lt]{50.64pt}\setlength\topsep{0pt}
\begin{center}
Dark DAO\\contract
\end{center}

\end{minipage}};
% Text Node
\draw (94,82) node [anchor=north west][inner sep=0.75pt]  [font=\footnotesize,color={rgb, 255:red, 0; green, 0; blue, 0 }  ,opacity=1 ] [align=left] {\begin{minipage}[lt]{53.98pt}\setlength\topsep{0pt}
\begin{center}
DD token\\ contract
\end{center}

\end{minipage}};
% Text Node
\draw (278,224) node [anchor=north west][inner sep=0.75pt]   [align=left] {0x...};
% Text Node
\draw (336,223) node [anchor=north west][inner sep=0.75pt]   [align=left] {0x...};
% Text Node
\draw (305,205) node [anchor=north west][inner sep=0.75pt]   [align=left] {0x...};
% Text Node
\draw (327,155) node [anchor=north west][inner sep=0.75pt]  [font=\scriptsize] [align=left] {1. \ Transfer DAO token to \\ \ \ \ \ \ encumbered account};
% Text Node
\draw (349,59) node [color={rgb, 255:red, 208; green, 2; blue, 27 }  ,draw opacity=1 ] [anchor=north west][inner sep=0.75pt]  [font=\scriptsize] [align=left] {2. \ Register\\ \ \ \ \ \ proof of deposit};
% Text Node
\draw (357,109) node [color={rgb, 255:red, 208; green, 2; blue, 27 }  ,draw opacity=1 ] [anchor=north west][inner sep=0.75pt]  [font=\scriptsize] [align=left] {3. \ Grant mint \\ \ \ \ \ authorization};
% Text Node
\draw (183,66) node [anchor=north west][inner sep=0.75pt]  [font=\scriptsize] [align=left] {4. \ Forward\\ \ \ \ \ \ mint authorization};
% Text Node
\draw (189,108) node [anchor=north west][inner sep=0.75pt]  [font=\scriptsize] [align=left] {5. \ Receive DD token};
\draw (290,39) node [anchor=north west][inner sep=0.75pt]  [font=\footnotesize] [align=left] {Depositor};

\end{tikzpicture}}
\caption{Transaction flow for converting DAO tokens to DD tokens. The Dark DAO contract, denoted in \textcolor{myred}{red}, is on Oasis, while the rest is on Ethereum. The accounts at the bottom are encumbered, under the control of the Dark DAO contract; each deposit is sent to a fresh encumbered account.}
\vspace{-0.3cm}
\label{fig:dark-dao-deposit}
\end{figure*}
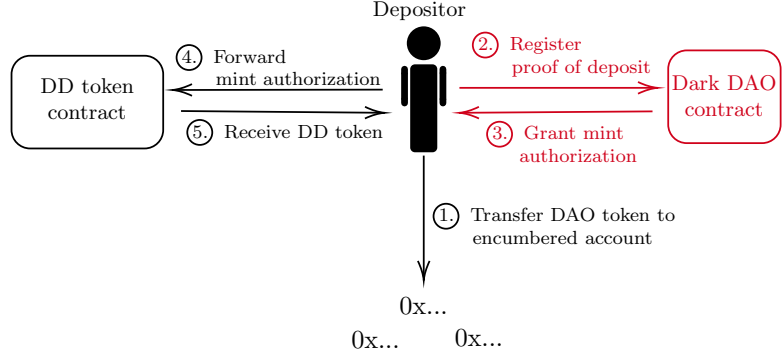
% \end{tikzpicture}
% \andres{Diagram is WIP...}
\newcommand{\functioncode}{\textbf{function }}
\newcommand{\forcode}{\textbf{for }}
\newcommand{\returncode}{\textbf{return }}
\newcommand{\myind}{\hspace*{1em}}

% \begin{figure}[h!]
%     \fpage{0.9}{
%         \begin{center}
%             \textbf{DD token extends ERC-20}
%         \end{center}

%         $\functioncode \texttt{initialize}(\textrm{pk}):$\\
%         \myind $\mathsf{dark\_dao\_pk} := \textrm{pk}$ \\
%         \myind $\mathsf{supply} := 0$ \\
%         \myind $\mathsf{balances} := \{\}$ \\
%         \myind $\mathsf{authorization\_nonces} := \{\}$ \\

%         $\functioncode \texttt{mint}(\textrm{message}, \textrm{signature}):$\\
%         \myind $\mathsf{dark\_dao\_pk}.\texttt{verify}(\textrm{message}, \textrm{signature})$\\
%         \myind assert $\textrm{message.nonce} \notin \mathsf{authorization\_nonces}$\\
%         \myind $\mathsf{authorization\_nonces}.\texttt{insert}(\textrm{message.nonce})$\\
%         \myind $\mathsf{supply} := \mathsf{supply} + \textrm{message.amount}$\\
%         \myind $\mathsf{balances[\textrm{message.recipient}]} := \mathsf{balances[\textrm{message.recipient}]} + \textrm{message.amount}$\\

%         $\functioncode \texttt{burn}(\textrm{amount}) \textrm{ called by } \mathcal{P}:$\\
%         \myind $\mathsf{supply} := \mathsf{supply} - \textrm{amount}$\\
%         \myind $\mathsf{balances}[\mathcal{P}] := \mathsf{balances}[\mathcal{P}] - \textrm{amount}$\\
%         \myind $\mathsf{Log}(\textrm{``burn'', amount}, \mathcal{P})$\\
%         }
%     \caption{DD token pseudocode}
%     \label{fig:dd-token}
% \end{figure}   

\begin{figure}[h!]
    \fpage{0.9}{
        \begin{center}
            \textbf{DD Token, extending ERC-20}
        \end{center}
        \underline{Initialize(\textsf{pk})}: $\textsf{DDpk}:= \textsf{pk}$, $\textsf{supply} := 0$, $\textsf{authNonces} := \{\}$  \\

        \underline{On receive $\texttt{mint}(m = (T, \textsf{nonce}), \sigma)$ from party $\player$}: \\
        \myInd \textsf{assert} $S.\textsf{ver}(\textsf{DDpk}, m, \sigma)$ \\
        \myInd \textsf{assert} $\textsf{nonce} \notin \textsf{authNonces}[\player]$ \\
        \myInd $\textsf{supply} \gets \textsf{supply} + T$\\
        \myInd $\textsf{authNonces}[\player].\textsf{add}(\textsf{nonce})$ \\
        \myInd Send $T$ tokens to $\player$ \\

        \underline{On receive $\texttt{burn}()$ from party $\player$ along with $T$ DD-tokens}: \\
        \myInd $\textsf{supply} \gets \textsf{supply} - T$
    
        }
    \caption{DD token pseudocode}
    \label{fig:dd-token}
\end{figure}

\begin{figure}[h!]
    \fpage{0.9}{
        \begin{center}
            \textbf{Dark DAO Lite}
        \end{center}

        $\functioncode \texttt{initialize}(\textrm{header}):$\\
        \myind $\mathsf{eth\_block\_header} := \textrm{header}$ \hspace{2em}// updated by an oracle or by piggybacking on proofs\\
        % \james{The proofs won't all happen on top of the same Ethereum block.} \\
        \myind $\mathsf{encumbered\_accounts} := \{\}$ \\
        \myind $\mathsf{dark\_dao\_sk,dark\_dao\_pk} \getsr \mathsf{S}.\texttt{keygen()}$ \\
        \myind $\mathsf{balances} := \{\}$ \\
        \myind $\mathsf{registered\_proofs} := \{\}$ \\
        \myind \returncode $\mathsf{dark\_dao\_pk}$\\

        $\functioncode \texttt{get\_deposit\_address}():$\\
        \myind $\mathsf{sk,pk} \getsr \mathsf{S}.\texttt{keygen()}$ \\
        \myind $\mathsf{encrypted\_key} \getsr \mathsf{dark\_dao\_sk}.\texttt{encrypt(sk)}$\\
        \myind \returncode $\mathsf{pk}, \mathsf{encrypted\_key}$\\

        $\functioncode \texttt{deposit\_and\_mint}(\pi, \textrm{recipient}):$\\
        \myind assert $\pi \notin \mathsf{registered\_proofs}$\\
        \myind assert $\texttt{verify\_deposit\_proof}(\mathsf{eth\_block\_header},\pi)$ \\
        \myind $\mathsf{encumbered\_accounts}.\texttt{insert}(\pi.\textrm{pk}, \mathsf{dark\_dao\_sk}.\texttt{decrypt}(\pi.\textrm{encrypted\_key}))$\\
        \myind $\mathsf{balances}[\pi.\textrm{pk}] := \mathsf{balances}[\pi.\textrm{pk}] + \pi.\textrm{amount}$\\
        \myind $\mathsf{registered\_proofs}.\texttt{insert}(\pi)$\\
        \myind $\mathsf{message.amount} := \pi.\textrm{amount}$\\
        \myind $\mathsf{message.recipient} := \pi.\textrm{recipient}$\\
        \myind \returncode $\mathsf{message, dark\_dao\_sk.\texttt{sign}(message)}$\\

        $\functioncode \texttt{redeem\_and\_withdraw}(\pi, \textrm{recipient}):$\\
        \myind assert $\pi \notin \mathsf{registered\_proofs}$\\
        \myind assert $\texttt{verify\_burn\_proof}(\mathsf{eth\_block\_header},\pi)$\\
        \myind $\mathsf{registered\_proofs}.\texttt{insert}(\pi)$\\
        \myind $\mathsf{accounts, amounts} := \texttt{select\_withdrawal\_accounts}(\mathsf{encumbered\_accounts, balances}, \pi.\textrm{amount})$\\
        \myind $\mathsf{signed\_transactions} := \{\}$\\
        \myind $\forcode \mathsf{account, amount} \in \mathsf{accounts, amounts}$\\
        \myind \myind $\mathsf{signed\_transactions}.\texttt{insert}(\mathsf{account.sk.}\texttt{sign}(\texttt{transfer\_from}(\mathsf{account.pk, amount}, \texttt{recipient})))$\\
        \myind \myind $\mathsf{balances[account.pk] := balances[account.pk] - amount}$\\
        \myind \returncode $\mathsf{signed\_transactions}$\\
        % \kushal{managing the nonce for withdrawal transactions is not here in the pseudocode to reduce clutter.}
        }
    \caption{Pseudocode for Dark DAO Lite smart contract on Oasis}
    \label{fig:tokenized-dd}
\end{figure}    

\paragraph{Converting target-DAO tokens to DD tokens.} Before authorizing the creation of new DD tokens, the Dark DAO needs to gain control over target-DAO tokens. This is accomplished by having the target-DAO tokens sent to a freshly generated Ethereum account under the Dark DAO's control. 

A user wishing to receive DD tokens queries the Dark DAO in an off-chain query for a newly generated such address into which a batch of target-DAO tokens may be deposited. The Dark DAO contract responds with the deposit address $A$ and a ciphertext $C$ on deposit data: ($\sk_A$, $R$), where $\sk_A$ is the private key for $A$ and $R$ is the address for receiving DD tokens minted as a result of a deposit to $A$.

The reason for creating a fresh address for each deposit is \textit{confidentiality} of these addresses. Because the generation process happens off chain, there is no public indication that $A$ is controlled by the Dark DAO. The deposit is indistinguishable from a simple transfer of target-DAO tokens to a new EOA (externally owned account).

After the user transfers target-DAO tokens to $A$, the user submits a state proof of $A$'s target-DAO token balance to the Dark DAO contract, along with the ciphertext $C$. The Dark DAO contract checks the proof which begins an optional \textit{lockup period} on the tokens. (See Section~\ref{subsec:tokenized_DD_security} for details.) After the lockup period is complete, the user can query it to receive a signed message authorizing the minting of an equivalent amount of DD tokens. A user with can then send this signed message to the DD token contract to mint the DD tokens, as shown in Figure~\ref{fig:dd-token}. The mint operation need not happen immediately; the user could presumably wait until the DD tokens need to be transferred or sold.

The use of state proofs serves as a bridge between Ethereum and Oasis. We assume a trusted source of block hashes for this purpose. The ``Oasis Privacy Layer,'' which uses the Celer Network as a bridge system under the hood, is the existing bridge from Oasis Sapphire to any supported EVM network.
%Oasis plans to launch an official bridge to Ethereum~\cite{oasis-docs-bridge}, which would further remove the need to have trusted Ethereum block hashes for verifying state proofs.
% James: Representatives from Oasis said that their "Oasis Privacy Layer", which uses the Celer Network as a bridge system under the hood, is the existing bridge from Sapphire to any supported EVM network.
% For a higher-security approach, an Ethereum light client implemented on Oasis could verify that the Ethereum block hashes correspond to finalized blocks.

%\james{Talk about lockup period to protect against deposit-cycling that reveals all Dark DAO addresses, making those attacks require either more time or more capital to perform.}

\paragraph{Redeeming DD tokens for target-DAO tokens.}

The conversion process may be reversed: DD tokens can be redeemed for their underlying target-DAO tokens. A user holding DD tokens first issues a burn transaction of $n$ tokens to the DD token contract, which removes the $n$ DD tokens from circulation and records a receipt of the burn to persistent storage. The user then submits a state proof of the burn receipt to the Dark DAO contract on Oasis, which in return sends back a proportional amount of bribe money and authorizes the user to submit off-chain withdrawal requests to the Dark DAO contract. The Dark DAO responds to these requests with a signed Ethereum transaction which transfers up to $n$ target-DAO tokens from a Dark DAO controlled account to the user. It is the user's responsibility to include this transaction on the Ethereum mainnet. Note that DD tokens must be fungible and liquid, or else they would not be easily tradable; therefore, the Dark DAO contract must be able to handle partial withdrawals from its accounts. A withdrawal that is greater than the current withdrawal account's balance would require multiple withdrawal transactions.

On Ethereum, transactions are ordered by sender according to increasing transaction nonce: the first transaction by a particular sender must be signed with nonce 0, the second with nonce 1, and so on. Target-DAO token transfers out of Dark DAO accounts are also transactions and must be included in increasing nonce order. To prevent users who fail to include their transactions on the Ethereum mainnet from blocking other target-DAO token withdrawals, everyone who is ready to withdraw is issued a signed transaction from the same Dark DAO account and with the same nonce. The first withdrawal transaction to be included in an Ethereum block ``wins,'' and the other competing transactions with the same nonce and sender are automatically invalidated at no cost, per Ethereum's rules. To allow the next withdrawal to process, a user can show a Merkle proof of transaction inclusion in an Ethereum block, which simultaneously increments the nonce of the Dark DAO account (or chooses a new withdrawal account) and marks the included withdrawal as completed. 
% Again, the Dark DAO contract relies on a trusted block hash oracle to determine whether the inclusion proof is valid.

Ethereum transactions need to be funded before they are included, so to pay for the target-DAO token transfer, some ETH must be sent to the Dark DAO account in an earlier transaction. We expect withdrawers will use Flashbots bundles to execute the funding and token transfer transactions atomically and to prevent other withdrawals from backrunning their funding transactions.

% \james{There is still a kind of griefing attack in which you withdraw all but a tiny fraction of a DAO token from an account and force the next guy to withdraw that fraction, and then race to become the next to withdraw all but a bit of the next, etc., never allowing him to withdraw all his target-DAO tokens. But at least it costs money to perform this attack; the attacker would have to pay approximately the same as the withdrawer(s) though. This is a problem, since if your budget for griefing withdrawals is greater than the withdrawers' budget for withdrawing, you can effectively halt withdrawals. \\ \\
% Update: This griefing attack doesn't actually hold up, since as I've written it, you cannot control the amount you are withdrawing unless you burn a specific number of DD tokens, which significantly increases the cost of attack and opens a clear window for other withdrawals to take place.}

\paragraph{DD tokens.} 
As we have explained, DD tokens are the primary financial instrument of a Dark DAO Lite, issued when deposits of target DAO tokens are made to Dark DAO accounts. They can later be redeemed for the underlying target DAO tokens plus any accumulated bribes on the voting rights to the encumbered target DAO tokens. The Dark DAO smart contract on Oasis acts as the primary controller of all participating encumbered Ethereum accounts and is itself an encumbrance policy. 

We emphasize that users with DD tokens cannot vote in the target DAO with them; this voting ability is in the Dark DAO's self-auctioning pool. Rather, users with DD tokens hold a claim to ownership in the target DAO plus a proportional fraction of the bribe revenue the Dark DAO creates from selling its votes. In short, 1 DD token is equivalent to the ownership rights of 1 target DAO token plus fractional bribe revenue. 

\paragraph{Voting-rights auctioning.}
We assume that the target DAO utilizes a message-based, off-chain voting system with voting power assigned to accounts based on their DAO token balances, though the Dark DAO contract could be adapted for other voting schemes.

When a target-DAO proposal is published, the proposal hash is made public. Bribers who wish to purchase the Dark DAO's voting power for the proposal can start an auction for that proposal from the Dark DAO contract and bid on the right to sign votes from all Dark DAO accounts\footnote{In our implementation, bids are made in Oasis's native token, ROSE. Over time, the DD token will have increased exposure to this other asset. To remedy this, the Dark DAO contract can sell its proceeds periodically for the target-DAO token.}. We implemented a first-price auction in our implementation, but other auction types could be substituted. All auctions have a fixed duration and must end before the proposal expires, or else the purchased votes cannot be used. 

A briber who wins an auction can ask the Dark DAO contract to sign votes for the proposal from all of the Dark DAO accounts. Bidders could bid on a hash that does not correspond to a proposal, so as implemented, any auction winner can enumerate Dark DAO accounts by reading the vote signatures. In Section~\ref{subsec:tokenized_DD_security}, we briefly discuss an alternative, privacy-preserving approach.

\paragraph{How the DD-token market works.}  
As conversion of target-DAO tokens to DD tokens requires a (small) degree of technical knowledge, including interaction with the Oasis Sapphire chain, our expectation is that arbitrageurs will perform the conversion and sell DD tokens in exchanges for Ethereum. As ordinary ERC-20 tokens, DD tokens may be sold in either decentralized or centralized exchanges. The value of generating DD tokens---and thus revenue for arbitrageurs---may be priced into the market value of DD tokens.

Whether a given user $\player$ prefers to hold target-DAO tokens or DD tokens depends upon the value to the user of voting, which is related $\util_{\player}(E,\true)$ for a set of elections $E$ over which the user intends to hold DD tokens. Given high utility, i.e., a particular desired outcome, a user may prefer to vote and thus hold target-DAO tokens. Many users, however, are apathetic (as discussed in~\Cref{{subsec:apathy}}) and would derive higher utility instead from holding DD tokens. The technical requirements and user experience for holding the two types of token are identical.

We emphasize that DD tokens may be redeemed for target-DAO tokens. Hence the fair market price of DD tokens should be at least that of target-DAO tokens minus the transaction cost for redemption. 

\bigskip

Aside from making Dark DAOs more practical, our DD-token scheme demonstrates a concept of broad interest: key encumbrance enables new financial assets with sophisticated policies to be created from the restructuring of existing ones. While use of TEEs to realize this concept has been previously explored~\cite{matetic2018delegatee,puddu2019teevil}, our work is the first instance of which we're aware in which such assets are realized as decentralized-finance tokens.

\paragraph{Execution costs.} Table \ref{tab:transaction-fees-2} outlines the transaction costs of creating and participating in a Dark DAO Lite.

\begin{table}[h!]
\centering
\begin{tabular}{|l|r|c|r|}
\hline
\textbf{Ethereum Transaction} & \textbf{Gas Usage} & \textbf{ETH Cost} & \textbf{USD Cost} \\ \hline
Deploy DD token contract (\textit{one-time cost}) & 1,552,447 & 0.0240629 & \$42.88543 \\ \hline
Transfer DAO token to Dark DAO account & 51,438 & 0.0007973 & \$1.42096 \\ \hline
Mint DD tokens & 99,050 & 0.0015353 & \$2.73624 \\ \hline
Burn DD tokens & 58,665 & 0.0009093 & \$1.62057 \\ \hline
Fund DAO token transfer from Dark DAO account & 21,000 & 0.0003255 & \$0.58011 \\ \hline
DAO token transfer from Dark DAO account & 51,438 & 0.0007973 & \$1.42096 \\ \hline

\textbf{Oasis Transaction} & \textbf{Gas Usage} & \textbf{ROSE Cost} & \textbf{USD Cost} \\ \hline
Deploy Dark DAO contracts (\textit{one-time cost}) & 6,801,449 & 0.6801449 & \$0.03513 \\ \hline
Prove DAO token deposit to Dark DAO contract & 863,885 & 0.0863885 & \$0.00446 \\ \hline
Prove DD token burn to Dark DAO contract & 501,138 & 0.0501138 & \$0.00259 \\ \hline
Prove DD withdrawal inclusion & 332,495 & 0.0332495 &  \$0.00171 \\ \hline
Create DAO proposal voting rights auction & 122,912 & 0.0122912 & \$0.00063 \\ \hline
Bid on voting rights for a proposal & 53,017 & 0.0053017 & \$0.00027 \\ \hline
% transfer DAO token (deposit) 51438
\end{tabular}

\caption{Costs of Dark DAO Lite transactions. \\ 1 ETH = \$1,782.22, as of October 27, 2023. Ethereum transactions are priced at 13.5 Gwei, the 60-day average ending on October 26, 2023. \\ 1 ROSE = \$0.05165, as of October 27, 2023. Oasis transactions are priced at 100 Gwei, the Sapphire default. \label{tab:transaction-fees-2}}
\end{table}

\subsection{Security}
\label{subsec:tokenized_DD_security}

We assume the same security model as in Section~\ref{subsec:basic_DAO_conf}. 
A Dark DAO Lite, as noted above, achieves \textit{weaker confidentiality} than a basic Dark DAO. (Although its integrity and DoS properties are the same.) 

The main reason the Dark DAO Lite does not achieve the same strong confidentiality as the basic Dark DAO is the \textit{liquidity} of DD tokens. Recall that when \textit{deposited}, target-DAO tokens are transferred to a Dark-DAO-generated address that is indistinguishable on chain from an ordinary EOA. When DD tokens are \textit{redeemed} for target-DAO tokens, however, the address in which they are held is revealed to the redeeming player to be part of the Dark DAO. Furthermore, the liquidity of tokens means that one user can redeem tokens deposited by another user. An adversary can perform redemptions strategically in an attempt to enumerate Dark-DAO deposit addresses. 

For example, suppose that players only deposit one token at a time and that redemptions are LIFO (last in, first out). An adversary $\adv$ can deposit a token and wait until another player $\player_1$ deposits a token. $\adv$ then withdraws its token, revealing $\player_1$'s deposit address, and then redeposits its token. $\adv$ can do the same when $\player_2$ deposits a token, and so forth. In principle, by withdrawing and depositing $t$ tokens over intervals of length $\Delta$, where $t$ is at least as large as other players' deposits over any interval of length $\Delta$, $\adv$ can identify all deposit addresses. 

Such discovery of deposit addresses through strategic redemption is somewhat costly in practice, because it incurs transaction costs. Our current Dark DAO Lite implementation in fact uses FIFO scheduling. It is possible to impede adversarial address-discovery strategies against FIFO by imposing a lockup period on deposited tokens. The effect of this practice is to raise the adversary's capital requirements, i.e., require the adversary to control a large number tokens. Other approaches might be more effective and their exploration constitutes an interesting research challenge.

An additional form of information leakage in a Dark DAO Lite arises because the circulating supply of DD tokens is publicly visible. The quantity of these tokens specifies corresponds to the size of the available pool of votes available for purchase in the Dark DAO. (It is possible in principle to enhance a Dark DAO Lite to mint fake DD tokens, a potential future enhancement.) Furthermore, on-chain transaction analysis may leak futher information. For example, the timing of target-DAO token deposits and DD token issuance can help an adversary infer target-DAO token addresses.

\section{Summary Guidance for DAOs}
\label{sec:guidance}

% Initial token distributions should encompass heterogeneous set of communities who can be rationally expected to have different utility functions 

Our \indexshort framework and the implications we show in~\Cref{sec:in-practice} suggest a number of forms of concrete guidance for DAOs seeking to enforce or improve meaningful decentralization.  We discuss them in this section. We summarize our guidance for practitioners  in~\Cref{tab:recommendations}.

\begin{table}[h!]
% \renewcommand{\arraystretch}{.80}
% \noindent\setlength\tabcolsep{4pt}%
\resizebox{\columnwidth}{!}{
\begin{tabular}{|p{4cm}|p{5cm}|p{6cm}|p{3cm}|}
  \hline     \textit{Topic}  & \textit{General Guidance} & \textit{Reason}  & \textit{Relevant result}
       \\
        \hline &&&\\
     1. \textbf{Vote delegation} & Given a large inactivity whale, vote delegation tends to increase decentralization.  & Delegation (counterintuitively) increases decentralization  by diversifying tokens away from a big inactivity whale.  & Thm.~\ref{thm:delegation}\\
        \hline &&&\\
     2. \textbf{Voting privacy} & Voting privacy increases decentralization. & Private voting eases herding, whose effects are centralizing.  & Thm.~\ref{thm:herding}\\ 
             \hline &&&\\
      3. \textbf{Voter bribery} & The scale of bribery increases with decentralization. & Low alignment of utility functions means systemic coordination is required to impose alignment.   & Thms.~\ref{thm:bribery1},~\ref{thm:bribery2}, and~\ref{thm:bribery3}\\
              \hline &&&\\
    4. \textbf{Dark DAO risks} & Dark DAO risks are likely to increase with decentralization. & As bribery coordination costs grow, Dark DAOs become a more compelling approach to influencing vote outcomes.   & Inference from Thm.~\ref{thm:bribery2} and~\ref{thm:bribery3}\\ 
            \hline &&&\\
                5. \textbf{Dark DAO feasibility} & Dark DAOs are feasible today. & We have shown that existing tools enable effective Dark-DAO deployment. Technical feasibility is unlikely to prove a barrier to their use by adversaries. Complete Knowledge (CK) for voter keys may be a useful countermeasure.    & Sections~\ref{sec:Dark_DAO_implementation} and~\ref{sec:token-based-dd-impl}\\ 
            \hline &&&\\
    6. \textbf{Identity verification} & Weak identity verification increases centralization in quadratic voting. &  A whale that can spread tokens across identities amplifies its voting power.  & Analysis in Sections~\ref{subsec:quad-voting} and~\ref{subsec:Dark_DAO_goals}\\ 
                 \hline &&& \\
   7. \textbf{Voting slates / proposal bundling} &  Bundling choices into slates (like protocol upgrades that include many voting issues in one package) decreases decentralization. & Bundled choices artificially align otherwise heterogeneous utility functions and/or induce apathy by smoothing out utility functions.  & Thm.~\ref{thm:voting-slates}\\ 
             \hline &&&\\
   8. \textbf{Data collection} & Careful voting-statistic collection facilitates decentralization measurement. & Lack of systematic collection and publication of detailed voting statistics makes decentralization measurement challenging today.     & Discussion in~\Cref{sec:vbe-practical-considerations}.\\ 
                \hline 
    \end{tabular}
}
    \captionof{table}{Guidance implied by this paper's results regarding DAO decentralization.}
     \label{tab:recommendations}
\end{table}

% \bigskip

\paragraph{Apathy / inactivity whale and delegation:} As we show in~\Cref{subsec:apathy}, token holders who do not vote---those, in a rational model, with near-zero utility functions---have a centralizing effect. Recall that our term for this group is the \textit{inactivity whale}.

One way to diminish the size of the inactivity whale is through delegation. Intuitively, if tokens associated with the inactivity whale are distributed between at least two delegatees in distinct clusters, then they come to represent distinct utility functions—--and thus contribute to decentralization. 

We show in~\Cref{subsec:delegation} that when the inactivity whale is large---with respect to delegatees---delegation increases decentralization. (Otherwise, delegation may or may not have this effect.)

\paragraph{Herding / voting privacy:} There is anecdotal evidence suggesting that social pressure causes herding---specifically that voters align themselves with whales or voting blocs~\cite{sharma2023unpacking}. We may view the effect as a shift in utility function. As we show in~\Cref{subsec:herding}, this shift has a centralizing effect.

Herding arises because votes are publicly visible. Voting privacy in principle alleviates such pressure and therefore has a decentralizing effect. 

Snapshot, a popular platform for DAO voting, has recently implemented a form of privacy called \textit{shielded voting}~\cite{Snapshot:2022}.This form of privacy, however, is only ephemeral: Votes are private when submitted, but revealed at the end of the vote-casting period. So it is unclear that it can fully address the centralizing effects of herding.

End-to-end verifiable voting systems have been proposed in the literature for decades that achieve both voting integrity and confidentiality~\cite{ali2016overview}. How to implemented them with token-based weighting is, to the best of our knowledge, though, an open problem. 

\paragraph{Voter bribery:} Our work shows a relationship between centralization and bribery. In general, bribery causes an increase in centralization, as it has the effect of aligning the utility functions of other players with those of the briber, as we show in~\Cref{subsec:vbe-and-bribery}. 

We also show that as decentralization increases, bribery cost increases. Roughly speaking, increasing diversity in utility functions means increasing cost to align them. 

DAOs today are largely centralized~\cite{sharma2023unpacking, feichtinger2023hidden, bribery, vitalikgov,daodec}. Bribery may not be especially useful, as whales generally exert strong control and require relatively little coordination to align utility functions into a favorable voting bloc. Voter bribery, however, is a problem in many settings, both in political voting~\cite{mcgee2023often} and in corporate governance (see, e.g.,~\cite{Schickler:2023}).

One implication of our results is that as DAO decentralization increases, in order for bribery to succeed, it will need to be systemic. DAO designers should therefore recognize large-scale bribery as a future risk.

\paragraph{Dark DAO risks:} We hypothesize that the most technically feasible way to implement large-scale bribery is through a Dark DAO. 

We have presented in~\Cref{sec:Dark_DAO_implementation} the first fully functional private Dark DAO capable of subverting votes on Ethereum. Our architecture leverages the privacy assurances of TEEs in Oasis, but bridges to Ethereum, where most DAOs operate. Our results show that Dark DAOs are technically feasible (and incur low transaction costs) and thus represent a viable future threat. 

Dark DAOs pose not just a technical threat, but also a psychological one. The mere existence of a Dark DAO may create a perception of vote-manipulation even if the Dark DAO has minimal impact. Moreover, Dark DAOs can be used not just for direct bribery but also for more subtle attacks. They can, for instance, subvert quadratic voting schemes even when such schemes rely on well-functioning decentralized identity systems. 

One possible countermeasure DAO designers may ultimately wish to consider is requiring voting participants to execute complete-knowledge (CK) proofs on their keys~\cite{kelkar2023complete}.

\paragraph{Voting slates / bundling proposals:} A common trick for passing pieces of legislation that are unpopular or have a narrow base of support is to bundle them together in large, bills. Earmarks are a prime example~\cite{Earmarks:2023}.

This practice may be regarded as a form of utility-function ``smoothing'': The utility function of the bill as a whole (for the legislators voting on it) differs from that of its components.

As the practice of bundling proposals / measures has the goal of aligning utility functions, 
from the standpoint of \indexshort, it generally has a centralizing effect, as we show in Section~\ref{subsec:voting-slates}. DAOs may therefore wish to consider limiting the practice and instead explore way to unbundle multi-component proposals.

\paragraph{Data collection:} There is no practical way to compute \indexshort directly---since players' typically do not express their utility functions. As we discuss in~\Cref{sec:vbe-practical-considerations}, however, there are ways to estimate it for a DAO based on voting history. We have found it challenging to collect full voting histories for even popular DAOs. A recommendation for the community is to establish and adhere to standards for archival preservation of DAO voting data. A few

\silence{\section{Related Work}
\label{sec:related}

\paragraph{DAOs:} Research literature on DAOs has been limited to date, but fairly broad. It has included measurement studies~\cite{feichtinger2023hidden,sharma2023unpacking}, retrospectives on the failure of The DAO (e.g., \cite{dupont2017experiments}) and ways of addressing related technical flaws in smart contracts such as dangerous reentrancy (e.g.,~\cite{luu2016making,cecchetti2021compositional}), DAO mechanism design (e.g.,~\cite{bahrani2023bidders}), and exploration of DAOs from the standpoint of legal theory (e.g.,~\cite{hassan2021decentralized,werbach2018trust}) and economics and governance (e.g.,~\cite{beck2018governance}). 

Works exploring measurement of DAOs' degree of decentralization most notably include Feichtinger et al.~\cite{feichtinger2023hidden},who explore Gini and Nakamoto indices, as well as participation rates and the monetary cost of governance, Sharma et al.~\cite{sharma2023unpacking}, who consider various notions of entropy, as well as participation rates and graph-based measures of decentralization, and~\cite{wright2021measuring}, which taxonomizes DAOs by comparison with other autonomous systems. Sun et al.~use clustering to identify voting blocs in a study of MakerDAO~\cite{sun2022multiparty}. Also of note is the informal notion of ``credible neutrality,'' a community standard articulated in, e.g.,~\cite{Buterin:2020,Buterin:2023}.

\paragraph{Social choice and voting theory:}
A long line of work on social choice and voting theory investigates how best to aggregate preferences of individual voters---the same functionality that DAOs seek to provide in the decentralized setting. There are some major differences in the DAO setting however, which may reduce how effective existing techniques will be. For instance, the permissionless nature of DAOs allows for the presence of Sybils which is not typically accounted for in existing voting theory literature. Further, while the threat of large-scale voter bribery is typically safe to ignore in classical voting, both due to the high likelihood of detecting such an attack, as well as the the challenge in coordinating the attack itself, as shown in our paper, Dark DAOs invalidate these prior assumptions in the DAO setting.

Still, we believe that DAOs can provide an excellent practical battleground for experimenting with different social choice and voting techniques in the real world.

\paragraph{Vote-buying / coercion:} There is a considerable literature on the notion of \textit{coercion-resistance} in end-to-end verifiable voting~\cite{juels2005coercion,delaune2006coercion,lueks2020voteagain}. Broadly speaking, coercion-resistance means that a voter cannot convince a would-be briber or coercer of how she voted. Influential proposed coercion-resistant voting systems include notably Civitas~\cite{clarkson2008civitas} and, more recently, MACI~\cite{Buterin-maci}. None of these definitions or system designs contemplate the risk of key encumbrance. Dark DAOs effectively break all of them.

\paragraph{Unauthorized credential delegation:} Daian et al.~put forth the notion of a Dark DAO---a DAO that aims to subvert voting in other DAOs---in~\cite{darkdaohack}. In related work, Matetic et al.~propose use of TEEs as a tool for secure credential delegation---which may be unauthorized~\cite{matetic2018delegatee}, and Puddu et al.~explore malicious uses, including  subversion of e-voting~\cite{puddu2019teevil}.

}
\section{Conclusion and Open Research Questions}
\label{sec:open_questions}

We have proposed \indexlong (\indexshort) as a new metric for DAO decentralization. \indexshort measures the entropy of voting blocs. It is in fact a framework into which it is possible to plug any desired method of clustering to identify blocs and any notion of entropy.

Evaluating \indexshort---instantiated with \clusterlong and min-entropy---we have proven a number of results that held shed light how a number of practices may impact DAO decentralization. We have also shown both in theory and through implementation of a practical system how Dark DAOs pose a potential long-term threat. 

Our work gives rise to a number of open research questions. A few deserve particular mention:

\begin{itemize}
    \item \textbf{Privacy:} Our results suggest the potential decentralizing effects of ballot secrecy, i.e., private voting. Existing verifiable end-to-end voting systems implement a one-vote-per-person policy~\cite{ali2016overview}. One open research question is whether token-weighted variants are possible. Additionally, we emphasize in our work the importance of collecting voting data to facilitate \indexshort estimates. How to harmonize these opposing goals represents a second research challenge.  
    \item \textbf{Forking and escape hatches:} DAOs may suffer catastrophic failures, as was famously the case with The DAO~\cite{TheDAO:2023}. Proposed remedies including forking / splitting, in which a new, quasi-independent or independent DAO is created and escape hatches, which are committee-controlled shutdowns~\cite{decentralized-escape-hatch}. How their existence and use impact decentralization are unclear and deserves study.
    \item \textbf{\indexshort impact:} \indexshort is designed to formalize a view of decentralization in DAOs reflected in the literature and in the views of practitioners. A natural question is what impact high \indexshort has on decision making in DAOs. Does it correlate with community growth, participation, and financial outcomes in DAOs? How does it relate to notions of democratic participation in non-blockchain settings? 
\end{itemize}

%\paragraph{Composability of Dark DAOs} Dark DAOs can coordinate the control of accounts that themselves create encumbered wallets and participate in other Dark DAOs. Future research might explore potential uses for the complex relationships and anti-mechanisms that could form from combining Dark DAOs.

\section*{Acknowledgements}

This work was funded by NSF grant CNS-2112751 and by generous support from IC3 industry partners. Andr\'{e}s F\'{a}brega is funded in part by a Uniswap Foundation TLDR Fellowship.

Thanks to Oasis Labs for answering technical questions, Phil Daian for extensive discussions about Dark DAOs, and Sylvain Bellemare for suggesting the pronunciation ``vibe'' for $\indexshort$. 

\bibliographystyle{plain}
\bibliography{biblio}

\end{document}